\author{
  Yin Tat Lee\thanks{
  \texttt{yintat@uw.edu}
  University of Washington \& Microsoft Research}
  \and
  Zhao Song\thanks{
  \texttt{zhaos@utexas.edu}
  UT-Austin \& University of Washington}
  \and
  Santosh S. Vempala \thanks{
  \texttt{vempala@gatech.edu}
  Georgia Tech}
}
\date{}
\title{Algorithmic Theory of ODEs and Sampling from Well-conditioned Logconcave Densities}
\newtheorem{theorem}{Theorem}[section]
\newtheorem{lemma}[theorem]{Lemma}
\newtheorem{definition}[theorem]{Definition}
\newtheorem{corollary}[theorem]{Corollary}
\newtheorem{remark}[theorem]{Remark}
\newtheorem{claim}[theorem]{Claim}
\newcommand{\wt}{\widetilde}
\newcommand{\ov}{\overline}
\newcommand{\eps}{\epsilon}
\newcommand{\defeq}{\overset{\mathrm{def}}{=}}
\newcommand{\R}{\mathbb{R}}
\renewcommand{\d}{\mathrm{d}}
\renewcommand{\varepsilon}{\epsilon}
\renewcommand{\tilde}{\wt}
\renewcommand{\i}{\mathbf{i}}
\DeclareMathOperator*{\E}{{\bf {E}}}
\DeclareMathOperator{\T}{\mathcal{T}}
\definecolor{darkgreen}{RGB}{0,100,0}
\definecolor{mygreen}{RGB}{80,180,0}
\definecolor{b2}{RGB}{51,153,255}
\definecolor{mycy2}{RGB}{255,51,255}
\newcommand*{\RN}[1]{\expandafter\@slowromancap\romannumeral #1@}
\newcommand{\define}[4][ignore]{%
  \ifstrequal{#1}{ignore}{}{
  \@namedef{thmtitle@#2}{#1}}%
  \@namedef{thm@#2}{#4}%
  \@namedef{thmtypen@#2}{lemma}%
  \newtheorem{thmtype@#2}[theorem]{#3}%
  \newtheorem*{thmtypealt@#2}{#3~\ref{#2}}%
}
\newcommand{\state}[1]{%
  \@namedef{curthm}{#1}
  \@ifundefined{thmtitle@#1}{
  \begin{thmtype@#1}
    }{
  \begin{thmtype@#1}[\@nameuse{thmtitle@#1}]
  }
    \label{#1}
    \@nameuse{thm@#1}
  \end{thmtype@#1}
  \@ifundefined{thmdone@#1}{
  \@namedef{thmdone@#1}{stated}%
  }{}
}
\newcommand{\restate}[1]{%
  \@namedef{curthm}{#1}
  \@ifundefined{thmtitle@#1}{
    \begin{thmtypealt@#1}
    }{
  \begin{thmtypealt@#1}[\@nameuse{thmtitle@#1}]
  }
    \@nameuse{thm@#1}
  \end{thmtypealt@#1}
  \@ifundefined{thmdone@#1}{
  \@namedef{thmdone@#1}{stated}%
  }{}
}
\newcommand{\thmlabel}[1]{
  \@ifundefined{thmdone@\@nameuse{curthm}}{\label{#1}
    }{\tag*{\eqref{#1}}}
}
\begin{document}

\begin{titlepage}
  \maketitle
\begin{abstract}
Sampling logconcave functions arising in statistics and machine learning has been a subject of intensive study. Recent developments include analyses for Langevin dynamics and Hamiltonian Monte Carlo (HMC). While both approaches have dimension-independent bounds for the underlying {\em continuous} processes under sufficiently strong smoothness conditions, the resulting discrete algorithms have complexity and number of function evaluations growing with the dimension. Motivated by this problem, in this paper, we give a general algorithm for solving multivariate ordinary differential equations whose solution is close to the span of a known basis of functions (e.g., polynomials or piecewise polynomials). The resulting algorithm has polylogarithmic depth and essentially tight runtime --- it is nearly linear in the size of the representation of the solution.

We apply this to the sampling problem to obtain  
a nearly linear implementation of HMC for a broad class of smooth, strongly logconcave densities, with the number of iterations (parallel depth) and gradient evaluations being {\em polylogarithmic} in the dimension (rather than polynomial as in previous work). This class includes the widely-used loss function for logistic regression with incoherent weight matrices and has been subject of much study recently. We also give a faster algorithm with {\em polylogarithmic depth} for the more general and standard class of strongly convex functions with Lipschitz gradient. 
These results are based on (1) an improved contraction bound for the exact HMC process and (2) logarithmic bounds on the degree of polynomials that approximate solutions of the differential equations arising in implementing HMC.


  \end{abstract}
  \thispagestyle{empty}
\end{titlepage}

{\hypersetup{linkcolor=black}
\tableofcontents
}
\newpage


\section{Introduction}

The complexity of sampling
a high-dimensional density of the form $e^{-f(x)}$ where $f$ is a convex function is a fundamental problem with many applications \cite{ls90,ls92,ls93,lv06a,lv06b,d17,dk17,dr18,dcwy18}. The focus of this paper is to give very fast, i.e., nearly linear time algorithms, for a large subclass of such densities. A motivating and important case is the loss function for logistic regression, widely used in machine learning applications \cite{b44,p00,nj02,hls13,b14}:
\[
\sum_{i=1}^n\phi_i(a_i^\top x)
\]
where $\phi_i$ are convex functions; a popular choice is $\phi(t)=\log(1+e^{-t})$. Sampling according to $e^{-f}$ for this choice of $f$ corresponds to sampling models according to their KL-divergence, a natural and effective choice for classification problems \cite{hls13}. 

A general approach to sampling is by an ergodic Markov chain whose stationary distribution is designed to have the desired density. Traditionally, this is done via a Metropolis filter, which accepts a proposed (random) next step $y$ from the current point $x$ with probability $\min \{1,\frac{f(y)}{f(x)}\}$. While very general, one downside of this approach is the possibility of high rejection probabilities, which typically force local steps to be very small. Nevertheless, for arbitrary logconcave functions (including nonsmooth ones), this approach has the current best guarantees \cite{lv06b}. 

Another family of algorithms is derived from an underlying {\em continuous} stochastic process with the desired stationary density. A classic example of such a continuous process is Brownian motion. To sample a convex body for example, one could use Brownian motion with a boundary reflection condition. This is written as the stochastic equation: 
\[
\d X_t = \d W_t 
\]
with reflection at the boundary of the domain, and $dW_t$ being infinitesimal Brownian motion. To sample from the density proportional to $e^{-f(x)}$, one can use the stochastic differential equation,
\[
\d X_t = - \nabla f(X_t) \d t + \sqrt{2} \d W_t.
\]
By the classical Fokker-Planck equation, under mild assumptions on $f$, the stationary density of this process is proportional to $e^{-f(x)}$.

How can we turn these continuous processes into algorithms? One approach is to take small rather than infinitesimal steps, and this leads to the Langevin dynamics, of which there are multiple flavors \cite{d17,dk17,zlc17,rrt17,dr18,ccbj18,ccabj18,cfmbj18}. Starting with Dalalyan \cite{d17}, it has been established that these dynamics converge in polynomial (in dimension) time for strongly logconcave functions, with the underdamped version converging in $O(\sqrt{d})$ iterations (and polynomial dependences on appropriate condition numbers) \cite{ccbj18}. The dependence on dimension seems unavoidable in the discretized algorithm, even though the continuous process has no such dependence.

\paragraph{Hamiltonian Monte Carlo.} HMC is a random process that maintains a position $x$ and velocity pair $v$. To sample according to $e^{-f}$, we define a Hamiltonian $H(x,v) = f(x)+\frac{1}{2} \|v\|^2$. At each step $v$ is chosen randomly from $N(0,I)$ and $x$ is updated using the following Ordinary Differential Equation (ODE) for a some fixed time interval.
\[
\frac{\d x(t)}{\d t} = v(t), \quad \frac{\d v(t)}{\d t} = -\nabla f( x(t) ).
\]
This process has the particularly nice property that it conserves the value of $H$, and as a result there is no need to apply a Metropolis filter. 
HMC has been studied in many works \cite{ms17,mv18,lv18}.
Mangoubi and Smith \cite{ms17} gave the following guarantee for strongly logconcave densities.

\begin{theorem}[\cite{ms17}]
Let $f$ be a smooth, strongly convex function s.t. for all $y$
$$
m_2 \cdot I \preceq \nabla^2 f(y) \preceq M_2 \cdot  I.
$$
Then, 
HMC converges to the density proportional to $e^{-f}$ in $\wt{O}((M_2/m_2)^2)$ iterations with each iteration being the exact solution of an ODE.
\end{theorem}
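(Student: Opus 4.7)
The plan is to establish convergence via a one-step coupling argument in Wasserstein distance, then convert to total variation. First, I would verify that the density $e^{-H(x,v)} = e^{-f(x)-\|v\|^2/2}$ is invariant under HMC (because the Hamiltonian flow preserves both $H$ and the Liouville measure, and the momentum refresh $v \sim N(0,I)$ preserves the $v$-marginal), so $e^{-f}$ is the stationary distribution on $x$. Then I would couple two copies of the chain $(x_t, v_t)$ and $(y_t, v_t)$ by using the \emph{same} Gaussian draw $v_t$ at each momentum refresh. Letting $\Delta(t) = x(t) - y(t)$ during an ODE segment, the Newton form of HMC gives
\[
\ddot\Delta(t) = -\bigl(\nabla f(x(t)) - \nabla f(y(t))\bigr) = -A(t)\,\Delta(t), \qquad A(t) = \int_0^1 \nabla^2 f\bigl(\alpha x(t)+(1-\alpha)y(t)\bigr)\, d\alpha,
\]
with $m_2 I \preceq A(t) \preceq M_2 I$ and, crucially, $\dot\Delta(0) = 0$ because the velocities are coupled.

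Second, I would analyze this linearized pendulum-like ODE on a short time interval of length $T$. Taylor expansion with $\dot\Delta(0)=0$ gives
\[
\Delta(T) = \Bigl(I - \tfrac{T^2}{2}A(0)\Bigr)\Delta(0) + R,
\]
where $R$ collects terms of order $T^3$ and higher. Using the ODE repeatedly to express $\Delta^{(k)}(0)$ in terms of powers of $A$ and of $\nabla^3 f$, and using $\|A\|\le M_2$, one obtains $\|R\| = O\bigl((T\sqrt{M_2})^3 \|\Delta(0)\|\bigr)$. Choosing $T = c/\sqrt{M_2}$ for a small absolute constant $c$ then gives a per-step Wasserstein contraction
\[
\E\|\Delta(T)\| \le \Bigl(1 - \tfrac{c^2}{4}\cdot\tfrac{m_2}{M_2}\Bigr)\|\Delta(0)\|,
\]
so after $N = \tilde O(M_2/m_2)$ iterations the Wasserstein distance to stationarity is negligible. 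To pass from Wasserstein to TV, I would use a standard argument: once the coupled chains are at distance $\varepsilon$, one additional HMC step with an independent velocity overlap (combined with smoothness) yields a coupling with TV gap $O(\varepsilon\sqrt{M_2})$; iterating this together with the geometric contraction and tuning $\varepsilon$ costs an extra factor of $M_2/m_2$, giving the advertised $\tilde O((M_2/m_2)^2)$.

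The main obstacle is the ODE analysis in the second paragraph: a naive bound on the higher-order Taylor remainder would force $T = O(1/M_2)$ rather than $O(1/\sqrt{M_2})$, leading to contraction $1-m_2/M_2^2$ per step and a worse final bound. Getting the sharper rate requires exploiting the fact that the linearized system $\ddot\Delta = -A\Delta$ is conservative and oscillatory (not exponentially growing): the operator $\cos(T\sqrt{A})$ drives contraction on the scale $T=\Theta(1/\sqrt{M_2})$, and the nonlinear correction is controlled by $\nabla^3 f$ times the trajectory length, both of which must be bounded along the flow rather than only at $t=0$. A secondary subtlety is justifying the TV conversion cleanly, since the shared-velocity coupling never produces exact equality of the two chains and so some auxiliary overlap or smoothing step is unavoidable.
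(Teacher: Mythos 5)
The skeleton you chose (stationarity of $e^{-H}$, synchronous coupling of the Gaussian velocities, error ODE $\ddot\Delta(t)=-A(t)\Delta(t)$ with $m_2 I\preceq A(t)\preceq M_2 I$ and $\dot\Delta(0)=0$, per-step contraction summed over iterations) is exactly the structure used in \cite{ms17} and in this paper's Lemma~\ref{lem:HMC_contraction}/Theorem~\ref{thm:contraction_HMC}. The gap is in the quantitative heart of step two. You claim that with $T=c/\sqrt{M_2}$ the remainder beyond $\bigl(I-\tfrac{T^2}{2}A(0)\bigr)\Delta(0)$ is $O\bigl((T\sqrt{M_2})^3\|\Delta(0)\|\bigr)$ and conclude a contraction $1-\Theta(m_2/M_2)$ per step. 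First, the hypotheses give only Hessian bounds, so the time variation of $A(t)$ along the flow cannot be controlled through $\nabla^3 f$ (no bound on it is assumed); you name this as "the main obstacle" but never resolve it. Second, even granting your remainder bound, the arithmetic fails: at $T=c/\sqrt{M_2}$ the remainder is $\Theta(c^3)\|\Delta(0)\|$ while the claimed gain is $\Theta(c^2\, m_2/M_2)\|\Delta(0)\|$, and $c^3\gg c^2 m_2/M_2$ for any constant $c$; forcing the remainder below the gain pushes you back to $T\lesssim \sqrt{m_2}/M_2$. Indeed, a per-step contraction $1-\Theta(m_2/M_2)$ at a step size computable from $m_2,M_2$ would give an $\tilde O(\kappa)$ iteration bound, which is stronger than this paper's own improved $\kappa^{1.5}$ result (obtained by a finer decomposition of $\Delta$ into components parallel and orthogonal to $\Delta(0)$, with the step-size restriction $T\leq \tfrac{m_2^{1/4}}{2M_2^{3/4}}$ coming from the requirement $\alpha(T)\leq\tfrac18\sqrt{m_2/M_2}$ in Lemma~\ref{lem:matrix_ODE_bound}); the paper explicitly leaves a $\kappa$ bound open because the time $T$ achieving it depends on unknown quantities.

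The honest version of your elementary argument does prove the stated theorem, but with the cruder step size: writing $\Delta(T)=\bigl(I-\int_0^T(T-s)A(s)\,\mathrm{d}s\bigr)\Delta(0)+R$ with $\|R\|=O(M_2^2T^4)\|\Delta(0)\|$ (which needs only $\|A(s)\|\leq M_2$), and noting $\int_0^T(T-s)A(s)\,\mathrm{d}s\succeq \tfrac{m_2T^2}{2}I$, one takes $T^2=\Theta(m_2/M_2^2)$ so that the remainder is dominated, getting contraction $1-\Theta(m_2^2/M_2^2)$ per step and hence $\tilde O(\kappa^2)$ iterations directly in Wasserstein distance. Your final Wasserstein-to-TV conversion is both unsubstantiated (the synchronous coupling never merges the chains, and the "velocity overlap" step is only sketched) and unnecessary, since the convergence in \cite{ms17} and in this paper (Theorem~\ref{thm:contraction_HMC}) is measured in $W_2$; the $\kappa^2$ should come from the per-step rate, not from a metric conversion stacked on a $\kappa$-rate you have not established.
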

For the resulting algorithm presented in \cite{ms17}, which needs to approximate the solution of the ODE, the number of function evaluations and overall time grow as square-root of the dimension (and a higher polynomial of the condition number). Table \ref{table:relwork} summarizes related work on sampling logconcave functions with various structural assumptions. In all these cases, even with higher-order smoothness and incoherence assumptions, the number of gradient/function evaluations grows as a polynomial in $d$. A special case of much interest is Bayesian logistic regression. To address this \cite{mv18} define an incoherence parameter and achieve the previously best dependence on the dimension of $d^{1/4}$ for functions with bounded incoherence. They note that this is nearly optimal for the leapfrog implementation of HMC they use. Improving the complexity further, and in particular the dependence on the dimension $d$ is an important open problem. This brings us to our main motivating question: 

{\em For what class of functions can we avoid polynomial dependence on dimension (in an algorithm)? Can we do this for the logistic loss function?}

\newcommand{\cold}[1]{{\color{darkgreen} #1}}
\begin{table}[t]

\begin{center}
\begin{threeparttable}
    \begin{tabular}{ | l | l | l | l | l |}
    \hline
    {\bf method} & {\bf $\#$ iterations/} & {\bf $\#$ gradients} & {\bf total time} & {\bf reference}\\ 
    & {\bf parallel depth} & {\bf per iteration} & &\\ \hline
    Ball Walk/Hit-and-run\ifdefined\ourfs \tnote{$*$} \else \tablefootnote{\label{foo:coldwarm} have different bounds for warm start and general (cold) start. \cold{We stated the runtime for cold start in green color.}} \fi & $d^3$\cold{, $d^4$} & $1$ & $d^5$\cold{, $d^6$} &\cite{lv06b} \\  \hline 
    LMC\ifdefined\ourfs \tnote{$*$} \else \footref{foo:coldwarm} \fi & $\kappa^2d$\cold{, $\kappa^3 d^3$} & $1$ & $\kappa^2 d^2$\cold{, $\kappa^3 d^4$}  & \cite{d17} \\ \hline 
    LMCO\ifdefined\ourfs \tnote{$*$} \else \footref{foo:coldwarm} \fi & $\kappa^2 d$\cold{, $\kappa^2 d^{2.5}$} & $1$ & $\kappa^2 d^4$\cold{, $\kappa^2 d^{5.5}$} & \cite{d17}  \\ \hline  
    Damped Langevin & $ \kappa^2 d^{0.5}$ & $1$ & $\kappa^2 d^{1.5}$ & \cite{ccbj18} \\ \hline 
    MALA\ifdefined\ourfs \tnote{$*$} \else \footref{foo:coldwarm} \fi &  $\kappa d$\cold{, $\kappa d^2$} & $1$ & $\kappa d^{2}$\cold{, $\kappa d^3$}  & \cite{dcwy18} \\ \hline 
    HMC & $\kappa^{6.5} d^{0.5}$ & $1$ & $\kappa^{6.5} d^{1.5}$  & \cite{ms17}\\ \hline 
    HMC\ifdefined\ourfs \tnote{$*$,$\dagger$} \else \footref{foo:coldwarm}$^,$\tablefootnote{\label{foo:smooth} make smoothness and incoherence assumptions motivated by and applicable to Bayesian logistic regression.} \fi & $\kappa^{2.75} d^{0.25}$\cold{, $\kappa^{3.5} d^{0.25}$} & $1$ & $\kappa^{2.75} d^{1.25}$\cold{, $\kappa^{3.5} d^{1.25}$}  & \cite{mv18}\\ \hline 
    HMC\ifdefined\ourfs\tnote{$\dagger$} \else\footref{foo:smooth}\fi & $\kappa^{1.5}$ & $1$ & $\kappa^{1.5} d$ & This paper\\ 
     HMC & $\kappa^{1.5}$ & $\kappa^{0.25} d^{0.5}$ & $\kappa^{1.75} d^{1.5}$  & \\ \hline
    \end{tabular}
    \ifdefined\ourfs 
    \begin{tablenotes}
    \item[$*$] have different bounds for warm start and general (cold) start. \cold{We stated the runtime for cold start in green color.}
    \item[$\dagger$] make smoothness and incoherence assumptions motivated by and applicable to Bayesian logistic regression.
  \end{tablenotes}
  \else \fi
\end{threeparttable}
\end{center}
\caption{Summary of results, $d$ is the dimension, $\kappa$ is the condition number of $\nabla^2 f$. We use the parallel depth of the algorithm as the number of iterations. We suppress polylogarithmic terms and dependence on the error parameter. Ball walk/hit-and-run apply to general logconcave distributions, the rest assume strongly logconcave with Lipschitz gradient and possibly more.  
In all previous work, for simplicity, we report the most favorable bounds by making various assumptions such as $\kappa \ll d$.}
\label{table:relwork}
\end{table}

\subsection{Results}

We begin with an informal statement of our result for sampling from a class that includes the logistic loss function. 


\begin{theorem}[Informal version of Theorem~\ref{thm:sampling_formal}]\label{thm:sampling_informal}
Let $A=[a_{1};a_{2};\cdots;a_{n}]\in\R^{n\times d}$, $\phi_{i}:\R\rightarrow\R$ with its $k$-th derivatives bounded by $O(1)^k$
and 
\[
f(x)=\sum_{i=1}^{n}\phi_{i}(a_{i}^{\top}x)+\frac{m_{2}}{2}\|x\|^{2}.
\]
Suppose that $\nabla^2 f$ has condition number $\kappa$ and $\tau=\|AA^{\top}\|_{\infty\rightarrow\infty}$.
Then we can find a random
point $X$ whose Wasserstein distance to $Y$ drawn from the density proportional to $e^{-f}$ satisfies
\[
W_2(X,Y) \leq \frac{\epsilon}{\sqrt{m_{2}}}
\]
using $\wt{O}( \kappa^{1.5} + \frac{\tau}{m_2})$ iterations, where 
each iteration takes
$\wt{O}(d)$ time and $\wt{O}(1)$ 
evaluations of $\nabla f$. 
\end{theorem}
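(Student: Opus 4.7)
The plan is to combine two main ingredients advertised in the abstract: an improved contraction bound for the exact (continuous) HMC trajectory, yielding $\wt O(\kappa^{1.5})$ mixing in $W_2$, and the paper's polylogarithmic-depth ODE solver to implement each HMC step in $\wt O(d)$ work with only $\wt O(1)$ gradient evaluations. The additive $\tau/m_2$ term should appear as the step-size constraint forced by keeping the low-degree polynomial approximation of the trajectory accurate along all directions $a_i$.

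\textbf{Step 1: Contraction of exact HMC.} I would couple two HMC trajectories $x(t),y(t)$ starting at $x_0,y_0$ with a shared momentum refresh, and analyze the difference $z(t)=x(t)-y(t)$, which satisfies $\ddot z(t)=-H(t)z(t)$ with $H(t)=\int_0^1\nabla^2 f(y(t)+sz(t))\,\d s$. The Mangoubi--Smith argument treats $H(t)\approx H(0)$ and yields per-step contraction $1-\Omega(m_2 h^2)$ subject to $h=O(1/\sqrt{M_2\kappa})$, giving $\kappa^2$ iterations. The improvement to $\kappa^{1.5}$ should come from a sharper Taylor expansion of the time-$h$ flow $\Phi_h$: the first-order error in $H(t)-H(0)$ averages to zero along the Gaussian momentum, so one can relax the step size to $h=\Theta(M_2^{-3/4}m_2^{-1/4})$ while preserving contraction. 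The iteration count $1/(h^2m_2)=\wt O(\kappa^{1.5})$ then follows.

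\textbf{Step 2: Implementing one iteration.} Given $x_0$ and $v_0\sim N(0,I)$, I would invoke the paper's general ODE theorem on $\ddot x=-\nabla f(x)$ over $[0,h]$. The key observation is that $\nabla f(x)=\sum_i\phi_i'(a_i^\top x)\,a_i+m_2 x$, so it suffices to track the scalars $s_i(t)=a_i^\top x(t)$. Because $\phi_i^{(k)}=O(1)^k$, on any interval where $s_i$ stays $O(1)$ close to $s_i(0)$ the function $\phi_i'(s_i(t))$ is approximated to error $\epsilon$ by a polynomial of degree $\wt O(\log(1/\epsilon))$ in $t$, so the whole trajectory lies in the span of a known polynomial basis. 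The paper's ODE solver then converges in $\wt O(1)$ collocation rounds; each round needs $\wt O(1)$ evaluations of $\nabla f$, and all remaining work reduces to multiplying $A$ and $A^\top$ by vectors, costing $\wt O(d)$ amortized per iteration (since the expensive quantities $s_i$ and $\phi_i'$ can be maintained in nearly-linear work using the polynomial representation).

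\textbf{Step 3: Step-size constraint and the $\tau/m_2$ term.} The polynomial approximation in Step 2 is valid only as long as $\max_i|s_i(t)-s_i(0)|=O(1)$. Writing $s_i(t)-s_i(0)\approx t\,a_i^\top v_0-\tfrac{t^2}{2}a_i^\top\nabla f(x_0)$ and bounding $\max_i(a_i^\top v_0)^2$ in expectation by $\|AA^\top\|_{\infty\to\infty}=\tau$ (this is precisely why this particular matrix norm enters, rather than $\|A\|_{\mathrm{op}}^2$), one obtains the requirement $h=O(1/\sqrt{\tau})$, contributing $\wt O(\tau/m_2)$ iterations via the $1/(h^2m_2)$ bound. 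Taking the max with Step 1 yields the claimed $\wt O(\kappa^{1.5}+\tau/m_2)$ iteration count, and combining with Step 2 and the contraction of $W_2$ gives the stated Wasserstein bound. The main obstacle I anticipate is Step 1: establishing that the higher-order terms in the flow expansion genuinely cancel (in expectation over the Gaussian momentum) tightly enough to permit the larger step size; Step 2 is a direct application of the paper's main ODE result, and Step 3 is a moment calculation for Gaussian linear forms.
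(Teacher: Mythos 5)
Your Step 1 contains the main gap. The paper's improvement to $\kappa^{1.5}$ is \emph{not} obtained by arguing that the first-order error in $H(t)-H(0)$ cancels in expectation over the Gaussian momentum; no averaging over $v$ is used in the contraction at all. Instead, the contraction is deterministic for every shared momentum $v$: writing $u(t)=x(t)-y(t)$, so that $u''=-H(t)u$ with $u'(0)=0$, the paper splits $u(t)$ into the component along $e_1=u(0)/\|u(0)\|_2$ and the orthogonal part, shows via a crude $\cosh$-type bound that the $e_1$-component stays in $[\tfrac56,\tfrac76]$, and controls the orthogonal part by $\alpha(T)=\int_0^T(T-s)\|H(s)e_1\|_2\,\d s$. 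The crucial inequality is $\|H(t)e_1\|_2\le\sqrt{M_2\,e_1^\top H(t)e_1}\le\sqrt{M_2/m_2}\;e_1^\top H(t)e_1$, which lets the contraction of the parallel component dominate the orthogonal drift as long as $\alpha(T)\le\tfrac18\sqrt{m_2/M_2}$, i.e.\ for $h\lesssim m_2^{1/4}/M_2^{3/4}$. This gives $\|u(h)\|_2^2\le(1-\tfrac{m_2}{4}h^2)\|u(0)\|_2^2$ pointwise, hence $\wt O(\kappa^{1.5})$ iterations. Your proposed cancellation mechanism is unsubstantiated (you flag it yourself), and without it you have no proof of the enlarged step size; moreover an only-in-expectation contraction would also complicate the coupling/noise-accumulation argument that the paper carries out with the deterministic bound plus a per-step ODE error of size $\theta\epsilon/(2\sqrt{m_2})$.

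Step 3 is also incomplete in a way that matters for the $\tau/m_2$ term. The admissible step size (and the degree bound from the Cauchy-estimate/majorant argument) depends not only on the one-step Gaussian fluctuation $\max_i|a_i^\top v_0|$ but also on $\|s^{(j)}(0)\|_\infty=\|Ax^{(j)}\|_\infty$ at the \emph{start of each iteration}, because the $m_2 s$ term in $F(s)=-AA^\top\phi'(s)-m_2 s$ contributes derivative bounds growing with $\|s(0)\|_\infty$. Your single-step expansion $s_i(t)-s_i(0)\approx t\,a_i^\top v_0-\tfrac{t^2}{2}a_i^\top\nabla f(x_0)$ does not control how $\|Ax^{(j)}\|_\infty$ drifts over all $N$ iterations of the walk; the paper needs a separate high-probability argument (an $\ell_\infty$ bound on the dynamic proved via a martingale/potential-function lemma) showing $\max_j\|s^{(j)}-s^{(0)}\|_\infty=O\bigl(\sqrt{\tau/m_2}+\tau M/m_2\bigr)\log(dN/\eta)$, together with a union bound over iterations for $\|s'(0)\|_\infty\lesssim\sqrt{\tau\log(dN/\eta)}$. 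Only after plugging these into the radius condition does one get $1/(m_2h^2)\lesssim\kappa^{1.5}+\frac{M\tau}{m_2 r}\log(dN)+\frac{\tau}{m_2 r^2}\log^2(dN)$, which is the precise form of the claimed $\wt O(\kappa^{1.5}+\tau/m_2)$. Your Step 2 is essentially the paper's route (pass to $s=Ax$, low-degree approximation, collocation solver in $\ell_\infty$ with Lipschitz constant $L_{\phi'}\tau+m_2$), but it should be justified by bounding all derivatives of the solution $s^*(t)$ at $0$ via Cauchy estimates and the method of majorants, not merely by the smoothness of $\phi_i'$ along the trajectory.
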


\begin{remark}
The $\frac{1}{\sqrt{m_2}}$ term in the error is needed to make the statement invariant under scaling of $f$.
\end{remark}

For the logistic loss\footnote{The logistic function is $g(t) = \frac{1}{1+e^{-t}}$ and the logistic loss is $-\log ( g(t) ) = \log ( 1 + e^{-t} )$.} $\phi(t) = \log (1 + e^{-t})$, we have $\phi'(t) = - \frac{1}{1+e^t}$, and it has Cauchy estimate $M=1$ with radius $r=1$ (See Lemma~\ref{lem:property_of_logisitc_function}). 
The above result has the following application,
\begin{corollary}[Logistic loss sampling]
Let $f(x)=\sum_{i=1}^{n}\phi_{i}(a_{i}^{\top}x)+\frac{m_{2}}{2}\|x\|^{2}$
with $\phi(t)=\log(1+e^{-t})$. Let $\tau=\|AA^{\top}\|_{\infty\rightarrow\infty}$
and suppose that $\nabla^{2}f(x)\preceq M_{2}\cdot I$ for all $x$.
Starting at the minimum $x^{(0)}$ of $f$, we can find a random
point $X$ whose Wasserstein distance to $Y$ drawn from the density proportional to $e^{-f}$ satisfies
\[
W_2(X,Y) \leq \frac{\epsilon}{\sqrt{m_{2}}}
\]
using $\widetilde{O}(\frac{\tau}{m_{2}}+\kappa^{1.5})$ iterations
with $\kappa=\frac{M_{2}}{m_{2}}$. Each iteration takes $\widetilde{O}(d)$
time and $\widetilde{O}(1)$ matrix-vector multiplications for $A$ and $A^\top$.
\end{corollary}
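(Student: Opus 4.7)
The plan is to obtain the corollary as a direct specialization of Theorem~\ref{thm:sampling_formal} (the formal version of Theorem~\ref{thm:sampling_informal}). Every ingredient of that theorem is already present in the corollary's hypotheses except two, so the proof reduces to verifying: (i) the ridge function $\phi(t) = \log(1+e^{-t})$ satisfies the ``$k$-th derivatives bounded by $O(1)^k$'' condition of the main theorem; and (ii) each of the $\wt{O}(1)$ gradient evaluations per iteration promised by the theorem can be realized with $\wt{O}(1)$ matrix-vector multiplications against $A$ and $A^\top$ plus $\wt{O}(d)$ scalar work. Strong convexity with parameter $m_{2}$ and smoothness with parameter $M_{2}$ are immediate from the ridge regularization and the hypothesis $\nabla^{2}f \preceq M_{2}\cdot I$, and thus $\kappa = M_2/m_2$ matches the quantity used in the main theorem.

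For (i) I would invoke Lemma~\ref{lem:property_of_logisitc_function}. The derivative $\phi'(t) = -1/(1+e^{t})$ extends holomorphically to the strip $\{|\Im z| < \pi/2\}$ and is bounded by $1$ there, so it is bounded by $1$ on the disc of radius $r = 1$ around every real $t$. Cauchy's formula then gives $|\phi^{(k+1)}(t)| \leq k!$ with absolute constants $M = r = 1$, which is the form of derivative control that Theorem~\ref{thm:sampling_formal} is stated for. Because these constants are absolute and independent of the problem instance, no problem-dependent factor leaks into the final $\wt{O}$.

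For (ii), setting $\psi(y) := (\phi_{1}'(y_{1}), \ldots, \phi_{n}'(y_{n}))$ one has
\[
\nabla f(x) = A^{\top}\psi(Ax) + m_{2}\, x,
\]
so a single evaluation of $\nabla f$ consists of one multiplication by $A$, one coordinatewise application of $\psi$, one multiplication by $A^{\top}$, and an $O(d)$ addition. Hence the $\wt{O}(1)$ gradient evaluations per iteration of Theorem~\ref{thm:sampling_formal} translate into $\wt{O}(1)$ matrix-vector multiplications with $A$ and $A^{\top}$, while the non-matrix work remains $\wt{O}(d)$. Because $x^{(0)}$ is the mode of the $m_{2}$-strongly logconcave target, standard concentration bounds its initial Wasserstein distance to the stationary measure by $O(\sqrt{d/m_{2}})$, a polynomial factor that is absorbed into the $\wt{O}$ notation and therefore satisfies any warm-start requirement of the main theorem.

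The only mildly delicate point is lining up the Cauchy constants $(M,r)=(1,1)$ with the precise form of the ``$O(1)^{k}$'' hypothesis used inside Theorem~\ref{thm:sampling_formal}, since those constants propagate into the polynomial degree used by the ODE solver and then into the $\wt{O}$ factors; this is exactly what Lemma~\ref{lem:property_of_logisitc_function} takes care of. Plugging the three verifications above into Theorem~\ref{thm:sampling_formal} then yields the stated iteration count $\wt{O}(\tau/m_{2} + \kappa^{1.5})$ at cost $\wt{O}(d)$ time and $\wt{O}(1)$ matrix-vector products per iteration, which completes the proof.
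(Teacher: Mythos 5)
Your proposal is correct and follows the paper's (implicit) derivation: the corollary is exactly Theorem~\ref{thm:sampling_formal} specialized via Lemma~\ref{lem:property_of_logisitc_function} with Cauchy constants $M=r=1$, which turns the bound $k\lesssim \kappa^{1.5}+\frac{M\tau}{m_2 r}\log(dN)+\frac{\tau}{m_2 r^2}\log^2(dN)$ into $\wt{O}(\kappa^{1.5}+\tau/m_2)$, and each iteration's $\wt{O}(1)$ evaluations are indeed realized by coordinatewise applications of $\phi'$ together with matrix-vector products with $A$ and $A^\top$ plus $O(d)$ extra work. The only cosmetic slip is your aside on the starting point: what starting at the minimum buys is $\|\nabla f(x^{(0)})\|=0$ inside the logarithm of the iteration count of Theorem~\ref{thm:sampling_formal}, not a Wasserstein warm-start condition, but either way that factor is absorbed into $\wt{O}(\cdot)$, so the argument stands.
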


\begin{remark}
Lemma \ref{lem:tau} shows that $\|A A^\top \|_{\infty\rightarrow\infty} = \Theta( \lambda_{\max}(AA^\top))$ for sparse enough matrix $A A^\top$. Since $\lambda_{\max}(AA^\top)$ usually has the same order as $M_2$, the number of iterations is dominated by the $\kappa^{1.5}$ term.
\end{remark}

The above results extend and improve previous work substantially.
First, in all previous algorithms, the number of functions calls was polynomial in the dimension $d$, while the dependence here is polylogarithmic. Second, our incoherence assumption for logistic regression is simpler and milder. Third, due to the nature of how we implement each step, the parallel depth of the algorithm is just the number of iterations, i.e., polylogarithmic in the dimension and $\tilde{O}(1)$ when the condition numbers are bounded. Fourth, the runtime and depth of our algorithm depends polynomially in $\log(1/\epsilon)$ while all previous (nearly) linear time algorithms  depends polynomially in $1/\epsilon$

We also give an improved bound on the complexity of sampling from $e^{-f}$ when $f$ is strongly convex and has a Lipschitz gradient (no further smoothness assumptions). 
\define{thm:strongly_convex}{Theorem}{\rm{(Strongly Convex){\bf.}}
Given a function $f$ such that $0\prec m_{2}\cdot I\preceq\nabla^{2}f(x)\preceq M_{2}\cdot I$
for all $x\in\mathbb{R}^{d}$ and $0<\epsilon<\sqrt{d}$. Starting $x^{(0)}$
at the minimum of $f$, we can find a random
point $X$ whose Wasserstein distance to $Y$ drawn from the density proportional to $e^{-f}$ satisfies
\[
W_2(X,Y)\leq\frac{\epsilon}{\sqrt{m_{2}}}
\]
using $O(\kappa^{1.5}\log(\frac{d}{\epsilon}))$ iterations where $\kappa=\frac{M_{2}}{m_{2}}$.
Each iteration takes $O\left(\frac{\kappa^{\frac{1}{4}}d^{\frac{3}{2}}}{\epsilon}\log\left(\frac{\kappa d}{\epsilon}\right)\right)$
time and $O\left(\frac{\kappa^{\frac{1}{4}}d^{\frac{1}{2}}}{\epsilon}\log\left(\frac{\kappa d}{\epsilon}\right)\right)$
evaluations of $\nabla f$, amortized over all iterations. 

}
\state{thm:strongly_convex}
The previous best bound was $\kappa^2 \sqrt{d}$ iterations \cite{ccbj18}. This result is one of the key surprises of this paper. Although this problem has been studied extensively with specifically-designed algorithms and analysis, we show how to get a better result by a general ODE algorithm and a general analysis which works for any ODE. Furthermore, {\em our algorithm is the first to achieve polylogarithmic depth dependence on the dimension, which seemed impossible in prior work.} 

The above results are based on three ingredients: (1) a new contraction rate for HMC of $\kappa^{1.5}$, improving on the previous best bound of $\kappa^2$ (2) a proof that a solution to ODE's arising from HMC applied to the above problem are approximated by (piecewise) low-degree polynomials and (3) a fast (nearly linear time and polylog parallel depth) algorithm for solving multivariate second-order ODEs.

We next present the multivariate high-order ODE guarantee. This generalizes and improves on the guarantee from \cite{lv17a}. While we state it below for the case of the piecewise polynomial basis of functions, it applies to any basis of functions. This is a general result about solving ODE efficiently, independent of the application to sampling. The only assumptions needed are that the ODE function is Lipschitz and that the solution is close to the span of small number of basis of functions. These natural assumptions suffice to get around the worst-case complexity lower bounds for solving such general ODEs \cite{kf82,ko83,ko10,ka10,kc12}.

\begin{theorem}[Informal version of Theorem~\ref{thm:kth_order_ode_piecewise} for 1st order ODE]
Let $x^*(t) \in \R^d$ be the solution of the ODE 
\begin{align*}
    \frac{\d }{\d t} x(t) = F(x(t), t), x(0) = v
\end{align*}
where $F : \R^{d+1} \rightarrow \R^{d}$, $x(t) \in \R^d$ and $v\in \R^d$. Given some $L$ and $\epsilon >0$ such that 

1. There exists a piece-wise polynomial $q(t)$ such that $q(t)$ on $[T_{j-1},T_j]$ is a degree $D_j$ polynomial with $$0 = T_0 < T_1 < \cdots < T_n = T$$ and that
\begin{align*}
\left\| q(t) - \frac{ \d }{ \d t } x^*(t) \right\| \leq \frac{ \epsilon }{ T }, \forall t \in [0,T]
\end{align*}

2. The algorithm knows about the intervals $[T_{j-1},T_j]$ and the degree $D_j$ for all $j \in [n]$.

3. For any $y , z \in \R^{d}$,
\begin{align*}
\| F( y, t ) - F( z , t) \| \leq  L \| y - z \|, \forall t \in [0,T].
\end{align*}
Assume $L T \leq 1/16000$. Then, we can find a piece-wise polynomial $x(t)$ such that
\begin{align*}
\max_{t \in [0,T] } \left\|    x(t) -  x^*(t) \right\| \lesssim \epsilon.
\end{align*}
using $\tilde{O}\left( \sum_{ i = 1 }^n (1 + D_i) \right)$ evaluations of $F$ and
$\tilde{O}\left(d \sum_{ i = 1 }^n (1+D_i)\right)$ time.
\end{theorem}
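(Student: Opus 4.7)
The plan is to build the approximate solution by a Picard-style fixed-point iteration restricted to the class $\mathcal{P}$ of piecewise polynomials whose $j$th piece (on $[T_{j-1},T_j]$) has degree $D_j+1$, so that their derivatives lie in the same piecewise polynomial class as the hypothesized approximant $q(t)$ for $\tfrac{d}{dt}x^*(t)$. Concretely, I would define the operator
\begin{equation*}
\Phi[y](t) \;=\; v + \int_0^t F\bigl(y(s),s\bigr)\,\d s,
\end{equation*}
and iterate $y_{k+1} = \Pi\,\Phi[y_k]$, where $\Pi$ is a suitable projection onto $\mathcal{P}$ implemented piece-by-piece by Chebyshev-node collocation on each interval $[T_{j-1},T_j]$. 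The sampling points on the $j$th interval are the $O(D_j)$ Chebyshev nodes, so one evaluation of $\Phi[y_k]$ needs only $\tilde O\bigl(\sum_j (1+D_j)\bigr)$ evaluations of $F$; antiderivatives of polynomials expressed in a Chebyshev basis are computable coordinatewise in $\tilde O(D_j)$ time, which accounts for the $\tilde O\bigl(d\sum_j (1+D_j)\bigr)$ overall time per iteration.

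The convergence analysis has three clean ingredients. First, the Lipschitz hypothesis and $LT\le 1/16000$ imply that the continuous Picard map $\Phi$ is a $(LT)$-contraction in the sup norm $\|\cdot\|_{[0,T]}$, so without any projection we would have $\|\Phi[y]-x^*\|_{[0,T]} \le LT\cdot \|y-x^*\|_{[0,T]}$. Second, by hypothesis there is a piecewise polynomial $q\in\mathcal{P}'$ (the derivative class) with $\|q-\tfrac{d}{dt}x^*\|_{[0,T]}\le \epsilon/T$; integrating gives a $\widetilde{x}\in \mathcal{P}$ with $\|\widetilde{x}-x^*\|_{[0,T]}\le \epsilon$, so $\mathcal{P}$ contains an $\epsilon$-approximation of $x^*$. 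Third, I need a bound on $\Pi$ as an operator on sup-norm-bounded continuous functions; for Chebyshev interpolation at degree $D_j$ this Lebesgue constant is $O(\log D_j)$, which merely contributes a polylogarithmic factor absorbed by $\tilde O(\cdot)$, and $\Pi\widetilde{x}=\widetilde{x}$ since $\widetilde{x}\in\mathcal{P}$. Combining these three facts one interval-at-a-time yields, for the projected iteration,
\begin{equation*}
\|y_{k+1}-x^*\|_{[0,T]} \;\le\; c\cdot LT\cdot \|y_k-x^*\|_{[0,T]} \;+\; O(\epsilon),
\end{equation*}
for some $c=\tilde O(1)$ coming from the Lebesgue constant. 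With $LT\le 1/16000$ this is a strict contraction, so starting from $y_0\equiv v$ and iterating $\tilde O(\log(\|x^*-v\|/\epsilon))$ times produces $\|y_k-x^*\|_{[0,T]}\lesssim \epsilon$.

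To stitch the pieces together I would either (a) iterate globally on all intervals at once (which is clean because $\Phi$ is well-defined for arbitrary piecewise continuous $y$ and the integral naturally propagates the initial values across interval boundaries), or (b) handle intervals sequentially, solving each $[T_{j-1},T_j]$ to accuracy $\epsilon/n$ and using the resulting endpoint as the initial condition for the next piece, invoking the single-interval analysis on each. In either case continuity of the approximation at the $T_j$ is automatic from the integral form $v+\int_0^t F(y,s)\,\d s$ before projection; for the projected iterate, using Chebyshev interpolation including the endpoints (or doing (b)) preserves continuity.

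The main technical obstacle I expect is keeping the error bound uniform across the piecewise structure. Specifically, the Lebesgue constant argument gives only an $\tilde O(1)$ blowup per projection, but one must verify that this constant does not compound badly across $n$ intervals, and that the hypothesis $\|q-\tfrac{d}{dt}x^*\|\le \epsilon/T$ (rather than a per-interval $L^1$ or $L^2$ bound) suffices to bound $\|\widetilde x - x^*\|$ in sup norm uniformly across all pieces—this is where the $T$ in the denominator of $\epsilon/T$ earns its keep, since integrating from $0$ to $T$ converts an $\ell^\infty$ error of $\epsilon/T$ on the derivative into an $\ell^\infty$ error of $\epsilon$ on the antiderivative. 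The rest is bookkeeping on representation sizes so that the runtime is truly $\tilde O\bigl(d\sum_j(1+D_j)\bigr)$, e.g., by storing each polynomial piece in its Chebyshev coefficient form and using linear-time antidifferentiation and evaluation at the nodes.
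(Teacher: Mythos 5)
Your scheme is, in spirit, exactly the paper's collocation method (a projected Picard iteration on a piecewise-polynomial basis with Chebyshev nodes), and the cost accounting (including the Chebyshev-coefficient antidifferentiation, a legitimate alternative to the paper's multipole-method citation) is fine; the gap is in the quantitative core of the convergence argument. First, the operator you analyze, $\Pi\circ\Phi$, is not the operator you can compute: evaluating $\Phi[y_k]$ at the Chebyshev nodes requires the exact integrals $\int_0^{c} F(y_k(s),s)\,\d s$, which are not available from finitely many evaluations of a merely Lipschitz $F$. What your implementation actually computes is ``interpolate the integrand at the nodes, then integrate the interpolant'' --- the paper's operator $\T_\varphi$ of Eq.~\eqref{eq:def_T_d_operator} --- and its contraction is governed by $\sup_{t\in[0,T]}\sum_j\bigl|\int_0^t\varphi_j(s)\,\d s\bigr|$, not by the operator norm of $\Pi$. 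One can repair your recursion for the computed iterates (bounding $\|\T_\varphi[y]-\Phi[y]\|$ through the best approximation of the integrand, which your hypothesis on $q$ controls), but both the repair and your original argument carry the Chebyshev Lebesgue constant $\Lambda_{D_j}=\Theta(\log D_j)$: the contraction condition becomes $\Theta(\log D_j)\cdot LT<1$ and the limiting error becomes $\Theta(\epsilon\log D_j)$. That proves only a degree-dependent version of the statement, whereas the theorem's threshold $LT\le 1/16000$ and conclusion $\lesssim\epsilon$ are absolute constants independent of the $D_j$.

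The missing ingredient is the degree-independent bound on the \emph{integrals} of the Lagrange basis at Chebyshev nodes: the paper's notion of a $\gamma_\varphi$-bounded basis (Definition~\ref{def:basis}), with $\sum_j|\int_0^t\varphi_j(s)\,\d s|\le\gamma_\varphi T$ and $\gamma_\varphi\le 2000$ on each piece (Lemma~91 of \cite{lv17a}), concatenated across pieces by Lemma~\ref{lem:concat_basis} and Lemma~\ref{lem:basis_piecewise_poly}. With that fact, the Lipschitz constant of $\T_\varphi$ is $\gamma_\varphi LT$ (Lemma~\ref{lem:Lipschitz_Tphi}) and the consistency error is $\|x^*-\T_\varphi(x^*)\|\le(1+\gamma_\varphi)\epsilon$, using that the interpolant of $q$ is $q$ itself (Lemma~\ref{lem:helper}); these give the advertised absolute constants. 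Incidentally, your worry about compounding across the $n$ intervals is benign --- the interpolation is local to each piece and the per-piece bounds combine additively weighted by interval lengths --- but without replacing the Lebesgue constant by the $\gamma_\varphi$ bound, your proof does not reach the stated theorem.
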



We suspect these methods will be useful in many other settings beyond the focus application of this paper. 
Moreover, the result is nearly optimal. Roughly speaking, it says that the complexity of solving the ODE is nearly the same as the complexity of representing the solution. The assumption that $LT < 1$ is essential, as otherwise after longer time, the solution can blow up exponentially. Also, the assumption on the piece-wise polynomial approximation has a certain universality since this is how one implicitly represents a function using any iterative method. Finally, each iteration of the ODE algorithm, the collocation method, can be fully parallelized; as a result the parallel time complexity of the sampling algorithms in this paper are polylogarithmic in the dimension.  


\subsection{HMC and improved contraction rate}
We give an improved contraction rate for HMC, stated explicitly as Algorithm \ref{alg:hmc}.
\begin{algorithm}\caption{Hamiltonian Monte Carlo Algorithm}\label{alg:hmc}
\begin{algorithmic}[1]
\Procedure{HMC}{$x^{(0)},f,\epsilon, h$} \Comment{Theorem~\ref{thm:contraction_HMC}} 
\State Suppose that $f$ is $m_2$ strongly convex with $M_2$ Lipschitz gradient
on $\mathbb{R}^{d}$.
\State Assume that the step size $h \leq \frac{ m_2^{1/4}}{2 M_2^{3/4}}$.
\State Let the number of iterations $N = \frac{1}{\theta} \cdot\log\left(\frac{4}{\epsilon^2} \left(\frac{\|\nabla f(x^{(0)})\|_{2}^{2}}{ m_2}+d \right)\right)$
with $\theta = \frac{ m_2 h^2 }{ 8 }$.
\For { $k=1,2,\cdots,N$}
	\State Generate a Gaussian random direction $v\sim {\cal N}(0,I_{d})$.
	\State Let $x(t)$ be the HMC defined by
	\[
\frac{\d^{2}x}{\d t^{2}}=-\nabla f(x),\frac{\d x}{\d t}(0)=v,x(0)=x^{(k-1)}.
\]
	\State Find a point $x^{(k)}$ such that \Comment{Theorem~\ref{thm:kth_order_ode_piecewise}}
	\[ \|x^{(k)}-x(h)\|_{2}\leq \overline{\epsilon}:=\frac{\theta \cdot \epsilon}{2 \sqrt{m_2}}.\]
\EndFor
\State \Return $x^{(N)}$.
\EndProcedure
\end{algorithmic}
\end{algorithm}
We give two contraction
bounds for the ideal HMC. The first bound is $(\frac{m_2}{M_2})^{1.5}$
using $T\sim\frac{m_2^{1/4}}{M_2^{3/4}}$. The second bound shows that
there is a $T$ that gives the optimal contraction bound $\frac{m_2}{M_2}$. However, as we will see
this part cannot be used to bound the overall mixing time, because the time $T$ depends on the point we use
for coupling, which is unknown to the algorithm. We keep this as evidence for a possible $\frac{m_2}{M_2}$ bound. 
The improvement is from $\kappa^2$ in previous work to $\kappa^{1.5}$.

\begin{lemma}[Contraction bound for HMC]\label{lem:HMC_contraction}
Let $x(t)$ and $y(t)$ be the solution
of HMC dynamics on $e^{-f}$ starts at $x(0)$ and $y(0)$ with initial
direction $x'(0)=y'(0)=v$ for some vector $v$. Suppose that $f$
is $m_2$ strongly convex with $M_2$ Lipschitz gradient., i.e., $m_2 \cdot I\preceq\nabla^{2}f(x)\preceq M_2 \cdot I$
for all $x$. Then, for $0\leq t\leq\frac{m_2^{1/4}}{2M_2^{3/4}}$, we
have that
\[
\|x(t)-y(t)\|_{2}^{2}\leq\left(1-\frac{ m_2 }{4}t^{2}\right)\|x(0)-y(0)\|_{2}^{2}.
\]
Furthermore, there is $t\geq0$ depending on $f, x(0), y(0)$, and $v$
such that 
\[
\|x(t)-y(t)\|_{2}^{2}\leq\left(1-\frac{1}{16}\frac{ m_2 }{ M_2 }\right)\|x(0)-y(0)\|_{2}^{2}.
\]
\end{lemma}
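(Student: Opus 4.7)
The plan is to analyze the difference process. Set $z(t) := x(t) - y(t)$. Since $\dot x(0) = \dot y(0) = v$, we have $\dot z(0) = 0$, and applying the fundamental theorem of calculus to $\nabla f$ yields
\begin{align*}
\ddot z(t) \;=\; -H(t)\,z(t), \qquad H(t) \;:=\; \int_{0}^{1} \nabla^{2} f\bigl(y(t) + \sigma z(t)\bigr)\,\d\sigma,
\end{align*}
where $m_2 I \preceq H(t) \preceq M_2 I$ at every $t$ by the assumption on $\nabla^2 f$. Write $a := \|z(0)\|$ and $\phi(t) := \|z(t)\|^2$; then $\phi(0) = a^2$, $\phi'(0) = 2\langle z(0), \dot z(0)\rangle = 0$, and $\phi''(t) = 2\|\dot z(t)\|^2 - 2\langle z(t), H(t)\,z(t)\rangle$, so Taylor's remainder formula gives
\begin{align*}
\phi(t) \;=\; a^2 + \int_{0}^{t} (t-u)\,\phi''(u)\,\d u.
\end{align*}

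For the first (short-time) bound, I would run a joint bootstrap of $\|z(u)\|$ and $\|\dot z(u)\|$ on the window $[0, T]$ with $T = m_{2}^{1/4}/(2 M_{2}^{3/4})$. The inequalities $\|\ddot z(u)\| \leq M_2 \|z(u)\|$ and $\dot z(0) = 0$, integrated once, give $\|\dot z(u)\| \leq M_2 u \cdot \sup_{s\leq u} \|z(s)\|$; integrating a second time and using that $M_2 T^2 \leq 1/(4\sqrt{\kappa})$ is small, I show $\|z(u)\|$ stays within a factor $1 + O(M_2 u^2)$ of $a$ on the whole window. Plugging both estimates into the integral representation above, together with the matching lower bound $\langle z, Hz\rangle \geq m_2 \|z\|^2 \geq m_2 a^2 (1 - O(M_2 u^2))$, yields an estimate of the shape $\phi(t) \leq a^2[1 - m_2 t^2 + (\text{quartic-in-}t\text{ corrections})]$. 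The announced step size is exactly what is required to absorb the quartic corrections into a $\tfrac{3}{4}\,m_2 t^2$ slack, delivering $\phi(t) \leq (1 - m_2 t^2/4)\,a^2$.

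For the second (optimal-$t$) bound I would first solve the autonomous case where $H(t) \equiv H_0$: then $z(t) = \cos(\sqrt{H_0}\,t)\,z(0)$, and the scalar inequality $\cos^{2}(x) \leq 1 - x^2/4$, valid on $[0, 3/2]$, combined with the choice $t = 3/(2\sqrt{M_2})$, gives the contraction $1 - (9/16)\,m_2/M_2$, which is the $\tfrac{1}{16}\,m_2/M_2$ claim up to constants. The non-autonomous case follows by the same bootstrap as above on a window of this longer length; crucially, however, the optimal $t$ now depends on the spectral content of $z(0)$ against the eigenspaces of $H(0)$---and hence on $f, x(0), y(0), v$---which is exactly why the lemma only asserts existence of such a $t$ and why this stronger rate cannot be plugged directly into a mixing-time bound.

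The main obstacle is that without any bound on $\nabla^3 f$ one cannot Taylor-expand $H(t)$ in time past zeroth order, so all information about its time variation must come through $m_2 I \preceq H(t) \preceq M_2 I$ together with the joint bootstrap on $(\|z\|, \|\dot z\|)$. Achieving the announced step size $m_2^{1/4}/(2 M_2^{3/4})$---rather than the more conservative $\sqrt{m_2}/M_2$ that falls out of a decoupled estimate---requires pairing the upper bound on $\|\dot z(u)\|^2$ with the matching lower bound on $\langle z(u), H(u) z(u)\rangle$ inside the same integrand, so that the would-be $M_2^2 t^4$ remainder collapses into a $m_2 M_2 t^4$ correction that the claimed step size can afford. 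I expect this balancing to be the most book-keeping-heavy part of the argument.
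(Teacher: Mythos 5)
Your reduction to the matrix ODE $\ddot z(t)=-H(t)z(t)$ with $m_2 I\preceq H(t)\preceq M_2 I$, and the crude bounds ($\|z(u)\|$ within a constant factor of $a$, $\|\dot z(u)\|\le M_2 u\sup\|z\|$), match the paper's setup, but the core of the argument is missing. With the decoupled estimate the positive term $2\int_0^t(t-u)\|\dot z(u)\|^2\,du$ is of order $M_2^2t^4a^2$, and at the claimed step $t=m_2^{1/4}/(2M_2^{3/4})$ one has $M_2^2t^4=\tfrac{\sqrt{\kappa}}{4}\,m_2t^2$, so the ``quartic correction'' is $\Theta(\sqrt{\kappa})$ times the $m_2t^2$ main term and cannot be absorbed; this is exactly why the decoupled route only reaches $t\sim\sqrt{m_2}/M_2$, i.e.\ the old $\kappa^2$ rate. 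Your proposed rescue --- that pairing should ``collapse'' the remainder to $C\,m_2M_2t^4$ --- is not carried out and is false as stated: take $H(t)\equiv H_0$ and $z(0)$ a top eigenvector, so $\phi(t)=\cos^2(\sqrt{M_2}\,t)\,a^2$ and the true quartic term is $\tfrac13M_2^2t^4a^2$, far larger than $m_2M_2t^4a^2$; it is harmless there only because the negative term is simultaneously of size $M_2t^2a^2$, not $m_2t^2a^2$. The missing idea is to keep the positive and negative contributions proportional to the \emph{same} trajectory-dependent quantity before ever invoking $m_2$. The paper does this by normalizing $z(0)=e_1$, splitting $u$ into $\beta(t)=e_1^\top u(t)$ and the orthogonal part $u_{-1}$, controlling $\|u_{-1}(t)\|\le\tfrac43\alpha(T)$ with $\alpha(T)=\int_0^T(T-s)\|H(s)e_1\|\,ds$, and crucially using $\|H(s)e_1\|\le\sqrt{M_2/m_2}\;e_1^\top H(s)e_1$ (i.e.\ $H^2\preceq M_2H$), so that both the decay $\beta''\le-\tfrac23\,e_1^\top H e_1$ and the error $\|u_{-1}\|^2$ are measured against $\int_0^t(t-s)\,e_1^\top H(s)e_1\,ds$; only at the very end is this common quantity lower bounded by $\tfrac{m_2t^2}{2}$, and the step-size restriction comes from requiring $\alpha(t)\le\tfrac18\sqrt{m_2/M_2}$, not from absorbing an $M_2^2t^4$ term. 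No inequality of this kind appears in your sketch, and without it the announced step size is out of reach.

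The second bound has the same problem in sharper form. Your autonomous computation ($z(t)=\cos(\sqrt{H_0}\,t)z(0)$ with a \emph{fixed} $t=3/(2\sqrt{M_2})$) would give a universal, data-independent time, so it does not explain the lemma's data-dependent $t$, and the assertion that ``the non-autonomous case follows by the same bootstrap'' on a window of length $\Theta(1/\sqrt{M_2})$ is unsupported --- over such long windows the time variation of $H$ is precisely what must be controlled, and norm-only bootstraps do not do it. In the paper the data dependence enters through $\alpha$: one chooses the first $t$ with $\alpha(t)=\tfrac18\sqrt{m_2/M_2}$ (which depends on $H(\cdot)$ along the coupled trajectories, hence on $f,x(0),y(0),v$) and concludes $\|u(t)\|^2\le 1-\tfrac12\sqrt{m_2/M_2}\,\alpha(t)=1-\tfrac1{16}\tfrac{m_2}{M_2}$. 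As an aside, the pairing you gesture at can be made rigorous by applying Cauchy--Schwarz and $H^2\preceq M_2H$ \emph{inside} the integrand, e.g.\ $\|\dot z(u)\|^2\le M_2\,u\int_0^u z^\top H z\,ds$, so that the positive term is a small multiple of $\int_0^t(t-u)\,z^\top Hz\,du$ whenever $M_2t^2\lesssim 1$; but that is a genuinely different argument from the one you wrote, and supplying such a step (or the paper's $e_1$-decomposition) is exactly what your proposal lacks.
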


\subsection{Techniques}

In this paper we give bounds on the Collocation Method for solving ODEs. To ensure the algorithm is applicable to the ODE's that arise in the sampling application, we need to show that the solution of the ODE is close to a low-rank basis. Given only bounds on the Hessian of a function, we do this by approximating the solution of the ODE with a piecewise degree two polynomial. For smooth functions, we can use low-degree polynomials.  

The proofs of the degree bounds go via the Cauchy-Kowalevsky method, by showing bounds on all derivatives at the initial point. To do this for multivariate ODE's, we use the method of majorants, and reduce the problem to bounding the radius of convergence of one-variable ODEs. 

The improved convergence guarantees for exact HMC are also based on better analysis of the underlying ODE, showing that a larger step size than previously known is possible.   

For many optimization and sampling methods, there are corresponding customized versions that deal with decomposable functions by sampling terms of the functions. These algorithms usually take nearly linear time with the number of iterations being polynomial in the dimension. Often, an improvement in the general case would lead to an improvement in the decomposable case. To limit the length of this paper, we focus only on results with polylogarithmic depth. Therefore, in Table \ref{table:relwork}, we do not list algorithms that involve sampling each term in decomposable functions \cite{bfr16,dsmbr16,drwpsx16,bffn17,dk17,cwzsc17,ndhvsz17,cfmbj18}. We expect our techniques can be further improved for decomposable functions by sampling individual terms. 

\paragraph{Outline of paper.}
 Then we give the main ODE algorithm and guarantees in Section \ref{sec:ode}. We give the proof of the improved convergence bound for HMC in Section \ref{sec:contraction}. We use both parts to obtain improved guarantees for sampling strongly logconcave functions with Lipschitz gradients in Section \ref{sec:nonsmooth} and smooth functions, including logistic loss in Section \ref{sec:logistic}.

Some preliminaries including standard definitions and well-known theorems about ODEs are in an appendix.
Remaining proofs about ODEs are in Appendix~\ref{sec:app_ode}. In Appendix~\ref{sec:app_logistic}, we present some useful tools for Cauchy estimates. Appendix~\ref{sec:app_cauchy} shows how to calculate the Cauchy estimates of some function which are extensively used in practice.

\section{ODE Solver for any basis}\label{sec:ode}
In this section, we analyze the collocation method for solving ODEs \cite{i09}. This method is classical in numerical analysis. The goal of this section is provide an introduction of this method and provide a non-asymptotic bounds for this method. In \cite{lv17a, lv18}, we applied this method to obtain faster algorithms for sampling on polytopes. Unfortunately, the particular version of collocation method we used assume the solution can be approximated by a low-degree polynomial, which heavily restrict the set of functions we can sample.

To give an intuition for the collocation method, we first consider the following first-order ODE
\begin{align*}
\frac{ \d }{ \d t } x(t) = & ~ F( x(t), t ), \quad \forall 0 \leq t \leq T ,  \\
x(0) = & ~ v.
\end{align*}
where $ F : \R^{d+1} \rightarrow \R$. 
The collocation method is partly inspired by the Picard-Lindel\"{o}f theorem, a constructive existence proof for a large class of ODE. Therefore, we will first revisit the proof of Picard-Lindel\"{o}f theorem for first-order ODE.

\subsection{Picard-Lindel\"{o}f theorem}\label{sec:picard_lindelof}
In general, we can rewritten the first-order ODE as an integral equation
$$x(t) = v + \int^t_0 F(x(s),s) \d s\quad \text{for all } 0 \leq t \leq T.$$

To simplify the notation, we use $\mathcal{C}([0,T],\R^d)$ to denote $\R^d$-valued functions on $[0,T]$. We define the operator $\T$ from $\mathcal{C}([0,T],\R^d)$ to $\mathcal{C}([0,T],\R^d)$ by
\begin{equation}
\T(x) (t) = v + \int_0^t F( x(s), s ) \d s \quad \text{for all } 0\leq t \leq T . \label{eq:def_T_operator}
\end{equation}
Therefore, the integral equation is simply $x = \T(x)$.

Banach fixed point theorem shows that the integral equation $x = \T(x)$ has a unique solution if there is a norm, and $j \in\mathbb{N}$ such that the map $\T^{\circ j}$ has Lipschitz constant less than 1. Recall that $\T^{\circ j}$ is the composition of $j$ many $\T$, i.e., 
\begin{align*}
\T^{\circ j}(x) = \underbrace{ \T ( \T ( \cdots \T }_{j~\text{many}~\T} ( x) \cdots ) ) .
\end{align*}

Picard-Lindel\"{o}f theorem shows that if $F$ is Lipschitz in $x$, then the map $\T^{\circ j}$ has Lipschitz constant less than 1 for some positive integer $j$.
\begin{lemma}\label{lem:Lipschitz_T}
Given any norm $\| \cdot \|$ on $\R^d$. Let $L$ be the Lipschitz constant of $F$ in $x$, i.e. 
$$\|F(x,s)-F(y,s)\| \leq L \|x-y\| \quad \text{for all } x, y\in \R^d, s\in[0,T].$$

For any $x\in\mathcal{C}([0,T],\R^d)$, we define the corresponding norm
$$\| x\| \defeq \max_{0 \leq t \leq T} \|x(t)\|.$$
Then, the Lipschitz constant of $\T^{\circ j}$ in this norm is upper bounded by $(LT)^j/j!$.
\end{lemma}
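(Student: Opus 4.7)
The plan is a standard induction on $j$, but with a strengthened inductive hypothesis: rather than trying to prove the sup-norm Lipschitz bound $(LT)^j/j!$ directly (which does not recurse cleanly), I would prove the sharper pointwise-in-$t$ estimate
\[
\bigl\| \T^{\circ j}(x)(t) - \T^{\circ j}(y)(t) \bigr\| \;\le\; \frac{(Lt)^{j}}{j!} \, \|x-y\|, \qquad \forall\, t\in[0,T],
\]
and then take $\max_{t\in[0,T]}$ at the end to recover the advertised bound. This is the standard trick behind the Picard--Lindel\"of theorem: the factorial in the denominator only appears if one tracks the $t$-dependence inside the integral.

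For the base case $j=0$, $\T^{\circ 0}$ is the identity and the estimate reduces to $\|x(t)-y(t)\| \le \|x-y\|$, which holds by definition of the norm. For the inductive step, I would use the definition \eqref{eq:def_T_operator} to write
\[
\T^{\circ (j+1)}(x)(t) - \T^{\circ (j+1)}(y)(t) \;=\; \int_0^t \Bigl( F\bigl(\T^{\circ j}(x)(s),s\bigr) - F\bigl(\T^{\circ j}(y)(s),s\bigr) \Bigr)\,\d s,
\]
apply the triangle inequality for integrals, invoke the Lipschitz hypothesis on $F$ to bring out a factor $L$, and then plug in the inductive pointwise bound at time $s$. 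This yields $\int_0^t L\cdot (Ls)^j/j! \cdot \|x-y\|\, \d s = (Lt)^{j+1}/(j+1)! \cdot \|x-y\|$, completing the induction. Taking the maximum over $t\in[0,T]$ gives Lipschitz constant $(LT)^j/j!$ for $\T^{\circ j}$ in the sup-norm.

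Two small points of care. First, the triangle inequality for vector-valued integrals $\bigl\|\int_0^t g(s)\,\d s\bigr\|\le \int_0^t\|g(s)\|\,\d s$ holds for any norm on $\R^d$, so the argument works uniformly in the choice of norm as stated. Second, one should note that $\T^{\circ j}(x)(s)$ is well-defined and continuous in $s$ (by induction, since $\T$ maps $\mathcal{C}([0,T],\R^d)$ to itself provided $F$ is measurable/continuous enough for the integrals to make sense), so there is no integrability issue. I do not expect any real obstacle here; the only subtlety is recognizing that one must carry the pointwise-in-$t$ factor $(Lt)^j/j!$ through the induction rather than collapsing to the sup norm too early, since otherwise the argument only yields Lipschitz constant $(LT)^j$ without the crucial $1/j!$.
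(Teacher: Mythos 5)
Your proposal is correct and follows essentially the same route as the paper: the paper's proof also uses the strengthened pointwise-in-$t$ induction hypothesis $\|(\T^{\circ j}x)(h)-(\T^{\circ j}y)(h)\|\le \frac{L^j h^j}{j!}\|x-y\|$ for all $0\le h\le T$, and closes the induction exactly as you do by pulling out $L$ and integrating $\frac{L^{j-1}s^{j-1}}{(j-1)!}$. No gaps; your remarks on the vector-valued triangle inequality and continuity are fine and consistent with the paper's (implicit) assumptions.
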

\begin{proof}
We prove this lemma with a stronger induction statement 
$$\| (\T^{\circ j} x)(h) -  (\T^{\circ j} y)(h) \|\leq \frac{ L^{j} h^{j} }{ j ! } \| x - y \| \quad \text{for all } 0\leq h \leq T.$$

The base case $j=0$ is trivial. For the induction case $j$, we can upper bound the term as follows
\begin{align*}
& ~ \| \T^{\circ j} x (h) - \T^{\circ j} y (h) \| \\
= & ~ \| \T \T^{\circ (j-1)} x (h)- \T \T^{\circ (j-1)} y (h) \| \\
\leq & ~ \int_0^h \| F( \T^{\circ (j-1)} x(s), s ) - F( \T^{\circ (j-1)} y(s), s ) \| \d s \\
\leq & ~ L  \int_0^h \| \T^{\circ (j-1)} x(s) - \T^{\circ (j-1)} y(s) \| \d s & \text{~by~}f\text{~is~$L$~Lipschitz}\\
\leq & ~ L \int_0^h \frac{ L^{j-1} s^{j-1} }{ (j-1) ! } \| x - y \| \d s  & \text{ by the induction statement}\\
= & ~ \frac{L^j}{j!} h^j \| x - y \|.
\end{align*}
This completes the induction.
\end{proof}

\subsection{Intuition of collocation method}
To make the Picard-Lindel\"{o}f theorem algorithmic,
we need to discuss how to represent a function in $\mathcal{C}([0,T],\R^{d})$.
One standard way is to use a polynomial $p_{i}(t)$ in $t$ for each
coordinate $i\in[d]$.  In this section, we assume that there is
a basis $\{\varphi_{j}\}_{j=1}^{D}\subset\mathcal{C}([0,T],\R)$ such
that for all $i\in[d]$, $\frac{\d x_{i}}{\d t}$ is approximated by some
linear combination of $\varphi_{j}(t)$. 

For example, if $\varphi_{j}(t)=t^{j-1}$ for $j\in [d]$, then our assumption is
simply saying $\frac{\d x_{i}}{\d t}$ is approximated by a degree $D-1$
polynomial. Other possible basis are piecewise-polynomial and Fourier
series. By Gram-Schmidt orthogonalization, we can always pick nodes
point $\{c_{i}\}_{j=1}^{D}$ such that
\[
\varphi_{j}(c_{i})=\delta_{i,j} \quad \text{for } i,j\in[D].
\]
The benefit of such basis is that for any $f\in\text{span}(\varphi_{j})$,
we have that $f(t)=\sum_{j=1}^{D}f(c_{j})\varphi_{j}(t).$

For polynomials, the resulting basis are the Lagrange polynomials
\[
\varphi_{j}(t)=\prod_{ i \in [D] \backslash \{ j \} }\frac{t-c_{i}}{c_{j}-c_{i}} \quad \text{for } j\in[D].
\]
The only assumption we make on the basis is that its integral is bounded.
\begin{definition} \label{def:basis}
Given a $D$ dimensional subspace $\mathcal{V}\subset\mathcal{C}([0,T],\R)$
and node points $\{c_{j}\}_{j=1}^{D}\subset[0,T]$. For any $\gamma_{\varphi} \geq 1$, we call a basis
$\{\varphi_{j}\}_{j=1}^{D}\subset\mathcal{V}$ is $\gamma_{\varphi}$
bounded if $\varphi_{j}(c_{i})=\delta_{i,j}$ and 
we have
\[
\sum_{j=1}^{D}\left|\int_{0}^{t}\varphi_{j}(s) \d s \right|\leq \gamma_{\varphi}T
\quad \text{for } t\in[0,T].
\]
\end{definition}

Note that if the constant function $1\in\mathcal V$, then we have
$$1 = \sum_{j=1}^{D} 1(c_j) \varphi_j(t) = \sum_{j=1}^{D} \varphi_j(t).$$
Hence, we have
\[
T = \int^T_0 1 \d s \leq \sum_{j=1}^{D}\left|\int_{0}^{T}\varphi_{j}(s) \d s\right|\leq\gamma_{\varphi}T.
\]
Therefore, $\gamma_{\varphi} \geq 1$ for most of the interesting basis. This is the reason why we simply put it as an assumption to shorten some formulas.

In the section \ref{sec:piecewise_poly}, we prove that for the space of low degree polynomial
and piece-wise low degree polynomial, there is a basis on the Chebyshev
nodes that is $O(1)$ bounded.

Assuming that $\frac{dx}{dt}$ can be approximated by some element
in $\mathcal{V}$, we have that
\[
\frac{\d x}{\d t}(t)\sim\sum_{j=1}^{D}\frac{\d x}{\d t}(c_{j})\varphi_{j}(t)=\sum_{j=1}^{D}F(x(c_{j}),c_{j})\varphi_{j}(t).
\]
Integrating both sides, we have
\begin{equation}
x(t)\approx v+\int_{0}^{t}\sum_{j=1}^{D}F(x(c_{j}),c_{j})\varphi_{j}(s) \d s.\label{eq:approx_fix_point}
\end{equation}
This inspires us to consider the following operator from $\mathcal{C}([0,T],\R^{d})$
to $\mathcal{C}([0,T],\R^{d})$:
\begin{equation}
\mathcal{T}_{\varphi}(x)(t)_{i}=v_{i}+\int_{0}^{t}\sum_{j=1}^{D}F(x(c_{j}),c_{j})_{i}\varphi_{j}(s) \d s \quad \text{for } i\in[d]. \label{eq:def_T_d_operator}
\end{equation}
Equation (\ref{eq:approx_fix_point}) can be written as $x\approx\mathcal{T}_{\varphi}(x).$
To find $x$ to satisfies this, naturally, one can apply the fix point
iteration and this is called the collocation method.

\subsection{Collocation method}\label{sec:collocation}

From the definition of (\ref{eq:def_T_d_operator}), we note that $\mathcal{T}_{\varphi}(x)$
depends only on $x(t)$ at $t=c_{j}$. Therefore, we can only need
to calculate $\mathcal{T}_{\varphi}(x)(t)$ at $t=c_{j}$. To simplify
the notation, for any $x\in\mathcal{C}([0,T],\R^{d})$, we define
a corresponding matrix $[x]\in\R^{d\times D}$ by $[x]_{i,j}=x_{i}(c_{j})$.
For any $d\times D$ matrix $X$, we define $F(X,c)$ as an $d\times D$ matrix 
\begin{equation}
F(X,c)_{i,j}=F(X_{*,j},c_{j})_{i}.\label{eq:bF}
\end{equation}
where $X_{*,j}$ is the $j$-th column of $X$. Finally, we define $A_{\varphi}$ as a $D\times D$ matrix 
\begin{equation}
(A_{\varphi})_{i,j}=\int_{0}^{c_{j}}\varphi_{i}(s) \d s.\label{eq:Aphi}
\end{equation}
By inspecting the definition of (\ref{eq:def_T_d_operator}), (\ref{eq:bF})
and (\ref{eq:Aphi}), we have that
\[
[\mathcal{T}_{\varphi}(x)]=v \cdot 1_{D}^{\top}+F([x],c)A_{\varphi}
\]
where $1_{D}$ is a column of all 1 vector of length $D$. Hence, we
can apply the map $\mathcal{T}_{\varphi}$ by simply multiply $F([x],c)$
by a pre-compute $D\times D$ matrix $A_{\varphi}$. For the basis we considered in this paper, each iteration takes only $\tilde{O}(d D)$ which is nearly linear to the size of our representation of the solution. 




\begin{algorithm}\caption{Collocation Method}\label{alg:collocation_method}
\begin{algorithmic}[1]
\Procedure{\textsc{CollocationMethod}}{$F,v,T,\varphi,c$} \Comment{Theorem~\ref{thm:kth_order_ode}}
	\State Let $N =\left\lceil  \log\left( \frac{T}{\epsilon} \max_{s\in[0,T]}\left\Vert F(v,s)\right\Vert \right) \right\rceil $ \Comment{Choose number of iterations}
    \State Let $A_{\varphi}$ be the matrix defined by
$(A_{\varphi})_{i,j}=\int_{0}^{c_{j}}\varphi_{i}(s) \d s.$
	\State $X^{(0)}\leftarrow v \cdot 1_{D}^{\top}.$\Comment{$1_{D}$ is a column of all 1 vector of length $D$}
	\For{$j = 1, 2, \cdots, N-1$}
	    \State $X^{(j)}\leftarrow v \cdot 1_{D}^{\top}+F(X^{(j-1)},c)A_{\varphi}.$ \Comment{Matrix $F(X,c)$ is defined in Eq.~\eqref{eq:bF} }
	    \State \Comment{Note that we evaluate $D$ many $F$ every iteration in this matrix notation.}
	\EndFor
	\State $x^{(N)}(t) \leftarrow v + \int_0^t \sum_{i=1}^D F( X_{*,i}^{(N)} , c_i ) \varphi_i (s) \mathrm{d} s $
	\State \Return $x^{(N)}$
\EndProcedure
\end{algorithmic}
\end{algorithm}

We state our guarantee for a first-order ODE (Algorithm~\ref{alg:collocation_method}).

\begin{theorem}[First order ODE]\label{thm:first_order_ode}
Let $x^*(t) $ be the solution of an $d$ dimensional ODE 
\begin{align*}
x(0) = v , \frac{ \d x(t) }{ \d t } = F(x(t), t ) \quad \text{for all $0\leq t\leq T$}.
\end{align*}

We are given a $D$ dimensional subspace $\mathcal{V}\subset\mathcal{C}([0,T],\R)$, node points $\{c_{j}\}_{j=1}^{D}\subset[0,T]$ and a $\gamma_{\varphi}$ bounded basis
$\{\varphi_{j}\}_{j=1}^{D}\subset\mathcal{V}$ (Definition \ref{def:basis}).
Given some $L$ and $\epsilon>0$  such that

1. There exists a function $q\in\mathcal{V}$ such that
\begin{align*}
 \left\| q(t) - \frac{ \d }{ \d t } x^*(t) \right\| \leq \frac{ \epsilon }{ T } , \forall t \in [0,T].
\end{align*}

2. For any $y, z \in \R^{d}$, 
\begin{align*}
\| F(y, t ) - F(z , t) \| \leq L \|y -z \| , \forall t \in [0,T].
\end{align*}

Assume $\gamma_\varphi L T \leq 1/2$. Then the algorithm $\textsc{CollocationMethod}$ (Algorithm~\ref{alg:collocation_method}) outputs a function $x^{(N)}\in\mathcal{V}$ such that
\begin{align*}
\max_{t\in [0,T]}\| x^{(N)}(t) - x^* (t) \| \leq 20 \gamma_\varphi \epsilon .
\end{align*}
The algorithm takes $O \left( D \log\left( \frac{T}{\epsilon} \max_{s\in[0,T]}\left\Vert F(v,s)\right\Vert \right)   \right)$ evaluations of $F$.
\end{theorem}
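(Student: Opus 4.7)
The plan is to treat the collocation method as a Banach fixed-point iteration for the operator $\mathcal{T}_\varphi$ defined in \eqref{eq:def_T_d_operator}, work in the sup norm $\|x\| = \max_{t\in[0,T]}\|x(t)\|$, and compare the iterates both to the true fixed point of $\mathcal{T}_\varphi$ and to the exact solution $x^*$. First I would establish that $\mathcal{T}_\varphi$ is a strict contraction. For any $x,y\in\mathcal{C}([0,T],\R^d)$,
\begin{align*}
\|\mathcal{T}_\varphi(x)(t)-\mathcal{T}_\varphi(y)(t)\|
&\leq \sum_{j=1}^{D}\|F(x(c_j),c_j)-F(y(c_j),c_j)\|\,\Bigl|\int_0^t\varphi_j(s)\,ds\Bigr|\\
&\leq L\max_{j}\|x(c_j)-y(c_j)\|\cdot\gamma_\varphi T \leq \gamma_\varphi L T\,\|x-y\|\leq \tfrac12\|x-y\|,
\end{align*}
using the $\gamma_\varphi$-bounded basis assumption and $\gamma_\varphi LT\leq 1/2$. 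By Banach's theorem, $\mathcal{T}_\varphi$ has a unique fixed point $x^{\infty}$, and $\|x^{(N)}-x^{\infty}\|\leq 2^{-N}\|x^{(0)}-x^{\infty}\|$.

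Next I would construct an explicit near-fixed-point of $\mathcal{T}_\varphi$ from the assumed approximant $q\in\mathcal{V}$. Let $\tilde x(t) = v + \int_0^t q(s)\,ds$. Since $\|q(s)-\tfrac{d}{ds}x^*(s)\|\leq\epsilon/T$ on $[0,T]$, integrating gives $\|\tilde x-x^*\|\leq\epsilon$. Because $\varphi_j(c_i)=\delta_{ij}$, we have $q(s)=\sum_j q(c_j)\varphi_j(s)$. Comparing with the definition of $\mathcal{T}_\varphi(\tilde x)(t) = v+\int_0^t\sum_j F(\tilde x(c_j),c_j)\varphi_j(s)\,ds$, the Lipschitz property of $F$ and the approximation bound on $q(c_j)$ (noting $q(c_j)=\tfrac{d}{dt}x^*(c_j)+\text{error} = F(x^*(c_j),c_j)+\text{error}$) together yield
\[
\|\mathcal{T}_\varphi(\tilde x)-\tilde x\|\leq \gamma_\varphi T\,\max_j\|F(\tilde x(c_j),c_j)-q(c_j)\|\leq \gamma_\varphi T(L\epsilon+\epsilon/T)\leq \tfrac32\gamma_\varphi\epsilon.
\]
A standard Banach perturbation estimate then gives $\|\tilde x-x^{\infty}\|\leq 2\|\mathcal{T}_\varphi(\tilde x)-\tilde x\|\leq 3\gamma_\varphi\epsilon$, and hence $\|x^{\infty}-x^*\|\leq 4\gamma_\varphi\epsilon$.

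The remaining step is to control $\|x^{(0)}-x^{\infty}\|$ so that the chosen $N$ actually drives the iteration error below $\gamma_\varphi\epsilon$. Using $x^{(0)}\equiv v$ and $F$ Lipschitz, a short Gronwall-type estimate gives $\|x^*(t)-v\|\leq MT/(1-LT)\leq 2MT$ where $M=\max_{s\in[0,T]}\|F(v,s)\|$, so $\|x^{(0)}-x^{\infty}\|\leq 2MT+4\gamma_\varphi\epsilon$. The choice $N=\lceil\log(\tfrac{T}{\epsilon}M)\rceil$ (interpreted base $2$, or with a constant slack in base $e$) makes $2^{-N}\|x^{(0)}-x^{\infty}\|\leq O(\gamma_\varphi\epsilon)$, and collecting all terms yields $\|x^{(N)}-x^*\|\leq 20\gamma_\varphi\epsilon$, matching the stated bound. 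The number of $F$-evaluations is $D$ per iteration times $N$.

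The step I expect to be most delicate is Step 2/3: the approximant $q$ is only guaranteed to lie close to $\tfrac{d}{dt}x^*$ as an $L^\infty$ function, not to equal the interpolant of the values $F(x^*(c_j),c_j)$, so one must introduce the auxiliary function $\tilde x$ and carefully convert the pointwise error of $q$ into a fixed-point residual of $\mathcal{T}_\varphi$ via the $\gamma_\varphi$ bound; this is the place where the constants in the final $20\gamma_\varphi\epsilon$ are determined. The rest is routine Banach iteration bookkeeping.
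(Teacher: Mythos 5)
Your proposal is correct, and it is the same contraction-mapping strategy as the paper (Lipschitz bound $\gamma_\varphi LT\le 1/2$ for $\T_\varphi$, then compare the iterates to the fixed point $x_\varphi^*$ and compare $x_\varphi^*$ to $x^*$), but the middle step is organized differently. The paper's Lemma~\ref{lem:helper} never introduces your auxiliary function $\tilde x(t)=v+\int_0^t q(s)\,\d s$: it bounds the one-step residual $\|x^*-\T_\varphi(x^*)\|$ directly, using that $F(x^*(c_j),c_j)=\frac{\d}{\d t}x^*(c_j)$ and that $q\in\mathcal{V}$ is reproduced exactly by the interpolation, which gives $(1+\gamma_\varphi)\epsilon$ with no appearance of the Lipschitz constant $L$ at that point; it then telescopes $\|x^*-\T_\varphi^{\circ N}(x^*)\|\le\sum_i L^{(i)}\|x^*-\T_\varphi(x^*)\|$ and converts this into a bound on $\|x_\varphi^*-x^*\|$. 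You instead measure the residual at the near-fixed-point $\tilde x$, which costs you an extra $L\epsilon$ term (harmless, since $\gamma_\varphi LT\le 1/2$ absorbs it into $\tfrac32\gamma_\varphi\epsilon$), and you replace the telescoping by the standard perturbation estimate $\|\tilde x-x^\infty\|\le(1-\gamma_\varphi LT)^{-1}\|\T_\varphi(\tilde x)-\tilde x\|$; your final bookkeeping ($\le 10\gamma_\varphi\epsilon$ before slack) comfortably meets the stated $20\gamma_\varphi\epsilon$, and your Gronwall-type bound $\|x^*(t)-v\|\le MT/(1-LT)\le 2MT$ plays the same role as the paper's $e^{Lt}$ bound in justifying the choice of $N$ and hence the $O(D\log(\frac{T}{\epsilon}\max_s\|F(v,s)\|))$ evaluation count. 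Both routes are sound; the paper's residual-at-$x^*$ computation is marginally cleaner (no $L$ in that step), while your perturbation-lemma formulation avoids the explicit sum over iterates and makes the role of the hypothesis $q\in\mathcal{V}$ (exact nodal reproduction of $q$) equally transparent.
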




Next we state the general result for a $k$-th order ODE.  We prove this via a reduction from higher order ODE to first-order ODE. See the proof in Appendix~\ref{sec:app_ode}.

\begin{theorem}[$k$-th order ODE]\label{thm:kth_order_ode}
Let $x^*(t) \in \R^d$ be the solution of the ODE 
\begin{align*}
\frac{ \d^k }{ \d t^{k} } x(t) = & ~ F \left( \frac{ \d^{k-1} }{ \d t^{k-1} } x(t) , \cdots , x(t), t \right) \notag\\
\frac{ \d^{i} }{ \d t^{i} } x(0) = & ~ v_{i}, \forall i \in \{ k-1, \cdots, 1 , 0 \}.
\end{align*}
where $F : \R^{k d + 1 } \rightarrow \R^d$, $x(t) \in \R^d$, and $v_0, v_1, \cdots, v_{k-1} \in \R^d$.

We are given a $D$ dimensional subspace $\mathcal{V}\subset\mathcal{C}([0,T],\R)$, node points $\{c_{j}\}_{j=1}^{D}\subset[0,T]$ and a $\gamma_{\varphi}$ bounded basis
$\{\varphi_{j}\}_{j=1}^{D}\subset\mathcal{V}$ (Definition \ref{def:basis}).
Given some $L$ and $\epsilon>0$  such that

1. For $i \in [k]$, there exists a function $q^{(i)}\in\mathcal{V}$ such that
\begin{align*}
\left\| q^{(i)}(t) - \frac{ \d^i }{ \d t^i } x^*(t) \right\| \leq \frac{ \epsilon }{ T^i }, \forall t \in [0,T].
\end{align*}

2. For any $y , z \in \R^{k d}$,
\begin{align*}
\| F( y, t ) - F( z , t) \| \leq \sum_{i=1}^k  L_i \| y_i( t ) - z_i( t ) \|, \forall t \in [0,T].
\end{align*}

Assume $\gamma_\varphi L T \leq 1/8$ with $L = \sum_{i=1}^k L_i^{1/i}$. Then, we can find functions $\{ q^{(i)} \}_{i\in\{0,1,\cdots,k-1\}}\subset \mathcal{V}$ such that
\begin{align*}
\max_{t \in [0,T] } \left\| q^{(i)}(t) - \frac{ \d^i }{ \d t^i } x^*(t) \right\|_p =  20 (1+2k) \gamma_\varphi \frac{\epsilon}{T^i}, \forall i \in \{0,1,\cdots, k-1\}.
\end{align*}
The algorithm takes $O( D \log(C/\epsilon)  )$ evaluations of $F$ where $$C = (4 \gamma_\varphi T)^{k}\cdot\max_{s\in[0,T]}\left\Vert F(v_{k-1},v_{k-2},\cdots,v_{0},s)\right\Vert +\sum_{i=1}^{k-1}(4 \gamma_\varphi T)^{i}\left\Vert v_{i}\right\Vert.$$
\end{theorem}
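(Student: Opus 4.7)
The plan is to reduce the $k$-th order ODE to a first-order system on $\R^{kd}$ and invoke Theorem~\ref{thm:first_order_ode} with a weighted norm. Introduce $\tilde y_i(t)=\frac{d^i x(t)}{dt^i}$ for $i=0,\dots,k-1$; then $\tilde y=(\tilde y_0,\dots,\tilde y_{k-1})$ satisfies $\tilde y'=G(\tilde y,t)$ with $G_i(\tilde y,t)=\tilde y_{i+1}$ for $i<k-1$, $G_{k-1}(\tilde y,t)=F(\tilde y_{k-1},\dots,\tilde y_0,t)$, and $\tilde y_i(0)=v_i$. A candidate approximator for $\frac{d\tilde y^*}{dt}$ is $\tilde q=(q^{(1)},\dots,q^{(k)})$, whose coordinates lie in $\mathcal V$; running a first-order collocation on the reduced system then produces the $q^{(0)},\dots,q^{(k-1)}\in\mathcal V$ demanded by the theorem.

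I would equip $\R^{kd}$ with the weighted $\ell_1$-norm $\|\tilde y\|_*=\sum_{i=0}^{k-1}w_i\|\tilde y_i\|$ with $w_i=T^i$, chosen to match the natural scale of derivatives. Under this choice the approximation hypothesis transfers cleanly, $\|\tilde q-\frac{d\tilde y^*}{dt}\|_*\le \sum_i T^i\cdot\epsilon/T^{i+1}=k\epsilon/T$, so $\tilde\epsilon=k\epsilon$; and a weighted output bound $20\gamma_\varphi\tilde\epsilon$ from Theorem~\ref{thm:first_order_ode} yields per-coordinate bounds $\|q^{(i)}-\frac{d^ix^*}{dt^i}\|\le 20k\gamma_\varphi\epsilon/T^i$ that match the claim up to the $(1+2k)$ constant. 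For the force contribution to the weighted Lipschitz $\tilde L_*$, I would use $L_j\le L^j$ (which follows from $L=\sum_i L_i^{1/i}\ge L_j^{1/j}$) together with $\gamma_\varphi LT\le 1/8$ to bound $\sum_{j=1}^k T^{j-1}L_j\le T^{-1}\sum_{j=1}^k (LT)^j\le 2L$; hence the force part of $\tilde L_*T$ is at most $2LT\le 1/(4\gamma_\varphi)$.

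The main obstacle is that the shift component $G_i=\tilde y_{i+1}$ contributes $w_i/w_{i+1}=1/T$ in the weighted $\ell_1$-norm, making $\tilde L_*T\ge 1$ and violating the hypothesis $\gamma_\varphi\tilde L_*T\le 1/2$ of Theorem~\ref{thm:first_order_ode}. I would resolve this by running the collocation iteration only on the top derivative $p=\frac{d^k x}{dt^k}\in\mathcal V$, expressing the lower derivatives as the linear functionals $\tilde y_i(t;p)=\sum_{m=i}^{k-1}v_m\frac{t^{m-i}}{(m-i)!}+\int_0^t\frac{(t-s)^{k-1-i}}{(k-1-i)!}p(s)\,ds$ via Cauchy's iterated-integral formula. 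The collocation fixed-point then becomes $p=\Phi(p)$ with $\Phi(p)(t)=F(\tilde y_{k-1}(t;p),\dots,\tilde y_0(t;p),t)$, and its Lipschitz constant in the sup norm is $\sum_j L_j T^j/j!\le e^{LT}-1\le 2LT\le 1/(4\gamma_\varphi)$---avoiding the shift obstacle entirely. The collocation iteration contracts by factor $\gamma_\varphi\cdot 2LT\le 1/4$ per step, so $N=O(\log(C/\epsilon))$ iterations suffice. Each iteration evaluates $F$ at the $D$ nodes by applying a precomputed matrix of iterated integrals of $\{\varphi_j\}$ (mirroring the matrix $A_\varphi$ from Algorithm~\ref{alg:collocation_method}, but in $k$ nested layers), giving the claimed $O(D\log(C/\epsilon))$ total $F$-evaluations, and integrating $p$ once more recovers each $q^{(i)}\in\mathcal V$ with the stated per-coordinate error, the $20(1+2k)\gamma_\varphi$ loss absorbing the $k$-dependent constants. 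The explicit constant $C=(4\gamma_\varphi T)^k\max_s\|F(v_{k-1},\dots,v_0,s)\|+\sum_{i=1}^{k-1}(4\gamma_\varphi T)^i\|v_i\|$ then arises by computing the initial residual $\|\Phi(0)\|_\infty$ in terms of the initial-condition polynomial and tracking the $(4\gamma_\varphi T)^i$ inflation picked up by the $i$-fold Cauchy integration at each level, combined with the logarithmic sensitivity to accuracy $\epsilon$ inherited from Theorem~\ref{thm:first_order_ode}.
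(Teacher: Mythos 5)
Your diagnosis of the obstacle is exactly right, and your fix is a genuinely different route from the paper's. The paper keeps the naive stacked first-order system $\ov{x}=(c_1 x^{(k-1)},\dots,c_k x)$ but chooses the coordinate scalings cleverly, $c_1=1$ and $c_i=\sum_{j\ge i}L_j^{(i-1)/j}+\ov{T}^{-(i-1)}$ with $\ov{T}=4\gamma_\varphi T$, so that the shift terms contribute only $c_{i+1}/c_i\lesssim \sum_j L_j^{1/j}+1/\ov{T}$; the stacked map then has Lipschitz constant $L+\frac{1}{4\gamma_\varphi T}$, the hypothesis $\gamma_\varphi LT\le 1/8$ gives $\gamma_\varphi\ov{L}T\le 3/8$, and Theorem~\ref{thm:first_order_ode} is invoked as a black box; unwinding the weights costs the $(1+2k)$ factor. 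You instead collapse the system to a fixed-point equation for the top derivative $p=x^{(k)}$ via the iterated-integral (Cauchy) representation of the lower derivatives, which kills the shift contribution because integration, not differentiation, mediates between levels; the exact map then has Lipschitz constant $\sum_i L_iT^i/i!\le 2LT$, which is even smaller than the paper's. The trade-off is that you can no longer cite Theorem~\ref{thm:first_order_ode} directly: you must redo the analogue of Lemma~\ref{lem:helper} (fixed-point existence, consistency error $\|p^*-\Phi_\varphi(p^*)\|$ using assumption 1 with $i=k$, geometric convergence, and the propagation of the $p$-error through $i$-fold integration to get the $\epsilon/T^i$ per-derivative bounds), as well as rederive $N=O(\log(C/\epsilon))$ and the constant $C$. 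These computations do go through, and arguably with cleaner constants than the paper's, but they are not in your write-up.

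The one concrete missing ingredient is a bound on nested integrals of the basis. Definition~\ref{def:basis} only controls $\sum_j\bigl|\int_0^t\varphi_j(s)\,\d s\bigr|\le\gamma_\varphi T$, whereas both your contraction factor ``$\gamma_\varphi\cdot 2LT$'' and your error propagation implicitly use
\begin{align*}
\sum_{j=1}^D\left|\int_0^t\frac{(t-s)^{i-1}}{(i-1)!}\,\varphi_j(s)\,\d s\right|\lesssim \gamma_\varphi\,\frac{T^{i}}{(i-1)!},
\end{align*}
i.e.\ boundedness of the precomputed $k$-layer integration matrices you describe. This does not follow by iterating the definition (re-expanding in the basis at each layer would cost a factor $\gamma_\varphi$ per level), but it does follow in one step by writing the $i$-fold integral as $\int_0^t\frac{(t-s)^{i-2}}{(i-2)!}\bigl(\int_0^s\varphi_j(u)\,\d u\bigr)\d s$ and applying the single-integral bound inside; you should state and use this lemma explicitly. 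With that lemma, and with the fixed-point error analysis spelled out (note also that, exactly as in the paper, the resulting approximations of the lower derivatives are integrals of interpolants and hence only approximately in $\mathcal{V}$ — the paper's own statement has the same looseness), your argument gives the theorem, with the $20(1+2k)\gamma_\varphi$ constant comfortably absorbing your constants.
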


Note that the statement is a bit awkward. Instead of finding a function whose derivatives are same as the derivatives of $x^*$, the algorithm approximates the derivatives of $x^*$ individually. This is because we do not know if derivatives/integrals of functions in $\mathcal{V}$ remain in $\mathcal{V}$. For piece-wise polynomials, we can approximate the $j$-th derivative of the solution by taking $(k-j)$-th iterated integral of $q^{(k)}$, which is still a piece-wise polynomial. 
 
In section \ref{sec:piecewise_poly}, we give a basis for piece-wise polynomials (Lemma \ref{lem:basis_piecewise_poly}). Using this basis, we have the following Theorem.

\define{thm:kth_order_ode_piecewise}{Theorem}{{\rm($k$-th order ODE)}
Let $x^*(t) \in \R^d$ be the solution of the ODE 
\begin{align*}
\frac{ \d^k }{ \d t^{k} } x(t) = & ~ F \left( \frac{ \d^{k-1} }{ \d t^{k-1} } x(t) , \cdots , x(t), t \right) \notag\\
\frac{ \d^{i} }{ \d t^{i} } x(0) = & ~ v_{i}, \forall i \in \{ k-1, \cdots, 1 , 0 \}.
\end{align*}
where $F : \R^{k d + 1 } \rightarrow \R^d$, $x(t) \in \R^d$, and $v_0, v_1, \cdots, v_{k-1} \in \R^d$. Given some $L$ and $\epsilon>0$  such that

1. There exists a piece-wise polynomial $q(t)$ such that $q(t)$ on $[T_{j-1},T_j]$ is a degree $D_j$ polynomial with $$0 = T_0 < T_1 < \cdots < T_n = T$$ and that
\begin{align*}
\left\| q(t) - \frac{ \d^k }{ \d t^k } x^*(t) \right\| \leq \frac{ \epsilon }{ T^k }, \forall t \in [0,T]
\end{align*}

2. The algorithm knows about the intervals $[T_{j-1},T_j]$ and the degree $D_j$ for all $j \in [n]$.

3. For any $y , z \in \R^{k d}$,
\begin{align*}
\| F( y, t ) - F( z , t) \| \leq \sum_{i=1}^k  L_i \| y_i - z_i \|, \forall t \in [0,T].
\end{align*}
Assume $L T \leq 1/16000$ with $L = \sum_{i=1}^k L_i^{1/i}$. Then, we can find a piece-wise polynomial $x(t)$ such that
\begin{align*}
\max_{t \in [0,T] } \left\| \frac{ \d^i }{ \d t^i } x(t) - \frac{ \d^i }{ \d t^i } x^*(t) \right\|_p \lesssim   \frac{\epsilon k}{T^i} , \forall i \in \{0,1,\cdots, k-1\}.
\end{align*}
using $O( D \log(C/\epsilon)  )$ evaluations of $F$ with the size of basis $D=\sum_{ i = 1 }^n (1+D_i)$ and
$$O \left(d \min \left(\sum_{ i = 1 }^n (1+D_i)^2, D \log(C D/\epsilon) \right) \log(C / \epsilon) \right)$$ time where 
\begin{align*}
 C = O(T)^{k}\cdot \max_{s\in[0,T]} \left\Vert F(v_{k-1},v_{k-2},\cdots,v_{0},s)\right\Vert +\sum_{i=1}^{k-1} O(T)^{i}\left\Vert v_{i}\right\Vert .
\end{align*}
}

\state{thm:kth_order_ode_piecewise}
\begin{remark}
The two different runtime come from two different ways to the integrate of basis in Lemma \ref{lem:basis_piecewise_poly}. The first one is an navie method which is good enough for all our application. The second one follows from multipole method which gives an nearly linear time to the size of the basis with an extra log dependence on the accuracy.
\end{remark}
In the rest of this section, we prove the first-order guarantee, Theorem \ref{thm:first_order_ode}

\subsection{Proof of first order ODE}\label{sec:app_first_order_ode}

First, we bound the Lipschitz constant of the map $\T_\varphi$. Unlike the Picard-Lindel\"{o}f theorem, we are not able to get an improved bound of the Lipschitz constant of the composite of $\T_\varphi$. Fortunately, the Lipschitz constant of the map $\T_\varphi$ is good enough for all applications in this paper.

\begin{lemma}\label{lem:Lipschitz_Tphi}
Given any norm $\| \cdot \|$ on $\R^d$. Let $L$ be the Lipschitz constant of $F$ in $x$, i.e. 
$$\|F(x,s)-F(y,s)\| \leq L \|x-y\| \quad \text{for all } x, y\in \R^d, s\in[0,T].$$
Then, the Lipschitz constant of $\T_\varphi$ in this norm is upper bounded by $\gamma_\varphi LT$.
\end{lemma}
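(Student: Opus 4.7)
The plan is to unpack the definition of $\T_\varphi$, apply the triangle inequality to the finite sum coming out of the integral, use the Lipschitz hypothesis on $F$ pointwise at the nodes, and finally invoke the $\gamma_\varphi$-boundedness assumption from Definition \ref{def:basis} to absorb the basis integrals.

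More concretely, fix $x,y\in\mathcal{C}([0,T],\R^d)$ and any $t\in[0,T]$. Because the constant term $v$ in the definition of $\T_\varphi$ cancels, the difference $\T_\varphi(x)(t)-\T_\varphi(y)(t)$ equals
\[
\int_0^t \sum_{j=1}^D \bigl( F(x(c_j),c_j)-F(y(c_j),c_j)\bigr)\,\varphi_j(s)\, \d s.
\]
Pulling the constant vectors $F(x(c_j),c_j)-F(y(c_j),c_j)$ out of the $s$-integral and applying the triangle inequality gives
\[
\bigl\| \T_\varphi(x)(t)-\T_\varphi(y)(t)\bigr\| \le \sum_{j=1}^D \bigl\| F(x(c_j),c_j)-F(y(c_j),c_j)\bigr\| \cdot \Bigl|\int_0^t \varphi_j(s)\, \d s\Bigr|.
\]

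Now I would apply the hypothesis on $F$ pointwise at each node $c_j$, which gives $\|F(x(c_j),c_j)-F(y(c_j),c_j)\|\le L\|x(c_j)-y(c_j)\|\le L\|x-y\|$, where in the last step I use the definition $\|x\|=\max_{t\in[0,T]}\|x(t)\|$. Substituting and pulling the uniform bound $L\|x-y\|$ out of the sum yields
\[
\bigl\| \T_\varphi(x)(t)-\T_\varphi(y)(t)\bigr\| \le L\|x-y\|\cdot \sum_{j=1}^D \Bigl|\int_0^t \varphi_j(s)\, \d s\Bigr|.
\]
Finally, by the $\gamma_\varphi$-boundedness assumption the sum on the right is at most $\gamma_\varphi T$, so taking the maximum over $t\in[0,T]$ on the left-hand side gives $\|\T_\varphi(x)-\T_\varphi(y)\|\le \gamma_\varphi L T \cdot \|x-y\|$, as required.

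There is no real obstacle here, since unlike the Picard--Lindel\"of estimate in Lemma \ref{lem:Lipschitz_T} we are not chasing an extra factorial decay from iterating $\T_\varphi$; the proof is just one application of the triangle inequality and the definition of a $\gamma_\varphi$-bounded basis. The subtlety is only that any attempt to mimic the inductive $(LT)^j/j!$ argument of Lemma \ref{lem:Lipschitz_T} would fail, because $\T_\varphi$ samples $x$ only at the fixed nodes $c_j$ instead of using all past values of $x$, so the pointwise bound at time $s$ cannot be sharpened beyond the global $\|x-y\|$; this is exactly why the statement settles for the weaker linear factor $\gamma_\varphi LT$ and is the reason the hypothesis $\gamma_\varphi LT\le 1/2$ is imposed in Theorem \ref{thm:first_order_ode}.
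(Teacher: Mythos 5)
Your proof is correct and follows essentially the same route as the paper's: expand $\T_\varphi(x)-\T_\varphi(y)$, apply the triangle inequality to the sum, use the Lipschitz bound on $F$ at the nodes, and absorb $\sum_j |\int_0^t \varphi_j(s)\,\d s| \le \gamma_\varphi T$ via the $\gamma_\varphi$-boundedness of the basis. The only difference is cosmetic (you bound each node term by $L\|x-y\|$ individually, the paper first takes a maximum over $t$), so nothing further is needed.
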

\begin{proof}
For any $0\leq t\leq T$,
\begin{align*}
\| \T_\varphi(x)(t) - \T_\varphi(y)(t) \|
= & ~ \left\| \int_0^t \sum_{j=1}^D F( x(c_j) , c_j ) \varphi_j(s) \d s - \int_0^t \sum_{j=1}^D F( y(c_j) , c_j ) \varphi_j(s) \d s \right\| \\
\leq & ~ \sum_{j=1}^D \left| \int_0^t \varphi_j(s) \d s \right| \cdot \max_{t \in [0,T]} \| F(x(t), t) - F(y(t), t) \| \\
\leq & ~ \gamma_{\varphi} L T \cdot \max_{s\in [0,t]} \| x(t) - y(t) \|\\
\leq & ~ \gamma_{\varphi} L T \| x - y \|.
\end{align*}
where the third step follows by $\sum_{j=1}^D | \int_0^t \varphi_j(s) \d s | \leq \gamma_{\varphi} T$ for all $0\leq t \leq T$.
\end{proof}

For the rest of the proof, let $x_\varphi^*$ denote the fixed point of $\T_\varphi$, i.e., $\T_\varphi( x_\varphi^* ) = x_\varphi^*$. The Banach fixed point theorem and Lemma \ref{lem:Lipschitz_Tphi} shows that $x_\varphi^*$ uniquely exists if $T \leq \frac{1}{L \gamma_\varphi}$.

Let $x^*$ denote the solution of the ODE, i.e., the fixed point of $\T$, with $\T(x^*) = x^*$.
Let $x^{(0)}$ denote the initial solution given by $x^{(0)}(t)=v$ and $x^{(N)}$ denote the solution obtained by applying operator $\T_\varphi$ for $N$ times. Note that $x^{(N)}(t)$ is the output of \textsc{CollocationMethod} in Algorithm~\ref{alg:collocation_method}.

Let $q\in\mathcal{V}$ denote an approximation of $\frac{ \d }{ \d t } x^*$ such that 
\begin{align*}
\left\|q(t) - \frac{ \d }{ \d t } x^*(t) \right\| \leq & ~ \frac{\epsilon}{T}. \forall t\in [0,T]
\end{align*}

The next lemma summarizes how these objects are related and will allow us to prove the main guarantee for first-order ODEs.

\begin{lemma}\label{lem:helper}
Let $L^{(j)}$ be the Lipschitz constant of the map $\T^{\circ j}_\varphi$. Assume that $L^{(N)} \leq 1/2$. Then, we have
\begin{eqnarray}
\| x^{(N)} - x^* \| &\leq& L^{(N)} \| x^{(0)} - x^* \| + 2 \| x_\varphi^* - x^* \|, \label{cla:ode_1_claim_1}\\
\| x_\varphi^* - x^* \| &\leq& 2 \cdot \|  \T_\varphi^{\circ N} (x^*) - x^* \|,  \label{cla:ode_1_claim_2} \\
\| x^* - \T_\varphi^{\circ N} (x^*) \| &\leq& \sum_{i=0}^{N-1} L^{(i)} \cdot \| x^* - \T_\varphi ( x^* ) \| , \label{cla:ode_1_claim_3}\\
\| x^* - \T_\varphi(x^*) \| &\leq& 2 \gamma_\varphi \cdot \epsilon. \label{cla:ode_1_claim_4}
\end{eqnarray}
\end{lemma}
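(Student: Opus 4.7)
The four inequalities in Lemma~\ref{lem:helper} are essentially four short triangle inequality / telescoping arguments around the two fixed points $x^*$ (of $\T$) and $x_\varphi^*$ (of $\T_\varphi$), combined at the end with the approximation bound on $q$. The plan is to handle them in the order they are stated, but the conceptual flow is:
(a) split the total error into a ``discretization error'' $\|x_\varphi^* - x^*\|$ and an ``iteration error'' $\|x^{(N)} - x_\varphi^*\|$ controlled by $L^{(N)}$;
(b) bound the discretization error by applying $\T_\varphi^{\circ N}$ to $x^*$ and using $L^{(N)} \le 1/2$;
(c) bound $\|x^* - \T_\varphi^{\circ N}(x^*)\|$ by telescoping;
(d) reduce everything to bounding one application $\|x^* - \T_\varphi(x^*)\|$ using $q$.

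\textbf{Step 1 (inequality \eqref{cla:ode_1_claim_1}).} Since $x^{(N)} = \T_\varphi^{\circ N}(x^{(0)})$ and $x_\varphi^* = \T_\varphi^{\circ N}(x_\varphi^*)$, I would write
\[
\|x^{(N)} - x^*\| \le \|\T_\varphi^{\circ N}(x^{(0)}) - \T_\varphi^{\circ N}(x_\varphi^*)\| + \|x_\varphi^* - x^*\|
\le L^{(N)} \|x^{(0)} - x_\varphi^*\| + \|x_\varphi^* - x^*\|,
\]
then apply the triangle inequality $\|x^{(0)} - x_\varphi^*\| \le \|x^{(0)} - x^*\| + \|x^* - x_\varphi^*\|$, and bound $1 + L^{(N)} \le 2$ since $L^{(N)} \le 1/2 \le 1$.

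\textbf{Step 2 (inequality \eqref{cla:ode_1_claim_2}).} Using $\T_\varphi(x_\varphi^*) = x_\varphi^*$,
\[
\|x_\varphi^* - x^*\| \le \|\T_\varphi^{\circ N}(x_\varphi^*) - \T_\varphi^{\circ N}(x^*)\| + \|\T_\varphi^{\circ N}(x^*) - x^*\|
\le \tfrac{1}{2}\|x_\varphi^* - x^*\| + \|\T_\varphi^{\circ N}(x^*) - x^*\|,
\]
and rearrange.

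\textbf{Step 3 (inequality \eqref{cla:ode_1_claim_3}).} Telescope:
\[
x^* - \T_\varphi^{\circ N}(x^*) = \sum_{i=0}^{N-1}\bigl(\T_\varphi^{\circ i}(x^*) - \T_\varphi^{\circ (i+1)}(x^*)\bigr)
= \sum_{i=0}^{N-1}\bigl(\T_\varphi^{\circ i}(x^*) - \T_\varphi^{\circ i}(\T_\varphi(x^*))\bigr),
\]
apply the triangle inequality, and use that each summand has norm at most $L^{(i)}\|x^* - \T_\varphi(x^*)\|$.

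\textbf{Step 4 (inequality \eqref{cla:ode_1_claim_4}).} This is the only step that uses the approximating basis element $q \in \mathcal{V}$, and it is the one I expect to take a little care. Since $x^* = \T(x^*)$, we have $\frac{\d}{\d t}x^*(s) = F(x^*(s),s)$, so
\[
x^*(t) - \T_\varphi(x^*)(t) = \int_0^t\Bigl(\tfrac{\d}{\d t}x^*(s) - \sum_{j=1}^D \tfrac{\d}{\d t}x^*(c_j)\,\varphi_j(s)\Bigr)\,\d s.
\]
Because $q \in \mathcal{V}$, it satisfies $q(s) = \sum_j q(c_j)\varphi_j(s)$, so I would insert $\pm q(s)$ and $\pm \sum_j q(c_j)\varphi_j(s)$ inside the integrand, split into two pieces $(\tfrac{\d}{\d t}x^*(s) - q(s))$ and $\sum_j\bigl(q(c_j) - \tfrac{\d}{\d t}x^*(c_j)\bigr)\varphi_j(s)$, and then bound these via the pointwise approximation $\|\frac{\d}{\d t}x^* - q\| \le \epsilon/T$ and the boundedness of the basis (Definition~\ref{def:basis}). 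The first piece contributes at most $\epsilon/T \cdot t \le \epsilon$, and the second at most $(\epsilon/T)\sum_j|\int_0^t\varphi_j| \le \gamma_\varphi \epsilon$, giving the $(1+\gamma_\varphi)\epsilon \le 2\gamma_\varphi\epsilon$ bound.

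The only mild subtlety is Step~4: one must notice that $q$ is used only as a probe inside $\mathcal V$, not as an approximation one actually outputs, and that the ``pointwise'' bound at the nodes $c_j$ is already contained in the uniform bound on $[0,T]$. Steps~1--3 are pure Banach/telescoping algebra and should be one-liners.
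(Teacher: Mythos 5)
Your proposal is correct and matches the paper's proof essentially step for step: inequality \eqref{cla:ode_1_claim_1} via splitting through $x_\varphi^*$ and $L^{(N)}\le 1$, \eqref{cla:ode_1_claim_2} by applying $\T_\varphi^{\circ N}$ to the fixed point and absorbing the $\tfrac12$ factor, \eqref{cla:ode_1_claim_3} by telescoping, and \eqref{cla:ode_1_claim_4} by inserting $\pm q$ and using $q\in\mathcal{V}$ so that the $q$-terms cancel exactly, leaving $(1+\gamma_\varphi)\epsilon\le 2\gamma_\varphi\epsilon$. No gaps; nothing further to add.
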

\begin{proof}
We prove the claims in order.

For the first claim, 
\begin{align*}
\| x^{(N)} - x^* \| 
\leq & ~ \| x^{(N)} - x_\varphi^* \| + \| x_\varphi^* - x^* \| & \text{~by~triangle~inequality} \\
= & ~ \| \T_\varphi^{\circ N}( x^{(0)} ) - \T_\varphi^{\circ N}( x_\varphi^* ) \| + \| x_\varphi^* - x^* \|  \\
\leq & ~ L^{(N)} \| x^{(0)} - x_\varphi^* \| + \| x_\varphi^* - x^* \| \\
\leq & ~ L^{(N)} \| x^{(0)} - x^* \| + L^{(N)} \| x^* - x_\varphi^* \| + \| x_\varphi^* - x^* \|  \\
\leq & ~ L^{(N)} \| x^{(0)} - x^* \| + 2 \| x_\varphi^* - x^* \|
\end{align*}
where the last step follows by $L^{(N)} \leq 1$.

For the second claim, 
\begin{align*}
\| x_\varphi^* - x^* \| 
= & ~ \| \T_\varphi^{\circ N} ( x_\varphi^* ) -  x^*  \| & \text{~by~} x_\varphi^* = \T_\varphi^{\circ N} (x_\varphi^*) \\
\leq & ~ \| \T_\varphi^{\circ N } (x_\varphi^*) - \T_\varphi^{\circ N}(x^*) \| + \| \T_\varphi^{\circ N} (x^*) - x^* \| & \text{~by~triangle~inequality} \\
\leq & ~ L^{(N)} \cdot \| x_\varphi^* - x^* \| + \| \T_\varphi^{\circ N} (x^*) - x^* \| & \text{~by the definition of $L^{(N)}$} \\
\leq & ~ \frac{1}{2} \| x_\varphi^* - x^* \| + \| \T_\varphi^{\circ N} (x^*) - x^* \| & \text{~by~}L^{(N)} \leq 1/2
\end{align*}

For the third claim, 
\begin{align*}
\| x^* - \T_\varphi^{\circ N} (x^*) \| 
\leq & ~ \sum_{i=0}^{N-1} \| \T_\varphi^{\circ i} (x^*) - \T_\varphi^{\circ (i+1)} ( x^* ) \| \\
\leq & ~ \sum_{i=0}^{N-1} L^{(i)} \cdot \| x^* - \T_\varphi ( x^* ) \| \\
\end{align*}
For the last claim, 
\begin{align*}
 & ~ \| x^* (t) - \T_\varphi(x^*) (t) \| \\
= & ~ \| \T(x^*) (t) - \T_\varphi(x^*) (t)\| \\ 
= & ~ \left\| \int_0^t  F( x^*(s), s ) \d s - \int_0^t \sum_{j=1}^D F(x^*(c_j) , c_j) \varphi_j(s) \d s \right\| \\ 
= & ~ \left\| \int_0^t \frac{ \d }{ \d t } x^*(s) \d s - \int_0^t \sum_{j=1}^D \frac{ \d }{ \d t} x^*(c_j) \varphi_j(s) \d s \right\| \\ 
\leq & ~ \left\| \int_0^t (\frac{ \d }{ \d t } x^*(s) - q(s) ) \d s - \int_0^t \sum_{j=1}^D (\frac{ \d }{ \d t}  x^*(c_j) - q(c_j) ) \varphi_j(s) \d s \right\| \\
+ & ~ \left\| \int_0^t q(s) \d s - \int_0^t \sum_{j=1}^D q(c_j) \varphi_j(s) \d s \right\| \\ 
\leq & ~ \int_{0}^{t}\left\Vert \frac{\d}{\d t} x^{*}(s)-q(s)\right\Vert \d s+\sum_{j=1}^{D}\left\Vert \frac{\d}{\d t}x^{*}(c_{j})-q(c_{j})\right\Vert \left|\int_{0}^{t}\varphi_{j}(s)\d s\right| + 0 \\
\leq & ~ (1+\gamma_\varphi) \cdot \epsilon + 0
\end{align*}
where the first step follows by ${\cal T}(x^*) = x^*$, the second step follows by the definition of ${\cal T}$ and ${\cal T}_{\varphi}$, the third step follows by $x^*(t)$ is the solution of ODE, the fourth step follows by triangle inequality, the second last step follows by $q \in\mathcal{V}$, and the last step follows by $\| \frac{\d }{ \d t}  x^* - q  \| \leq \frac{ \epsilon }{ T }$ and the definition of $\gamma_\varphi$.
\end{proof}

Now, we are ready to prove Theorem~\ref{thm:first_order_ode}.
\begin{proof}
Using Lemma \ref{lem:helper}, we have
\begin{align*}
\| x^{(N)} - x^* \| 
\leq & ~ L^{(N)} \| x^{(0)} - x^* \| + 2 \| x_\varphi^* - x^* \| & \text{~by~Eq.~\eqref{cla:ode_1_claim_1}} \\
\leq & ~ L^{(N)} \| x^{(0)} - x^* \| + 4 \| \T_\varphi^{\circ N} (x^*) - x^* \| &\text{~by~Eq.~\eqref{cla:ode_1_claim_2}} \\
\leq & ~ L^{(N)} \| x^{(0)} - x^* \| + 4 \sum_{i=0}^{N-1} L^{(i)} \cdot \| x^* - \T_\varphi(x^*) \| &\text{~by~Eq.~\eqref{cla:ode_1_claim_3}} \\
\leq & ~ L^{(N)}  \| x^{(0)} - x^*\| + 8 \sum_{i=0}^{N-1} L^{(i)} \cdot \gamma_\varphi \cdot \epsilon. & \text{~by~Eq.~\eqref{cla:ode_1_claim_4}}
\end{align*}

Using the assumption that $\gamma_{\varphi}LT\leq\frac{1}{2}$, Lemma
\ref{lem:Lipschitz_Tphi} shows that $L^{(1)}\leq\frac{1}{2}$ and hence $L^{(j)}\leq\frac{1}{2^{j}}$.
Therefore, we have
\begin{equation}\label{eq:first_order_error}
\|x^{(N)}-x^{*}\|\leq \frac{1}{2^{N}} \|x^{(0)}-x^{*}\| +16 \gamma_{\varphi}\cdot\epsilon =  \frac{1}{2^{N}} \|x^{*}-x^{*}(0)\| +16 \gamma_{\varphi}\cdot\epsilon
\end{equation}

To bound $\|x^{*}-x^{*}(0)\|$, for any $0\leq t\leq T$
\[
x^{*}(t)=x^{*}(0)+\int_{0}^{t}F(x^{*}(s),s) \d s.
\]
Hence, we have that
\begin{align*}
\|x^{*}(t)-x^{*}(0)\| & \leq\left\Vert \int_{0}^{T}F(x^{*}(0),s) \d s \right\Vert +\left\Vert \int_{0}^{t}\left(F(x^{*}(s),s)-F(x^{*}(0),s)\right) \d s \right\Vert \\
 & \leq\left\Vert \int_{0}^{T}F(x^{*}(0),s) \d s \right\Vert +L\int_{0}^{t}\|x^{*}(s)-x^{*}(0)\| \d s.
\end{align*}
Solving this integral inequality (see Lemma~\ref{lem:exp_ODE}), we have that
\[
\|x^{*}(t)-x^{*}(0)\|\leq e^{Lt}\left\Vert \int_{0}^{T}F(x^{*}(0),s) \d s\right\Vert .
\]
Now, we use $LT\leq\frac{1}{2}$ and get
\[
\|x^{*}(t)-x^{*}(0)\|\leq2\left\Vert \int_{0}^{T}F(x^{*}(0),s) \d s\right\Vert .
\]

Picking $N = \left\lceil \log_{2}\left( \frac{T}{\epsilon} \max_{s\in[0,T]} \left\Vert F(x^{*}(0),s) \right\Vert \right)\right\rceil $, (\ref{eq:first_order_error}) shows that the error is less than $20 \gamma_{\varphi}\epsilon$.

\end{proof}

\subsection{A basis for piece-wise polynomials}\label{sec:piecewise_poly}

In this section, we discuss how to construct a bounded basis for
low-degree piece-wise polynomials. We are given $n$ intervals $\{[I_{i-1},I_{i}]\}_{i=1}^{n}$
where $I_{0}=0$ and $I_{n}=T$. In the $i^{th}$ interval $[I_{i-1},I_{i}]$,
we represent the function by a degree $D_{i}$ polynomial. Formally,
we define the function subspace by
\begin{align}
\mathcal{V} & \defeq\bigoplus_{i=1}^{n}\mathcal{V}_{i} \quad \text{with} \quad \mathcal{V}_{i} \defeq \left\{ \left(\sum_{j=0}^{D_{i}}\alpha_{j}t^{j} \right)\cdot1_{[I_{i-1},I_{i}]}:\alpha_{j}\in\R \right\}.\label{eq:piece_wise_poly}
\end{align}
The following Lemma shows we can construct the basis for $\mathcal{V}$
by concatenating the basis for $\mathcal{V}_{i}$.
\begin{lemma}
\label{lem:concat_basis}For $i\in[n]$, we are given a $\gamma_{i}$
bounded basis $\{\varphi_{j,i}\}_{j=0}^{D_{i}}$ for the subspace
$\mathcal{V}_{i}\subset\mathcal{C}([I_{i-1},I_{i}],\R)$ on nodes
point $\{c_{j,i}\}_{j=0}^{D_{i}}$. Then, $\{\varphi_{j,i}\}_{i,j}$
is a $\sum_{i=1}^n \gamma_{i}(I_{i}-I_{i-1})$ bounded basis for the subspace
$\bigoplus_{i=1}^{n}\mathcal{V}_{i} \subset\mathcal{C}([I_{0},I_{n}],\R)$.
\end{lemma}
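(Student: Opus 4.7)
The plan is to verify the three requirements of Definition \ref{def:basis} directly for the concatenated basis, namely that it spans $\bigoplus_{i=1}^n \mathcal{V}_i$, interpolates correctly on the concatenated node set $\{c_{j,i}\}_{i,j}$, and satisfies the integral bound. The first point is immediate: since each $\mathcal{V}_i$ consists of functions supported in $[I_{i-1},I_i]$ and $\{\varphi_{j,i}\}_{j=0}^{D_i}$ spans $\mathcal{V}_i$, the union is a basis of the direct sum with total dimension $\sum_{i=1}^n (D_i+1)$.

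Next I would verify the interpolation property. Because any $\varphi_{j,i}$ vanishes outside $[I_{i-1},I_i]$, it evaluates to zero at every $c_{k,l}$ with $l\neq i$; for $l=i$ the local assumption $\varphi_{j,i}(c_{k,i})=\delta_{j,k}$ takes over. Thus $\varphi_{j,i}(c_{k,l})=\delta_{(i,j),(l,k)}$, as required.

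The main step is the integral bound. Fix $t\in[0,T]$ and let $m$ be the index with $t\in[I_{m-1},I_m]$. Splitting the sum according to the support of each $\varphi_{j,i}$, we get
\begin{align*}
\sum_{i=1}^n\sum_{j=0}^{D_i}\Bigl|\int_0^t \varphi_{j,i}(s)\,\d s\Bigr|
&= \sum_{i=1}^{m-1}\sum_{j=0}^{D_i}\Bigl|\int_{I_{i-1}}^{I_i}\varphi_{j,i}(s)\,\d s\Bigr|
 + \sum_{j=0}^{D_m}\Bigl|\int_{I_{m-1}}^{t}\varphi_{j,m}(s)\,\d s\Bigr|,
\end{align*}
since the terms with $i>m$ vanish and those with $i<m$ receive their full support integral. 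Applying the local $\gamma_i$-boundedness (which bounds the inner sums by $\gamma_i(I_i-I_{i-1})$ at the right endpoint and similarly for intermediate $t$), each inner sum is at most $\gamma_i(I_i-I_{i-1})$, giving
\begin{align*}
\sum_{i,j}\Bigl|\int_0^t \varphi_{j,i}(s)\,\d s\Bigr|\;\le\;\sum_{i=1}^{m}\gamma_i(I_i-I_{i-1})\;\le\;\sum_{i=1}^n \gamma_i(I_i-I_{i-1}),
\end{align*}
which is exactly the claimed bound.

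There is no real obstacle here since the argument is entirely mechanical once one observes the crucial structural fact that functions from different pieces have disjoint support (up to a measure-zero set of shared endpoints). The only mild care needed is in writing the split correctly for the piece containing $t$, since that is where the integral bound from the local basis is used with a non-endpoint upper limit; this is handled by the local $\gamma_i$-boundedness since its definition applies to every $t$ in the local interval, not only its right endpoint.
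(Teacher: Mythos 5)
Your proof is correct and follows essentially the same route as the paper: exploit that each $\varphi_{j,i}$ is supported in $[I_{i-1},I_i]$, split $\int_0^t$ accordingly, and apply the local $\gamma_i$-boundedness to each piece (the paper writes this in one line with indicator functions, while you also spell out the trivially satisfied interpolation property). No gaps.
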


\begin{proof}
For any $t\geq 0$, we have
\begin{align*}
\sum_{ i = 1 }^n \sum_{ j = 0 }^{ D_i } \left|\int_{I_{0}}^{t}\varphi_{i,j}(s)\d s \right| & \leq \sum_{ i = 1 }^n \left(\sum_{ j = 0 }^{ D_i } \left|\int_{I_{i-1}}^{t}\varphi_{i,j}(s) \d s\right|1_{t\geq I_{i-1}}\right)\\
 & \leq\sum_{ i = 1 }^n \gamma_{i}(I_{i}-I_{i-1})
\end{align*}
where we used that $\varphi_{i,j}$ is supported on $[I_{i-1},I_{i}]$
in the first inequality.
\end{proof}
Next, we note that the boundedness for basis is shift and scale invariant.
Hence, we will focus on obtaining a basis for $(t-1)$-degree polynomial
on $[-1,1]$ for notation convenience.

For $[-1,1]$, we choose the node points $c_{j}=\cos(\frac{2j-1}{2t}\pi)$
and the basis are
\[
\varphi_{j}(x)=\frac{\sqrt{1-c_{j}^{2}}\cos(t\cos^{-1}x)}{t(x-c_{j})}.
\]
It is easy to see that $\varphi_{j}(c_{i})=\delta_{i,j}$. To bound
the integral, Lemma 91 in \cite{lv17a} shows that
\[
\left|\int_{-1}^{y}\varphi_{j}(x)dx\right|\leq\frac{2000}{t}\text{ for all }y\in[-1,1].
\]
Summing it over $t$ basis functions, we have that $\gamma_{\varphi}\leq2000$.
Together with Lemma \ref{lem:concat_basis}, we have the following
result:
\begin{lemma}
\label{lem:basis_piecewise_poly}Let $\mathcal{V}$ be a subspace
of piecewise polynomials on $[0,T]$ with fixed nodes. Then, there is a $2000$ bounded
basis $\{\varphi\}$ for $\mathcal{V}$. Furthermore, for any vector $v$, it takes $O(\sum_{i=1}^n (1+D_i)^2)$ time to compute $v^\top A_\varphi$ where $D_i$ is the maximum degree of the $i$-th piece.

Alternatively, one can find $u$ such that $\|u - v^\top A_\varphi\| \leq \epsilon T \|v\|_\infty$ in time
$$O\left(\mathrm{rank}(\mathcal{V}) \log(\frac{\mathrm{rank}(\mathcal{V})}{\epsilon}) \right).$$
\end{lemma}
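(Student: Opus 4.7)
The plan is to prove Lemma~\ref{lem:basis_piecewise_poly} in three stages, one for each clause of the statement: the $2000$-bounded basis, the $O(\sum_i (1+D_i)^2)$ naive runtime for $v^\top A_\varphi$, and the faster approximate runtime $O(\rank(\mathcal{V})\log(\rank(\mathcal{V})/\epsilon))$.

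For the first claim, the exposition preceding the lemma already exhibits the Chebyshev basis $\varphi_j(x) = \sqrt{1-c_j^2}\cos(t\cos^{-1} x)/(t(x-c_j))$ on $[-1,1]$ with nodes $c_j=\cos((2j-1)\pi/(2t))$, and cites Lemma~91 of \cite{lv17a} for the uniform bound $\left|\int_{-1}^{y}\varphi_j(x)\,\d x\right|\leq 2000/t$. Summing over the $t$ basis functions yields $\gamma_\varphi\leq 2000$ on each interval. The key observation is that Definition~\ref{def:basis} is invariant under affine reparametrization: rescaling $[-1,1]$ onto $[I_{i-1},I_i]$ multiplies $\sum_j\left|\int \varphi_j\right|$ and the interval length by the same factor, so each piece inherits a $2000$-bounded basis. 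Feeding $\gamma_i=2000$ into Lemma~\ref{lem:concat_basis} gives $\sum_{i}2000(I_i-I_{i-1})=2000\,T$, giving the claimed overall $2000$-bounded basis.

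For the naive runtime I would exploit the support structure of the piecewise basis. Since $\varphi_{j,i}$ is supported on $[I_{i-1},I_i]$ and any node $c_{k,l}$ lies in $[I_{l-1},I_l]$, the entry $(A_\varphi)_{(j,i),(k,l)}=\int_0^{c_{k,l}}\varphi_{j,i}(s)\,\d s$ is zero when $l<i$, equals the intra-piece Chebyshev integral when $l=i$, and equals the full integral $\int_{I_{i-1}}^{I_i}\varphi_{j,i}$ (depending only on $(j,i)$) when $l>i$. Hence $A_\varphi$ decomposes as a block-diagonal matrix with $(1+D_i)\times(1+D_i)$ diagonal blocks plus a block lower-triangular matrix whose off-diagonal blocks are rank one (a column of values indexed by $j$ times an all-ones row indexed by $k$). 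Computing $v^\top A_\varphi$ against the diagonal blocks costs $\sum_i(1+D_i)^2$; the lower-triangular part is handled by precomputing scalars $s_i=\sum_j v_{j,i}\int_{I_{i-1}}^{I_i}\varphi_{j,i}$ and their prefix sums $S_l=\sum_{i<l}s_i$, then adding $S_l$ to every column of block $l$, which is $O(\rank(\mathcal{V}))$.

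For the fast runtime, within each piece $l$ the computation reduces to the following problem: given the values $v_{j,l}$ of a degree-$D_l$ polynomial $g_l$ at the Chebyshev nodes of $[I_{l-1},I_l]$, evaluate its antiderivative $G_l(t)=\int_{I_{l-1}}^t g_l(s)\,\d s$ at those same Chebyshev nodes (together with the endpoint $I_l$ for the prefix sums). I would do this by a Chebyshev transform (via FFT) from nodal values to coefficients in $O((1+D_l)\log(1+D_l))$ time, term-wise integration of the Chebyshev series in $O(1+D_l)$ time, and an inverse Chebyshev transform (and one Clenshaw evaluation at $I_l$) to recover the nodal values of $G_l$. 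Summing over pieces and combining with the $O(n)$ prefix-sum step yields the claimed runtime; the extra $\log(1/\epsilon)$ factor absorbs either the precision required to control FFT roundoff to accuracy $\epsilon T\|v\|_\infty$, or the truncation depth of a fast-multipole-based evaluator, whichever is used.

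The main obstacle is the fast algorithm step, since it requires invoking nontrivial numerical machinery (fast Chebyshev transforms or FMM) and carefully tracking how their intrinsic error propagates to the stated accuracy $\epsilon T\|v\|_\infty$. The boundedness claim and the naive matrix-vector bound, by contrast, are essentially mechanical consequences of the support structure of piecewise polynomials together with the already-established one-interval Chebyshev estimate.
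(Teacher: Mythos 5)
Your treatment of the first two clauses coincides with the paper's: the $2000$-boundedness comes from the Chebyshev--Lagrange basis, Lemma~91 of \cite{lv17a}, shift/scale invariance of Definition~\ref{def:basis}, and Lemma~\ref{lem:concat_basis}; and the $O(\sum_i(1+D_i)^2)$ bound comes from exactly the support-structure trichotomy (zero for later pieces, precomputed full integrals plus prefix sums for earlier pieces, intra-piece integrals for the diagonal blocks) that the paper writes out entrywise, which you merely phrase as ``block diagonal plus rank-one lower blocks.'' The only genuine divergence is the fast variant: the paper simply invokes Theorem~\ref{thm:multipole} (the Dutt--Rokhlin routine of \cite{dutt1996fast}), which directly returns $\epsilon\|\alpha\|_\infty$-accurate values of $\int_0^{c_i}p(s)\,\d s$ at the Chebyshev nodes in $O(t\log(t/\epsilon))$ time, whereas you replace this black box with classical Chebyshev spectral integration: DCT from nodal values to coefficients, term-wise integration of the Chebyshev series, inverse DCT back to the nodes (plus one Clenshaw evaluation at the right endpoint for the prefix sums). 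That route is also correct --- note in particular that the antiderivative's top-degree term $T_t$ vanishes at the nodes $\cos(\tfrac{2j-1}{2t}\pi)$, so the inverse transform stays size $t$ --- and in exact arithmetic it is exact, costing $O((1+D_l)\log(1+D_l))$ per piece with no $\epsilon$-dependence at all; the paper's citation instead buys an explicit accuracy guarantee of the form $\epsilon\|v\|_\infty$ per node, which is what the lemma's $\epsilon T\|v\|_\infty$ statement is calibrated to, and your remark that the $\log(1/\epsilon)$ factor ``absorbs roundoff or FMM truncation'' is the right way to reconcile the two; a fully rigorous version of your route would need to propagate the transform's finite-precision error through the per-piece rescaling and the prefix sums, but this is routine and does not affect the claimed bound.
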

\begin{proof}
The bound follows from previous discussion. For the computation cost, note that
\[
(v^{\top}A_{\varphi})_{(i,j)}=\sum_{i',j'}\int_{0}^{c_{(i,j)}}v_{(i',j')}\varphi_{(i',j')}(s) \d s.
\]
where $(i,j)$ is the $j$-th node at the $i$-th piece.
For any $i\neq i'$, the support of $\varphi_{(i',j')}(s)$ is either
disjoint from $[0,c_{(i,j)}]$ (if $i'>i$) or included in $[0,c_{(i,j)}]$
(if $i'<i$). Hence, we have that
\[
\int_{0}^{c_{(i,j)}}\varphi_{(i',j')}(s) \d s=\begin{cases}
0 & \text{if }i'>i\\
\int_{-\infty}^{\infty}\varphi_{(i',j')}(s) \d s & \text{if }i'<i
\end{cases}.
\]
Therefore, we have
\[
(v^{\top}A_{\varphi})_{(i,j)}=\sum_{i'<i,j'}v_{(i',j')}\cdot\int_{-\infty}^{\infty}\varphi_{(i',j')}(s) \d s+\sum_{j'}v_{(i,j')}\cdot\int_{0}^{c_{(i,j)}}\varphi_{(i,j')}(s) \d s.
\]
Note that $\int_{0}^{c_{(i,j)}}\varphi_{(i',j')}(s) \d s$ can precomputed.
Since there are $\sum_{i=1}^n (1+D_{i})$ many pairs of $(i,j)$, the first
term can be computed in $\sum_{i=1}^n (1+D_{i})$ time. Since there are
$\sum_{i=1}^n (1+D_{i})^{2}$ many pairs of $(i,j,j')$, the second term
can be computed in $\sum_{i=1}^n (1+D_{i})^{2}$ time.

Theorem \ref{thm:multipole} gives another way to compute the integration and its runtime is
$$O\left(\mathrm{rank}(\mathcal{V}) \log(\frac{\mathrm{rank}(\mathcal{V})}{\epsilon}) \right).$$
\end{proof}
\begin{remark}
Experiment seems to suggest the basis we proposed is $1$ bounded.
\end{remark}

Here is the theorem we used above to compute the Lagrange polynomials.
\begin{theorem}[{\cite[Section 5]{dutt1996fast}}]\label{thm:multipole}
Let $\phi_{i}$ be the Lagrange basis polynomials on the Chebyshev
nodes $c_{j}=\cos(\frac{2j-1}{2t}\pi)$ for $j\in[t]$, namely, $\phi_{i}(s)=\prod_{j\neq i}\frac{s-c_{j}}{c_{i}-c_{j}}$.
Given a polynomial $p(s)=\sum_{j=1}^{t}\alpha_{j}\phi_{j}(s)$ represented
by $\{\alpha_{j}\}_{j=1}^{t}$, one can compute $\{\ell_{i}\}_{i=1}^{t}$
such that
\[
\left|\ell_{i}-\int_{0}^{c_{i}}p(s)ds\right|\leq\epsilon \|\alpha\|_\infty
\]
in time $O(t\log(\frac{t}{\epsilon}))$.
\end{theorem}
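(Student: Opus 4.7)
The plan is to route through the Chebyshev polynomial basis, where integration is a trivial banded linear operation, and use fast transforms to move between the Lagrange value representation and the Chebyshev coefficient representation. First, since $\{c_j\}$ are exactly the Chebyshev nodes of the first kind and $\alpha_j = p(c_j)$, I would apply a Discrete Chebyshev Transform (essentially a DCT-II) to obtain coefficients $\{\beta_k\}_{k=0}^{t-1}$ with $p(s) = \sum_{k=0}^{t-1} \beta_k T_k(s)$; this step takes $O(t\log t)$ arithmetic operations via FFT.

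Second, I would integrate in the Chebyshev basis using the identity $\int T_n(x)\,dx = \tfrac{1}{2}\bigl(\tfrac{T_{n+1}(x)}{n+1} - \tfrac{T_{n-1}(x)}{n-1}\bigr)$ for $n\geq 2$ (with obvious modifications for $T_0,T_1$). Linearity then yields the coefficients of the antiderivative $P(s) = \sum_k \gamma_k T_k(s)$ from the $\{\beta_k\}$ in $O(t)$ time via the two-term recurrence $\gamma_k = (\beta_{k-1} - \beta_{k+1})/(2k)$, with the constant $\gamma_0$ fixed by the normalization $P(0) = 0$ (one $O(t)$ summation, since $T_k(0) = \cos(k\pi/2)$ is computable on the fly). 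Third, I would apply the inverse DCT to evaluate $\ell_i = P(c_i)$ at every Chebyshev node in another $O(t\log t)$ operations.

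The main obstacle is justifying the $\log(t/\epsilon)$ factor rather than $\log t$: in exact arithmetic the plan above already gives $O(t\log t)$, so the $\epsilon$-dependence must come from finite-precision bookkeeping. I would bound the operator norm of both the forward and inverse Chebyshev transforms by a polynomial in $t$ and observe that the integration recurrence amplifies coefficients by at most a factor $O(t)$. Hence carrying $O(\log(t/\epsilon))$ bits of precision through the $O(t\log t)$ arithmetic steps suffices to guarantee output error at most $\epsilon\|\alpha\|_\infty$, matching the claimed bound when the cost is measured in word-level arithmetic at precision $\log(1/\epsilon)$. The alternative route taken in \cite{dutt1996fast} is to apply the fast multipole method directly to the structured matrix whose $(i,j)$ entry is $\int_0^{c_i}\phi_j(s)\,ds$, exploiting the smoothness of the Chebyshev--Lagrange kernels to obtain compressible far-field interactions; this is a more elaborate but self-contained alternative to the transform-based plan above and is the one used in the cited reference.
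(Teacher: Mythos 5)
Your proposal is correct, but it takes a genuinely different route from the source the paper relies on: the paper gives no proof of this statement at all, it simply quotes it from \cite{dutt1996fast}, whose algorithm is the fast-multipole-type scheme you mention only in your last sentence (compressing the structured matrix with entries $\int_0^{c_i}\phi_j(s)\,\d s$ via truncated expansions, which is where the $\log(1/\epsilon)$ in the arithmetic count genuinely comes from). Your main route --- DCT from values at the first-kind Chebyshev nodes to Chebyshev coefficients, the antiderivative recurrence $\gamma_k=(\beta_{k-1}-\beta_{k+1})/(2k)$ with the usual adjustment at $k=1$ and the constant fixed by $P(0)=0$, then an inverse DCT to read off $P(c_i)$ --- is a valid and in fact cleaner derivation in the real-arithmetic model used throughout this paper: it is \emph{exact}, runs in $O(t\log t)$ operations, and hence satisfies the stated $O(t\log(t/\epsilon))$ bound a fortiori, with the error bound holding trivially. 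Two small points worth tightening: the antiderivative has degree $t$, but its top term causes no trouble since $T_t(c_i)=\cos\bigl(\tfrac{(2i-1)\pi}{2}\bigr)=0$ at these nodes, so a size-$t$ inverse transform suffices; and your finite-precision paragraph is more pessimistic than necessary --- the DCT is orthogonal up to scaling and the integration recurrence divides by $2k$ rather than amplifying, so $O(\log(t/\epsilon))$ bits indeed suffice, though in the paper's model this bookkeeping is not needed at all. What the FMM route of \cite{dutt1996fast} buys instead is an algorithm whose operation count $O(t\log(1/\epsilon))$ is independent of any FFT machinery and extends to node sets and integration functionals less structured than Chebyshev points, which is why the $\epsilon$ appears intrinsically in their bound rather than as a precision artifact.
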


\section{Improved Contraction Bound for HMC}\label{sec:contraction}

In this section, we give an improved contraction bound for HMC (Algorithm~\ref{alg:hmc}). Each iteration of Algorithm~\ref{alg:hmc} solve the HMC dynamics approximately.
In the later sections, we will discuss how to solve this ODE.

To give a contraction bound for the noisy HMC, we first analyze the
contraction of the ideal HMC. We reduce the problem of bounding the
contraction rate of the ideal HMC to the following lemma involving a matrix ODE.

\begin{lemma}\label{lem:matrix_ODE_bound}
Given a symmetric matrix $H(t)$ such
that $0 \prec m_2 \cdot I\preceq H(t)\preceq M_2 \cdot I$ for all $t\geq0$.
Consider the ODE
\begin{align*}
u''(t) & =-H(t)\cdot u(t),\\
u'(0) & =0.
\end{align*}
Let $\alpha(t)=\frac{1}{\|u(0)\|_{2}}\int_{0}^{t}(t-s)\cdot\|H(s)u(0)\|_{2} \d s$.
For any $0\leq T\leq\frac{1}{2\sqrt{M_2}}$ such that $\alpha(T)\leq\frac{1}{8}\sqrt{\frac{m_2}{M_2}}$,
we have that
\[
\|u(T)\|_{2}^{2} \leq \left( 1-\max\left( \frac{1}{4} m_2 T^{2}, \frac{1}{2} \sqrt{\frac{ m_2 }{ M_2 }}\cdot\alpha(T)\right) \right) \cdot \|u(0)\|_{2}^{2}.
\]
\end{lemma}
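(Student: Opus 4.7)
The plan is to convert the matrix ODE into an integral equation, extract a ``frozen-coefficient'' leading piece, and bound the time-variation error. Integrating $u''=-H u$ with $u'(0)=0$ twice gives
\begin{equation*}
u(T) = u(0) - \int_{0}^{T}(T-s)\,H(s)\,u(s)\,\d s = (I-G)\,u(0) + r,
\end{equation*}
where $G := \int_{0}^{T}(T-s)\,H(s)\,\d s$ is symmetric and satisfies $\tfrac{m_{2}T^{2}}{2}I \preceq G \preceq \tfrac{M_{2}T^{2}}{2}I \preceq \tfrac{1}{8}I$ (using $T \le \tfrac{1}{2\sqrt{M_{2}}}$), and the remainder is $r := -\int_{0}^{T}(T-s)\,H(s)\,(u(s)-u(0))\,\d s$. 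The target will be to show $\|u(T)\|_{2}^{2}\le \|u(0)\|_{2}^{2} - c\,\langle u(0),Gu(0)\rangle$ for some constant $c>1$, and then invoke two complementary lower bounds on $\langle u(0),Gu(0)\rangle$.

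I first control $\|u(s)-u(0)\|$ by a Gr\"onwall-type absorption. Writing $u(s)-u(0) = -\int_{0}^{s}(s-r)H(r)u(r)\,\d r$ and using $\|H(r)u(r)\| \le \|H(r)u(0)\| + M_{2}\|u(r)-u(0)\|$, together with $\int_{0}^{s}(s-r)\|H(r)u(0)\|\,\d r \le \alpha(T)\|u(0)\|$, one obtains
\begin{equation*}
\|u(s)-u(0)\|_{2} \;\le\; \alpha(T)\|u(0)\|_{2} + \tfrac{M_{2}s^{2}}{2}\,\max_{r\le s}\|u(r)-u(0)\|_{2},
\end{equation*}
which yields $\|u(s)-u(0)\|_{2} \le \tfrac{8}{7}\alpha(T)\|u(0)\|_{2}$ because $M_{2}T^{2}\le 1/4$. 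Plugging back, $\|r\|_{2} \le \tfrac{4}{7}M_{2}T^{2}\alpha(T)\|u(0)\|_{2}$.

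Next I expand $\|u(T)\|_{2}^{2} = \|(I-G)u(0)\|_{2}^{2} + 2\langle(I-G)u(0),r\rangle + \|r\|_{2}^{2}$. The spectral inequality $G^{2} \preceq \tfrac{M_{2}T^{2}}{2}G \preceq G/8$ collapses the leading piece to
\begin{equation*}
\|(I-G)u(0)\|_{2}^{2} \;\le\; \|u(0)\|_{2}^{2} - \tfrac{15}{8}\langle u(0),Gu(0)\rangle.
\end{equation*}
For the cross and quadratic terms I use $|\langle u(0),r\rangle| \le \int_{0}^{T}(T-s)\|H(s)u(0)\|\,\|u(s)-u(0)\|\,\d s \le \tfrac{8}{7}\alpha(T)^{2}\|u(0)\|_{2}^{2}$, together with the Cauchy--Schwarz identity
\begin{equation*}
\alpha(T)^{2}\|u(0)\|_{2}^{2} \;\le\; \tfrac{T^{2}}{2}\int_{0}^{T}(T-s)\|H(s)u(0)\|_{2}^{2}\,\d s \;\le\; \tfrac{M_{2}T^{2}}{2}\,\langle u(0),Gu(0)\rangle,
\end{equation*}
whose second step uses $\|H(s)u(0)\|_{2}^{2} \le M_{2}\langle u(0),H(s)u(0)\rangle$ (from $H(s)\preceq M_{2}I$). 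Because $M_{2}T^{2}\le 1/4$ each appearance of $\alpha(T)^{2}$ or $\|r\|_{2}^{2}$ costs only a small multiple of $\langle u(0),Gu(0)\rangle$, and one can close the loop to get $\|u(T)\|_{2}^{2} \le \|u(0)\|_{2}^{2} - c\,\langle u(0),Gu(0)\rangle$ with an absolute constant $c>1$ (e.g.\ $c=3/2$).

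To finish, I invoke two lower bounds on $\langle u(0),Gu(0)\rangle$. Directly from $G\succeq \tfrac{m_{2}T^{2}}{2}I$ I get $\langle u(0),Gu(0)\rangle \ge \tfrac{m_{2}T^{2}}{2}\|u(0)\|_{2}^{2}$, which produces the $\tfrac{m_{2}T^{2}}{4}$ branch of the $\max$. Using the Cauchy--Schwarz identity above in reverse, $\langle u(0),Gu(0)\rangle \ge \tfrac{2\alpha(T)^{2}}{M_{2}T^{2}}\|u(0)\|_{2}^{2}$; whenever $\tfrac{1}{2}\sqrt{m_{2}/M_{2}}\,\alpha(T) \ge \tfrac{m_{2}T^{2}}{4}$ (equivalently $\alpha(T) \ge \tfrac{T^{2}\sqrt{m_{2}M_{2}}}{2}$), a short algebraic comparison shows this second lower bound yields the $\tfrac{1}{2}\sqrt{m_{2}/M_{2}}\alpha(T)$ branch. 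The hypothesis $\alpha(T)\le \tfrac{1}{8}\sqrt{m_{2}/M_{2}}$ is not used in the decrease step itself; it only ensures the stated bound has $\max\le 1/16$ so that the conclusion is meaningful. The main obstacle will be the bookkeeping that every power of $\alpha(T)$ in an error term comes paired with a compensating factor $M_{2}T^{2}\le 1/4$; this is what makes the ``constant-$H$'' identity $\|u(T)\|^{2}-\|u(0)\|^{2}=-\|\sin(H^{1/2}T)u(0)\|_{2}^{2}$ survive in the time-varying setting.
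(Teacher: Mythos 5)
Your proposal is correct, and it takes a genuinely different route from the paper. The paper's proof normalizes $u(0)=e_1$, proves a crude pointwise bound $\|u(t)\|_2\le\cosh(\sqrt{M_2}t)$, splits $u$ into the component along $e_1$ and the orthogonal part, bounds the orthogonal displacement by $\tfrac{4}{3}\alpha(T)$, and then runs a second-order differential inequality for $\beta(t)=e_1^\top u(t)$; the hypothesis $\alpha(T)\le\tfrac18\sqrt{m_2/M_2}$ is used twice there (to get $\beta''\le-\tfrac23 e_1^\top H e_1$ and to absorb the orthogonal contribution). You instead work with the quadratic form $G=\int_0^T(T-s)H(s)\,\d s$, write $u(T)=(I-G)u(0)+r$, control $\|u(s)-u(0)\|_2\le\tfrac87\alpha(T)\|u(0)\|_2$ by a Gr\"onwall absorption that needs only $M_2T^2\le\tfrac14$, and tie every error term back to $\langle u(0),Gu(0)\rangle$ via the weighted Cauchy--Schwarz bound $\alpha(T)^2\|u(0)\|_2^2\le\tfrac{M_2T^2}{2}\langle u(0),Gu(0)\rangle$; the two branches of the $\max$ then come from the two lower bounds on $\langle u(0),Gu(0)\rangle$, exactly as the paper's final step lower-bounds $\int_0^t(t-s)e_1^\top H(s)e_1\,\d s$ in two ways. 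I checked the bookkeeping: with $M_2T^2\le\tfrac14$ the leading term gives $\tfrac{15}{8}\langle u(0),Gu(0)\rangle$, the cross term $2|\langle u(0),r\rangle|$ costs at most $\tfrac27\langle u(0),Gu(0)\rangle$, the term $2|\langle Gu(0),r\rangle|$ (which you left implicit; use $\|Gu(0)\|_2\le\alpha(T)\|u(0)\|_2$) at most $\tfrac1{28}\langle u(0),Gu(0)\rangle$, and $\|r\|_2^2$ at most $\tfrac1{392}\langle u(0),Gu(0)\rangle$, so the net coefficient is about $1.55\ge\tfrac32$ and your case analysis on which branch attains the $\max$ closes as you describe. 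A noteworthy byproduct: your argument never uses $\alpha(T)\le\tfrac18\sqrt{m_2/M_2}$, so it proves a slightly stronger statement than the paper's (only $T\le\tfrac{1}{2\sqrt{M_2}}$ is needed), whereas the paper's proof genuinely relies on that hypothesis; the price is that your constants are tracked through a perturbation expansion rather than through the cleaner geometric decomposition of the paper.
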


Using this lemma, we prove both parts of the main contraction bound, Lemma \ref{lem:HMC_contraction}.

\begin{proof}[Proof of Lemma \ref{lem:HMC_contraction}.]
Let error function $e(t)=y(t)-x(t)$. The definition of HMC shows that
\[
e''(t)=-(\nabla f(y(t))-\nabla f(x(t)))=-H(t)\cdot e(t)
\]
where $H(t)=\int_{0}^{1}\nabla^{2}f(x(t)+s(y(t)-x(t))) \d s$. By the
strong convexity and the Lipschitz gradient of $f$, we have that
\[
m_2 \cdot I\preceq H(t)\preceq M_2 \cdot I.
\]
Hence, we can apply Lemma \ref{lem:matrix_ODE_bound}. 

To get the first bound, we bound the $\alpha(t)$ defined in Lemma
\ref{lem:matrix_ODE_bound} as follows
\[
\alpha(t)=\frac{1}{\|e(0)\|_{2}}\int_{0}^{t}(t-s)\cdot\|H(s)e(0)\|_{2} \d s\leq M_2 \int_{0}^{t}(t-s) \d s = M_2 \frac{t^{2}}{2}.
\]
Therefore, for $0\leq t\leq\frac{ m_2^{1/4} }{2 M_2^{3/4} }$, we have that
$\alpha(t)\leq\frac{1}{8}\sqrt{ \frac{ m_2 }{ M_2 } }$ and hence Lemma \ref{lem:matrix_ODE_bound}
gives the first bound.

To get the second bound, we note that $\alpha(t)$ is increasing and
hence there is $t$ such that $\alpha(t)=\frac{1}{8}\sqrt{\frac{ m_2 }{ M_2 }}$.
Using such $t$ in Lemma \ref{lem:matrix_ODE_bound} gives the second
bound.
\end{proof}
Now, we prove the main technical lemma of this section, a contraction
estimate for matrix ODE. We note that not all matrix ODEs come from
some HMC and hence it might be possible to get a better bound by directly
analyzing the HMC.

\begin{proof}[Proof of Lemma \ref{lem:matrix_ODE_bound}]
Let $e_1$ denote the basis vector that it is $1$ in the first coordinate and $0$ everywhere else.

Without loss of generality, we can assume $\|u(0)\|_{2}=1$ and $u(0)=e_{1}$. 

The proof involves first getting a crude bound on $\|u(t)\|_{2}$.
Then, we boast the bound by splitting the movement of $u(t)$ into
one parallel to $e_1$ and one orthogonal to $e_{1}$.

\textbf{Crude bound on $\|u(t)\|_{2}$:}

Integrating both sides of $u''(t)=-H(t)\cdot u(t)$ twices and using $u'(0)=0$ gives
\begin{equation}
u(t)=u(0)-\int_{0}^{t}(t-s)H(s)u(s) \d s.\label{eq:crude_u_bound}
\end{equation}
We take the norm on both sides and use $0\preceq H(s)\preceq M_2 \cdot I$
to get
\[
\|u(t)\|_{2}\leq 1 + M_2 \cdot \int_{0}^{t}(t-s)\|u(s)\|_{2} \d s.
\]
Applying Lemma \ref{lem:cosh_ODE} to this equation and using $t\leq\frac{1}{2\sqrt{M_2}}$
gives
\[
\|u(t)\|_{2}\leq\cosh( \sqrt{M_2} t )\leq\frac{6}{5}
\]
Putting it back to (\ref{eq:crude_u_bound}) gives
\[
\|u(t)-e_1\|_{2}\leq\int_{0}^{t}(t-s)\|H(s)\cdot u(s)\|_{2} \d s=\int_{0}^{t}(t-s)\cdot M_2 \cdot \frac{6}{5} \d s=\frac{6}{10} M_2 t^{2}.
\]
In particular, for any $0\leq t\leq\frac{1}{2\sqrt{ M_2 }}$, we have
that
\begin{equation}
\frac{5}{6}\leq u_{1}(t)\leq\frac{7}{6}.\label{eq:u1_bound}
\end{equation}

\textbf{Improved bound on $\|u(t)\|_{2}$:}

Let $P_{1}$ be the orthogonal projection to the first coordinate
and $P_{-1}=I-P_{1}$. We write $u(t)=u_{1}(t)+u_{-1}(t)$ with $u_{1}(t)=P_{1}u(t)$
and $u_{-1}(t)=P_{-1}u(t)$, namely, $u_{1}(t)$ is parallel to $e_1$
and $u_{-1}(t)$ is orthogonal to $e_1$. 

Fix any $0\leq t\leq T$. Let $\beta(t)=e_1^{\top}u(t)$. By the
definition of $u$, we have
\begin{equation}
u''(t)=-\beta(t)\cdot H(t)e_1-H(t)u_{-1}(t).\label{eq:ddu}
\end{equation}
Integrating both sides twice and using $u_{-1}(0)=0$, we have
\begin{align*}
u_{-1}(t) & =\int_{0}^{t}(t-s)P_{-1}u''(s) \d s\\
 & =-\int_{0}^{t}(t-s)\cdot\beta(s)\cdot P_{-1}H(s)e_1 \d s-\int_{0}^{t}(t-s)P_{-1}H(s)u_{-1}(s) \d s.
\end{align*}
Taking norm on both sides and using that $0\preceq H(t)\preceq M_2 \cdot I$
and $\frac{5}{6}\leq\beta(s)\leq\frac{7}{6}$, we have that
\begin{align}
\|u_{-1}(t)\|_{2} 
\leq & ~ \int_{0}^{t}(t-s)\cdot\beta(s)\cdot\|H(s)e_1\|_{2} \d s+M_2 \cdot\int_{0}^{t}(t-s)\cdot\|u_{-1}(s)\|_{2} \d s\nonumber \\
\leq & ~ \frac{7}{6} \int_{0}^{t}(t-s)\cdot \|H(s)e_1\|_{2} \d s+M_2 \cdot\int_{0}^{t}(t-s)\cdot\|u_{-1}(s)\|_{2} \d s\nonumber \\
\leq & ~ \frac{7}{6}\alpha(T)+M_2 \cdot\int_{0}^{t}(t-s)\cdot\|u_{-1}(s)\|_{2} \d s\label{eq:u-1_bound}
\end{align}
where $\alpha(T)\defeq\int_{0}^{T}(T-s)\cdot\|H(s)e_1\|_{2} \d s.$
Solving this integral inequality (Lemma \ref{lem:cosh_ODE}), we get
\begin{equation}
\|u_{-1}(t)\|_{2} \leq \frac{7}{6}\alpha(T)\cdot\cosh(\sqrt{M_2}t) \leq \frac{7}{6} \alpha(T) \cdot \cosh ( 1 /2 ) \leq \frac{4}{3}\alpha(T)\label{eq:u-1_bound2}
\end{equation}
where the second step follows from $t \leq \frac{1}{2\sqrt{M_2}}$, and the last step follows from $\cosh(1/2) \leq \frac{8}{7}$ .

Next, we look at the first coordinate of (\ref{eq:ddu}) and get
\begin{align}
\beta''(t) & =-\beta(t)\cdot e_1^{\top}H(t)e_1-e_1^{\top}H(t)u_{-1}(t) \notag \\
 & \leq-\frac{5}{6}e_1^{\top}H(t)e_1+\|H(t)e_1\|_{2}\cdot\|u_{-1}(t)\|_{2}. \label{eq:ideal_HMC_beta}
\end{align}
To bound the last term, we note that
\begin{align}\label{eq:Hs_bound}
\|H(t)e_1\|_{2} 
= & ~ \sqrt{e_1^{\top}H^{2}(t)e_1} \notag \\
\leq & ~ \sqrt{M_2 \cdot e_1^{\top}H(t)e_1} & \text{~by~} 
H^2(t) \preceq M_2 \cdot H(t) \notag \\
= & ~ \sqrt{ \frac{M_2}{m_2} } \sqrt{\frac{m_2}{ e_1^\top H(t) e_1 }} e_1^\top H(t) e_1 \notag \\
\leq & ~ \sqrt{\frac{ M_2 }{ m_2 }}\cdot e_1^{\top} H(t)e_1 & \text{~by~} m_2 \leq e_1^\top H(t) e_1.
\end{align}

Using this into (\ref{eq:ideal_HMC_beta}) and $\alpha(T)\leq\frac{1}{8}\sqrt{\frac{m_2}{M_2}}$, we have
\[
\beta''(t)\leq-e_1^{\top}H(t)e_1\cdot\left(\frac{5}{6}-\frac{4}{3}\sqrt{\frac{ M_2 }{ m_2 }}\alpha(T)\right)\leq-\frac{2}{3}e_1^{\top}H(t)e_1.
\]
Hence, we have that
\begin{equation}
\beta(t)\leq1-\frac{2}{3}\int_{0}^{t}(t-s)\cdot e_1^{\top}H(s)e_1 \d s.\label{eq:alpha_bound}
\end{equation}

Using (\ref{eq:alpha_bound}), $\beta(t)\geq\frac{5}{6}$ and (\ref{eq:u-1_bound2})
gives
\begin{align*}
\|u(t)\|_{2}^{2} & =\beta^{2}(t)+\|u_{-1}(t)\|_{2}^{2}\\
 & \leq1-\int_{0}^{t}(t-s)\cdot e_1^{\top}H(s)e_1 \d s+\left(\frac{4}{3}\int_{0}^{T}(T-s)\cdot\|H(s)e_1\|_{2} \d s\right)^{2}\\
 & \leq1-\int_{0}^{t}(t-s)\cdot e_1^{\top}H(s)e_1 \d s+2\sqrt{\frac{M_2}{m_2}}\int_{0}^{T}(T-s)\cdot e_1^{\top}H(s)e_1 \d s\cdot\alpha(T)\\
 & =1-\int_{0}^{t}(t-s)\cdot e_1^{\top}H(s)e_1 \d s\left(1-2\sqrt{\frac{M_2}{m_2}}\alpha(T)\right)\\
 & \leq1-\frac{1}{2}\int_{0}^{t}(t-s)\cdot e_1^{\top}H(s)e_1 \d s
\end{align*}
where we used (\ref{eq:Hs_bound}) at the second inequality and $\alpha(T)\leq\frac{1}{8}\sqrt{\frac{m_2}{M_2}}$
at the end.

Finally, we bound the last term in two way. One way simply uses $e_1^{\top}H(s)e_1 \geq m_2$ 
and get
\begin{align*}
\int_0^t ( t - s ) \cdot e_1^\top H(s) e_1 \d s \geq \frac{m_2}{2} t^2
\end{align*}
which implies 
\begin{align*}
\|u(t)\|_{2}^{2}\leq 1 - \frac{m_2}{4} t^{2}.
\end{align*}

For the other way, we apply (\ref{eq:Hs_bound}) to get that
\begin{align*}
\int_{0}^{t}(t-s)\cdot e_1^{\top}H(s)e_1 \d s
\geq \int_{0}^{t}(t-s)\cdot \| H(t) e_1 \|_2 \sqrt{ \frac{ m_2 }{ M_2 } } \d s 
= \sqrt{\frac{m_2}{M_2}}\cdot\alpha(T).
\end{align*}
where the last step follows by the definition of $\alpha(T)$. Thus, we have
\begin{align*}
\| u(t) \|_2^2 \leq 1 - \frac{1}{2} \sqrt{ \frac{ m_2 }{ M_2 } } \alpha(T).
\end{align*} 

\end{proof}

Finally, we analyze the contraction of the noisy HMC.
\begin{theorem}[Contraction of noisy HMC]\label{thm:contraction_HMC}
Suppose $f$ is $m_2$ strongly convexity
with $M_2$ Lipschitz gradient. For any step-size $h \leq \frac{ m_2^{1/4} }{2 M_2^{3/4}} $, let $X\sim \textsc{HMC}(x^{(0)},f,\epsilon,h)$ and $Y\sim e^{-f}$. Then, we have that
\[
W_{2}(X,Y)\leq \frac{\epsilon}{\sqrt{m_2}}.
\]
In addition, the number of iterations is
\begin{align*}
N = O \left(\frac{1}{m_2 h^2} \right) \cdot \log \left(\frac{\|\nabla f(x^{(0)})\|_{2}^{2} / m_2 + d}{\epsilon} \right).
\end{align*}
\end{theorem}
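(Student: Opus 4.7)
}
The plan is to use a synchronous coupling between the noisy HMC iterates $X_k$ produced by Algorithm~\ref{alg:hmc} and an ideal HMC chain $Y_k$ started from the stationary distribution $e^{-f}$, where the two chains share the same Gaussian velocity $v_{k}\sim\mathcal{N}(0,I_d)$ at every step. Because $e^{-f}$ is stationary for the exact HMC, $Y_k\sim e^{-f}$ for every $k$, so $W_2(X_N,Y)\le(\mathbb{E}\|X_N-Y_N\|_2^2)^{1/2}$ for this coupling. Our task reduces to iterating a one-step inequality.

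For a single step, let $x(t)$ be the exact HMC trajectory started at $X_{k-1}$ with velocity $v_k$ and $y(t)$ the exact trajectory started at $Y_{k-1}$ with the same $v_k$. The hypothesis $h\le m_2^{1/4}/(2M_2^{3/4})$ lets us invoke the first bound of Lemma~\ref{lem:HMC_contraction}, giving
\begin{equation*}
\|x(h)-y(h)\|_2 \le \sqrt{1-\tfrac{m_2 h^2}{4}}\,\|X_{k-1}-Y_{k-1}\|_2 \le (1-\theta)\,\|X_{k-1}-Y_{k-1}\|_2,
\end{equation*}
where $\theta=m_2 h^2/8$ and we used $\sqrt{1-2\theta}\le 1-\theta$. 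The ODE solver guarantee ensures $\|X_k-x(h)\|_2\le\bar\epsilon$, so by the triangle inequality
\begin{equation*}
\|X_k - Y_k\|_2 \le (1-\theta)\|X_{k-1}-Y_{k-1}\|_2 + \bar\epsilon.
\end{equation*}

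Iterating this pointwise recurrence for $N$ steps and taking $L^2$-expectations (using Minkowski) yields
\begin{equation*}
\bigl(\mathbb{E}\|X_N-Y_N\|_2^2\bigr)^{1/2} \le (1-\theta)^N \bigl(\mathbb{E}\|X_0-Y_0\|_2^2\bigr)^{1/2} + \frac{\bar\epsilon}{\theta}.
\end{equation*}
The choice $\bar\epsilon=\theta\epsilon/(2\sqrt{m_2})$ makes the noise term equal to $\epsilon/(2\sqrt{m_2})$, which contributes exactly half of the desired Wasserstein bound. It remains to show that $N$ iterations drive the first term below $\epsilon/(2\sqrt{m_2})$.

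To bound the initial distance, I will use $\|X_0-Y_0\|_2\le\|x^{(0)}-x^*\|_2+\|Y_0-x^*\|_2$ where $x^*$ is the minimizer of $f$. Strong convexity gives $\|x^{(0)}-x^*\|_2\le\|\nabla f(x^{(0)})\|_2/m_2$ (a consequence of $m_2\|x^{(0)}-x^*\|_2\le\|\nabla f(x^{(0)})-\nabla f(x^*)\|_2$), and the standard variance bound for $m_2$-strongly logconcave densities yields $\mathbb{E}\|Y_0-x^*\|_2^2\le d/m_2$. Combining, $\bigl(\mathbb{E}\|X_0-Y_0\|_2^2\bigr)^{1/2}\le\sqrt{2/m_2}\cdot\sqrt{\|\nabla f(x^{(0)})\|_2^2/m_2+d}$. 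Using $(1-\theta)^N\le e^{-\theta N}$ and plugging in the algorithm's $N=\theta^{-1}\log\!\bigl(\tfrac{4}{\epsilon^2}(\|\nabla f(x^{(0)})\|_2^2/m_2+d)\bigr)$ makes the first term at most $\epsilon/(2\sqrt{m_2})$ after a short calculation, giving the claimed $W_2$ bound.

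The main subtlety is simply coordinating the constants: one must verify that the algorithm's $N$ is large enough to dominate the logarithmic factor arising after the initial-distance bound, and that the strong-convexity variance bound $\mathbb{E}\|Y_0-x^*\|_2^2\le d/m_2$ can be cited (it follows from, e.g., the Brascamp–Lieb inequality or direct integration against $\nabla^2 f\succeq m_2 I$). The contraction step itself is already essentially done by Lemma~\ref{lem:HMC_contraction}; the rest is bookkeeping in the coupling.
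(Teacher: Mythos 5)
Your proposal is correct and follows essentially the same route as the paper's proof: a synchronous coupling with an ideal HMC chain started at stationarity, the one-step contraction from Lemma~\ref{lem:HMC_contraction}, geometric accumulation of the solver error $\bar\epsilon=\theta\epsilon/(2\sqrt{m_2})$, and the same initial-distance bound via strong convexity plus $\E\|Y_0-x^*\|_2^2\le d/m_2$ (the paper cites \cite{dm16} for the latter). The only cosmetic difference is that you iterate the recursion in the $L^2$ (unsquared) metric via Minkowski, whereas the paper works with squared distances using $(a+b)^2\le(1+\theta)a^2+(1+1/\theta)b^2$; your variant makes the final numerical check need the harmless condition $\epsilon\lesssim\sqrt{d}$, which is anyway assumed where the theorem is applied.
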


\begin{proof}
To prove the $W_{2}$ distance, we let $x^{(k)}$ be the iterates
of the algorithm $\textsc{HMC}$. Let $y^{(k)}$ be the $k$-th step of the
ideal HMC starting from a random point $y^{(0)}\sim e^{-f}$ with
the random initial direction identical to the algorithm $\textsc{HMC}$.
Let $x^{*(k)}$ be the 1 step ideal HMC starting from $x^{(k-1)}$ with the same initial direction as $y^{(k)}$. Lemma \ref{lem:HMC_contraction}
shows that
\[
\|x^{*(k)}-y^{(k)}\|_{2}^{2}\leq\left(1-\frac{m_2 h^2}{4}\right)\|x^{(k-1)}-y^{(k-1)}\|_{2}^{2}.
\]
Let $\theta = \frac{ m_2 h^2 }{8}$, then $\theta \leq \frac{1/(4 \kappa^{3/4})}{8} \leq 1/32$. By the assumption of the
noise, we have $\|x^{(k)}-x^{*(k)}\|_2 \leq \frac{\epsilon \theta}{2  \sqrt{m_2}}$. Hence,
we have
\begin{align*}
\|x^{(k)}-y^{(k)}\|_{2}^{2} 
= & ~ \| ( x^{*(k)} - y^{(k)} ) + ( x^{(k)} - x^{*(k)} ) \|_2^2 \\
\leq & ~ \left( 1 + \theta \right) \|x^{*(k)}-y^{(k)}\|_{2}^{2}+(1 + 1/\theta )\|x^{(k)}-x^{*(k)}\|_2^{2}\\
\leq & ~ \left( 1 +  \theta \right) ( 1 - 2  \theta ) \|x^{(k-1)}-y^{(k-1)}\|_{2}^{2}+(1+ 1/ \theta )\|x^{(k)}-x^{*(k)}\|_2^{2} \\
\leq & ~ \left( 1 - \theta \right)\|x^{(k-1)}-y^{(k-1)}\|_{2}^{2}+ (1 +  1/\theta )\|x^{(k)}-x^{*(k)}\|_2^{2} \\
\leq & ~ \left( 1 - \theta \right)\|x^{(k-1)}-y^{(k-1)}\|_{2}^{2}+ (2 /\theta)  \cdot \frac{\epsilon^2 \theta^2}{4 m_2}.
\end{align*}
where the second step follows by $(a+b)^2 \leq (1+ \theta) a^2 + (1+ 1/\theta)b^2$, the third step follows by $ \|x^{*(k)}-y^{(k)}\|_{2}^{2} \leq (1-2\theta) \|x^{(k-1)}-y^{(k-1)}\|_{2}^{2} $, the fourth step follows by $(1+\theta)(1-2\theta) \leq (1-\theta)$, the fifth step follow by $\theta \leq 1/4$ and $\| x^{(k)} - x^{*(k)} \|_2 \leq \frac{\epsilon \theta}{2 \sqrt{m_2}}$. 

Applying this bound iteratively gives 
\begin{equation}
\|x^{(k)}-y^{(k)}\|_{2}^{2} \leq \left( 1- \theta \right)^{k}\|x^{(0)}-y^{(0)}\|_{2}^{2}+\frac{\epsilon^{2}}{2 m_2}.\label{eq:xyk_bound}
\end{equation}
Let $x^{(\min)}$ be the minimum of $f$. Then, we have
\begin{align}
\|x^{(0)}-y^{(0)}\|_{2}^{2} & \leq 2\|x^{(0)}-x^{(\min)}\|_2^{2}+2\|y^{(0)}-x^{(\min)}\|_2^{2}.\label{eq:initial_bound}
\end{align}
For the first term, the strong convexity of $f$ shows that
\begin{equation}
\|x^{(0)}-x^{(\min)}\|_2^{2}\leq\frac{1}{m^{2}}\|\nabla f(x^{(0)})\|_{2}^{2}.\label{eq:initial_first_term}
\end{equation}
For the second term, Theorem 1 in \cite{dm16} 
shows that
\begin{equation}
\E \left[ \|y^{(0)}-x^{(\min)}\|_2^{2} \right] \leq\frac{d}{m_2}.\label{eq:initial_second_term}
\end{equation}

Combining (\ref{eq:xyk_bound}), (\ref{eq:initial_bound}), (\ref{eq:initial_first_term})
and (\ref{eq:initial_second_term}), we have
\[
\E \left[ \|x^{(k)}-y^{(k)}\|_{2}^{2} \right] \leq \left(1- \theta \right)^{k}\left(\frac{2\|\nabla f(x^{(0)})\|_{2}^{2}}{ m_2^{2} }+\frac{2d}{ m_2 }\right)+\frac{\epsilon^{2}}{2 m_2}.
\]
Picking 
\begin{align*}
k = \frac{1}{\theta} \cdot\log\left(\frac{\frac{2\|\nabla f(x^{(0)})\|_{2}^{2}}{ m_2^{2} }+\frac{2d}{ m_2 }}{\frac{\epsilon^{2}}{2 m_2}}\right) = \frac{1}{\theta} \cdot\log\left(\frac{4}{\epsilon^2} \left(\frac{\|\nabla f(x^{(0)})\|_{2}^{2}}{ m_2}+d \right)\right) ,
\end{align*}
we have that
\[
\E \left[ \|x^{(k)}-y^{(k)}\|_{2}^{2} \right] \leq \frac{\epsilon^{2}}{m_2}.
\]
This proves that $W_{2}(X,Y) \leq \frac{\epsilon}{\sqrt{m_2}}$.
\end{proof}

\section{Strongly Convex functions with Lipschitz Gradient}
\label{sec:nonsmooth}

In this section, we give a faster sampling algorithm for strongly convex functions
with Lipschitz gradient. The purpose of this section is to illustrate
that our contraction bound and ODE theorem are useful even for functions that are not infinitely differentiable. We believe that our bound can
be beaten by algorithms designed for this specific setting.

\subsection{Bounding the ODE solution}
First, we prove that the HMC dynamic for these functions can be well
approximated by piece-wise degree-2 polynomials. Note that this only requires that the Hessian has bounded eigenvalues. 
\begin{lemma}[Smoothness implies the existence of degree-2 polynomial approximation]
\label{lem:2nd_poly} Let $f$ be a twice-differentiable function such that $-M_{2}\cdot I\preceq\nabla^{2}f(x)\preceq M_{2}\cdot I$
for all $x\in\mathbb{R}^{d}$. Let $0\leq h\leq\frac{1}{2\sqrt{M_{2}}}$.
Consider the HMC dynamic 
\begin{align*}
\frac{\d^{2}x}{\d t^{2}}(t) & =-\nabla f(x(t))\text{ for }0\leq t\leq h,\\
\frac{\d x}{\d t}(0) & =v_{1},\\
x(0) & =v_{0}.
\end{align*}
For any integer $k$, there is a continuously differentiable $k$-piece
degree 2 polynomial $q$ such that $q(0)=v_{0}$, $\frac{\d q}{\d t}(0)=v_{1}$
and $\left\| \frac{\d^{2}q}{\d t^{2}}(t)-\frac{\d^{2}x}{\d t^{2}}(t)\right\| _{2}\leq\frac{\epsilon}{h^{2}}\text{ for }0\leq t\leq h$
with 
\[
\epsilon=\frac{2M_{2}h^{3}}{k}\left(\|v_{1}\|_{2}+\|\nabla f(v_{0})\|_{2}\cdot h\right).
\]
\end{lemma}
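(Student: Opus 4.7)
The strategy is to partition $[0,h]$ into $k$ equal subintervals of length $h/k$ and, on each subinterval, set $q''$ equal to the constant value $x''$ takes at the left endpoint. Since $\nabla f$ is $M_2$-Lipschitz (as $\|\nabla^2 f\|\leq M_2$), the true acceleration $x''(t)=-\nabla f(x(t))$ varies slowly if $x(t)$ does, so a piecewise-constant approximation of $x''$ will be accurate. Integrating this piecewise-constant $q''$ twice, and choosing integration constants at each subinterval boundary so as to match the value and derivative from the previous piece (with $q(0)=v_0$, $q'(0)=v_1$), produces the desired continuously differentiable piecewise degree-2 polynomial.

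The key intermediate estimate is a uniform bound on the velocity $\|x'(t)\|$ on $[0,h]$. I would argue as follows. Integrating the ODE once gives
\[
\|x'(t)\|\leq \|v_1\|+\int_0^t\|\nabla f(x(s))\|\,\d s\leq \|v_1\|+t\|\nabla f(v_0)\|+M_2\int_0^t\|x(s)-v_0\|\,\d s,
\]
where the last step uses that $\nabla f$ is $M_2$-Lipschitz. If $M:=\sup_{t\in[0,h]}\|x'(t)\|$, then $\|x(s)-v_0\|\leq sM$, so
\[
M\leq \|v_1\|+h\|\nabla f(v_0)\|+\tfrac{1}{2}M_2h^2 M.
\]
The assumption $h\leq\frac{1}{2\sqrt{M_2}}$ makes $\tfrac{1}{2}M_2h^2\leq\tfrac{1}{8}$, so $M\leq\tfrac{8}{7}\bigl(\|v_1\|+h\|\nabla f(v_0)\|\bigr)\leq 2\bigl(\|v_1\|+h\|\nabla f(v_0)\|\bigr)$.

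With this in hand, the error analysis is immediate. Fix a subinterval $[t_j,t_{j+1}]$ of length $h/k$ and, for $t$ in this piece, estimate
\[
\bigl\|q''(t)-x''(t)\bigr\|=\bigl\|\nabla f(x(t_j))-\nabla f(x(t))\bigr\|\leq M_2\|x(t)-x(t_j)\|\leq M_2\cdot\tfrac{h}{k}\cdot M.
\]
Plugging in the velocity bound $M\leq 2(\|v_1\|+h\|\nabla f(v_0)\|)$ gives
\[
\bigl\|q''(t)-x''(t)\bigr\|\leq \frac{2M_2 h}{k}\bigl(\|v_1\|+h\|\nabla f(v_0)\|\bigr)=\frac{\epsilon}{h^2},
\]
which is exactly the claimed bound.

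The only real issue is the velocity bound, but it is entirely routine given the small step-size assumption $h\leq\frac{1}{2\sqrt{M_2}}$, which ensures the $M_2h^2M/2$ term is a contraction and allows $M$ to be solved for cleanly. Everything else (the piecewise construction, the $C^1$ matching, and the Lipschitz estimate) is mechanical, so I do not expect further obstacles.
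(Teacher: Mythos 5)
Your proof is correct and follows essentially the same route as the paper: both construct a $C^1$ piecewise-quadratic $q$ by making $q''$ constant on each of $k$ equal subintervals, and both reduce the error to the $M_2$-Lipschitzness of $\nabla f$ times the uniform velocity bound $\max_{0\le t\le h}\|x'(t)\|\le 2\left(\|v_1\|+h\|\nabla f(v_0)\|\right)$, obtained from the same integral inequality using $h\le\frac{1}{2\sqrt{M_2}}$. The only cosmetic differences are that the paper defines $q'$ by linear interpolation of $x'$ at the grid points and bounds the deviation through $\|x'''(t)\|\le M_2\|x'(t)\|$, solving the integral inequality via its cosh Gr\"onwall lemma, whereas you freeze $x''$ at the left endpoint and close the velocity estimate with an elementary self-bounding argument; both yield the stated $\epsilon$.
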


\begin{proof}
Let $x(t)$ be the solution of the ODE. We define a continuously differentiable
$k$-piece degree-2 polynomial $q$ by
\[
q(0)=v_{0}\text{ and }q'(t)=\frac{\d x(t_{\mathrm{pre}})}{\d t}\cdot\frac{t_{\mathrm{next}}-t}{t_{\mathrm{next}}-t_{\mathrm{pre}}}+\frac{\d x(t_{\mathrm{next}})}{\d t}\cdot\frac{t_{\mathrm{pre}}-t}{t_{\mathrm{next}}-t_{\mathrm{pre}}}
\]
with $t_{\mathrm{pre}}=\left\lfloor \frac{t}{h/k}\right\rfloor \cdot\frac{h}{k}$
and $t_{\mathrm{next}}=t_{\mathrm{pre}}+\frac{h}{k}$. Clearly, we
have that $q(0)=x(0)=v_{0}$ and $\frac{\d q}{\d t}(0)=\frac{\d x}{\d t}(0)=v_{1}$.
Also, we have that
\begin{align}
\left\| \frac{\d^{2}q}{\d t^{2}}(t)-\frac{\d^{2}x}{\d t^{2}}(t)\right\| _{2} & =\left\| \int_{t_{\mathrm{pre}}}^{t_{\mathrm{next}}}\frac{\d^{2}x}{\d t^{2}}(s)ds-\frac{\d^{2}x}{\d t^{2}}(t)\right\| _{2}\leq\frac{h}{k}\max_{0\leq t\leq h}\left\| \frac{\d^{3}x}{\d t^{3}}(t)\right\| _{2}\leq\frac{M_{2}h}{k}\max_{0\leq t\leq h}\left\| \frac{\d x}{\d t}(t)\right\| \label{eq:strongly_convex_smooth_q}
\end{align}
where we used that $\frac{\d^{3}x}{\d t^{3}}(t)=-\nabla^{2}f(x(t))\frac{\d x}{\d t}(t)$
at the end. 

Therefore, it suffices to bound the term $\max_{0\leq t\leq h}\left\| \frac{\d x}{\d t}(t)\right\| $.
Using again that $\frac{\d^{3}x}{\d t^{2}}(t)=-\nabla^{2}f(x(t))\frac{\d x}{\d t}(t)$,
we have that
\begin{align*}
\frac{\d x}{\d t}(t) & =\frac{\d x}{\d t}(0)+\frac{\d^{2}x}{\d t^{2}}(0)\cdot t+\int_{0}^{t}(t-s)\cdot\frac{\d^{3}x}{\d t^{3}}(s)ds\\
 & =v_{1}-\nabla f(v_{0})\cdot t-\int_{0}^{t}(t-s)\cdot\nabla^{2}f(x(s))\frac{\d x}{\d t}(s)ds.
\end{align*}
Hence, for $0\leq t\leq h$, we have that
\[
\left\| \frac{\d x}{\d t}(t)\right\| _{2}\leq\|v_{1}\|_{2}+\|\nabla f(v_{0})\|_{2}\cdot h+M_{2}\cdot\int_{0}^{t}(t-s)\left\| \frac{\d x}{\d t}(t)\right\| _{2}ds.
\]
Solving this integral inequality, Lemma \ref{lem:cosh_ODE} shows that 
\[
\max_{0\leq t\leq h}\left\| \frac{\d x}{\d t}(t)\right\| \leq\left(\|v_{1}\|_{2}+\|\nabla f(v_{0})\|_{2}\cdot h\right)\cdot\cosh(\sqrt{M_{2}}\cdot h)\leq2\left(\|v_{1}\|_{2}+\|\nabla f(v_{0})\|_{2}\cdot h\right)
\]
where we used that $h\leq\frac{1}{2\sqrt{M_{2}}}$. Applying this
in (\ref{eq:strongly_convex_smooth_q}) gives this result.
\end{proof}
The precise runtime depends on how accurate we need to solve the ODE,
namely, the parameter $\epsilon$ in Lemma \ref{lem:2nd_poly}. The
term $\|v_{1}\|_{2}$ can be upper bounded by $O(\sqrt{d})$ with high  probability 
since $v_{1}$ is sampled from normal distribution. The term $\|\nabla f(v_{0})\|_{2}$
is much harder to bound. Even for a random $v_0 ~ e^{-f}$, the worst case bound we can give is 
$$\|\nabla f(v_{0})\|_{2} = \|\nabla f(v_{0})-\nabla f(x^*)\|_{2} \leq M_2 \| v_0 - x^* \|_2 \lesssim \sqrt{\kappa M_{2}d}$$
where we used that $\| v_0 - x^* \|_2 \lesssim \sqrt{d/m_2}$ for random $v_0~e^{-f}$ \cite{dm16}. This is not enough for improving existing algorithms, as we would need $\sqrt{\kappa d}$ time per iteration. The crux of
this section is to show that $\|\nabla f(v_{0})\|_{2}=O(\sqrt{M_{2}d})$
for most of the iterations in the HMC walk if the process starts at the minimum of $f$. This is tight for quadratic $f$.
\begin{lemma}[Smoothness implies expected gradient is upper bounded]
\label{lem:amortized_norm_grad} Let $f$ be a function such that 
$- M_2 \cdot I\preceq \nabla^{2}f(x)\preceq M_{2}\cdot I$
for all $x\in\mathbb{R}^{d}$. Let $x^{(k)}$ be the starting point
of the $k^{th}$ step in $\textsc{HMC}(x^{(0)},f,\epsilon,h)$ (Algorithm~\ref{alg:hmc}) with step size $h\leq\frac{1}{8\sqrt{M_{2}}}$.
Then, we have that
\[
\frac{1}{N} \E \left[ \sum_{k=0}^{N-1}\|\nabla f(x^{(k)})\|_{2}^{2} \right] \leq O\left(\frac{f(x^{(0)})-\min_{x}f(x)}{h^{2}N}+M_{2}d + \frac{\overline{\epsilon}^2}{h^4} \right) 
\]
$\ov{\epsilon}$ is the error in solving the HMC defined in Algorithm~\ref{alg:hmc}.
\end{lemma}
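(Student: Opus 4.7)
I will use conservation of the Hamiltonian along the exact HMC flow to get a per-step expected decrease of $f$ proportional to $\|\nabla f(x^{(k-1)})\|^2$, then telescope and pay for the discretization error by AM--GM. Write one exact HMC step from $x^{(k-1)}$ with velocity $v_k\sim \mathcal N(0,I)$ as $(x(h),v(h))$, so the algorithm's iterate satisfies $\|x^{(k)}-x(h)\|_2\le\overline\epsilon$. Energy conservation gives
\[
f(x(h))-f(x^{(k-1)})=\tfrac12\bigl(\|v_k\|_2^2-\|v(h)\|_2^2\bigr)=\langle v_k,\eta_k\rangle-\tfrac12\|\eta_k\|_2^2,
\]
where $\eta_k:=v_k-v(h)=\int_0^h\nabla f(x(s))\,\d s$.

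The plan is to Taylor-expand $\eta_k$. Writing $\nabla f(x(s))=\nabla f(x^{(k-1)})+\tilde H_s(x(s)-x^{(k-1)})$ with the averaged Hessian $\|\tilde H_s\|\le M_2$, and $x(s)-x^{(k-1)}=sv_k+r(s)$ with $r(s)=-\int_0^s(s-\tau)\nabla f(x(\tau))\,\d\tau$, a Gr\"onwall-type argument using $h\le 1/(8\sqrt{M_2})$ controls $\|r(s)\|\lesssim s^2\|\nabla f(x^{(k-1)})\|+M_2 s^3\|v_k\|$. Taking $\mathbb E_{v_k}$: the first term of $\eta_k$ contributes $0$ to $\mathbb E\langle v_k,\eta_k\rangle$ since $v_k$ is centered and independent of $x^{(k-1)}$; the $v_k$-linear term contributes $\mathbb E\bigl[v_k^\top(\int_0^h s\tilde H_s\,\d s)v_k\bigr]\le\tfrac{h^2M_2 d}{2}$ via $\|\tilde H_s\|\le M_2$ and $\mathbb E\|v_k\|_2^2=d$; and all remaining cross-terms pick up an extra factor of $M_2h^2\ll 1$ and are therefore dominated by the leading $h^2M_2 d$. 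Analogously, $\mathbb E\|\eta_k\|_2^2\ge h^2\|\nabla f(x^{(k-1)})\|_2^2-O(h^4 M_2^2 d)$. Combining,
\[
\mathbb E\bigl[f(x(h))-f(x^{(k-1)})\bigr]\le \tfrac{h^2M_2 d}{2}-\tfrac{h^2}{2}\|\nabla f(x^{(k-1)})\|_2^2+\text{(negligible)}.
\]

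For the discretization error, $\nabla^2 f\preceq M_2 I$ gives $|f(x^{(k)})-f(x(h))|\le \|\nabla f(x^{(k)})\|_2\cdot\overline\epsilon+\tfrac{M_2}{2}\overline\epsilon^2$; applying $\|\nabla f(x^{(k)})\|_2\overline\epsilon\le \tfrac{h^2}{8}\|\nabla f(x^{(k)})\|_2^2+\tfrac{2\overline\epsilon^2}{h^2}$ produces a term that is absorbed into the $\tfrac{h^2}{2}\|\nabla f\|_2^2$ on the left side of the telescoping inequality. Summing over $k=0,\dots,N-1$ and using $f(x^{(N)})\ge\min f$,
\[
\tfrac{h^2}{4}\sum_{k=0}^{N-1}\mathbb E\|\nabla f(x^{(k)})\|_2^2\;\le\; f(x^{(0)})-\min f + O\bigl(Nh^2 M_2 d\bigr)+O\bigl(N\overline\epsilon^2/h^2\bigr),
\]
and dividing by $Nh^2/4$ yields the stated bound.

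The main obstacle will be the expectation manipulations in step two: both $\tilde H_s$ and $r(s)$ depend on $v_k$ through the ODE, so they cannot be pulled out of the Gaussian expectation. The key resolution is the strong step-size restriction $h\le 1/(8\sqrt{M_2})$, which makes $M_2h^2\le 1/64$ and forces every ``error'' term arising from the Taylor expansion to carry an extra factor $M_2h^2$; combined with $\mathbb E\|v_k\|_2^2=d$ this keeps all such terms smaller than the principal $h^2 M_2 d$ contribution, so operator-norm bounds on $\tilde H_s$ suffice without needing finer probabilistic control.
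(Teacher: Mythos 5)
Your proposal is correct in substance but reaches the key per-step estimate by a different mechanism than the paper. You use exact conservation of the Hamiltonian, $f(x(h))-f(x^{(k-1)})=\langle v_k,\eta_k\rangle-\tfrac12\|\eta_k\|_2^2$ with $\eta_k=\int_0^h\nabla f(x(s))\,\d s$, and then estimate the Gaussian expectations of the expanded momentum change; the paper instead differentiates $f$ along the flow, uses $\tfrac{\d^2}{\d t^2}f(x(t))=\dot x^\top\nabla^2 f\,\dot x-\|\nabla f(x(t))\|^2\le M_2\|\dot x\|^2-\|\nabla f(x(t))\|^2$, controls the drift of $\|\dot x(t)\|$ and of $\|\nabla f(x(t))\|$ away from their time-$0$ values via Lemma~\ref{lem:2nd_poly}-type integral inequalities, and integrates twice before taking the expectation over $v$. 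Both routes produce the same inequality $\E[f(x(h))]\le f(x^{(k-1)})+O(M_2h^2d)-\Omega(h^2)\|\nabla f(x^{(k-1)})\|^2$, and your handling of the discretization error and the telescoping is essentially identical to the paper's. Your route is arguably cleaner conceptually (the leading $\tfrac12 h^2M_2 d$ and $-h^2\|\nabla f\|^2$ terms appear directly from the two pieces of the energy identity), at the cost of more delicate expectation bookkeeping, which you correctly resolve by pointwise operator-norm bounds on $\tilde H_s$. Two points need tightening, though neither is fatal: (i) the cross terms such as $M_2h^3\sqrt d\,\|\nabla f(x^{(k-1)})\|$ (arising from $\tilde H_s r(s)$ in $\E\langle v_k,\eta_k\rangle$ and from the cross term in $\E\|\eta_k\|_2^2$) are \emph{not} dominated by $h^2M_2d$ alone; they must be split by AM--GM into a small multiple of $h^2\|\nabla f(x^{(k-1)})\|^2$, absorbed using the slack between your $\tfrac{h^2}{2}$ and $\tfrac{h^2}{4}$, plus an $O(M_2^2h^4d)\le O(M_2h^2d)$ remainder; and (ii) your AM--GM on the discretization term uses $\|\nabla f(x^{(k)})\|$ at the \emph{end} of the step, so after summing, the last step leaves an unmatched $\tfrac{h^2}{8}\E\|\nabla f(x^{(N)})\|^2$; fix this as the paper does, by converting the endpoint gradient back to the start-of-step gradient via the smoothness bound $\|\nabla f(x(h))\|\le\|\nabla f(x^{(k-1)})\|+2M_2h(\|v_k\|+h\|\nabla f(x^{(k-1)})\|)$ before applying AM--GM, so the absorption happens within the current step.
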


\begin{proof}
Consider one step of the HMC dynamic. Note that
\[
\frac{\d}{\d t}f(x(t))=\nabla f(x(t))^{\top}\frac{\d x}{\d t}.
\]
Hence, we have
\begin{align}
\frac{\d^{2}}{\d t^{2}}f(x(t)) 
= & ~ \frac{\d x}{\d t}^{\top}\nabla^{2}f(x(t))\frac{\d x}{\d t}+\nabla f(x(t))^{\top}\frac{\d^{2}x}{\d t^{2}} \nonumber \\
= & ~ \frac{\d x}{\d t}^{\top}\nabla^{2}f(x(t))\frac{\d x}{\d t}-\|\nabla f(x(t))\|^{2} & \text{~by~} \frac{\d^2 x}{\d t^2} = - \nabla f(x(t)) \nonumber \\
\leq & ~  M_{2}\cdot\|\frac{\d x}{\d t}\|_{2}^{2}-\|\nabla f(x(t))\|^{2}. & \text{~by~} \nabla^2 f(x(t)) \preceq M_2 \cdot I \label{eq:HMC_df}
\end{align}
In Lemma \ref{lem:2nd_poly}, we proved that $\|\frac{\d x}{\d t}\|_{2}\leq2\left(\|\frac{\d x}{\d t}(0)\|_{2}+\|\nabla f(x(0))\|_{2}\cdot h\right)$
for all $0\leq t\leq h$. Using this, we have that
\begin{align*}
\| \nabla f( x(t) ) - \nabla f( x(0) ) \| 
\leq & ~ M_2 \| x(t) - x(0) \|_2 \\
\leq & ~ M_2 \int_0^t \| \frac{\d x}{\d t} (t) \|_2 \d t \\
\leq & ~ 2 M_2 h \cdot\left(\|\frac{\d x}{\d t}(0)\|_{2}+\|\nabla f(x(0))\|_{2}\cdot h\right)
\end{align*}
which implies
\begin{equation}
\|\nabla f(x(t))\|=\|\nabla f(x(0))\|\pm2M_{2}h\left(\|\frac{\d x}{\d t}(0)\|+\|\nabla f(x(0))\|\cdot h\right).\label{eq:grad_f_changed}
\end{equation}
Using our choice of $h$, we have that $\|\nabla f(x(t))\|\geq\frac{1}{2}\|\nabla f(x(0))\|-2M_{2}h\cdot\|\frac{\d x}{\d t}(0)\|_{2}.$
Putting these estimates into (\ref{eq:HMC_df}) gives
\begin{align*}
 & ~ \frac{\d^{2}}{\d t^{2}}f(x(t)) \\
\leq & ~ 2M_{2}\left(\|\frac{\d x}{\d t}(0)\|^{2}+\|\nabla f(x(0))\|^{2}h^{2}\right)-\frac{1}{4}\|\nabla f(x(0))\|^{2}+2M_{2}h\cdot\|\nabla f(x(0))\|\cdot\|\frac{\d x}{\d t}(0)\|\\
\leq & ~ 2M_{2}\left(\|\frac{\d x}{\d t}(0)\|^{2}+\|\nabla f(x(0))\|^{2}h^{2}\right)-\frac{1}{4}\|\nabla f(x(0))\|^{2}+2 \cdot ( \frac{1}{8} \|\nabla f(x(0))\| )^2 + 2 \cdot ( 16 M_2 h \|\frac{\d x}{\d t}(0)\| )^2\\
= & ~ ( 2M_{2} + 512 M_2^2 h^2) \|\frac{\d x}{\d t}(0)\|^{2}-( \frac{1}{4} - 2 M_2 h^2 - \frac{1}{32} ) \|\nabla f(x(0))\|^{2} \\
\leq & ~ 10 M_2 \cdot\|\frac{\d x}{\d t}(0)\|^{2}- \frac{1}{8}\|\nabla f(x(0))\|^{2}
\end{align*}
where we used that $h\leq\frac{1}{8\sqrt{M_{2}}}$ at the last two
equations.

Since $\frac{\d x}{\d t}(0)$ is sampled from normal distribution, we
have that
\begin{align*}
\E [ f(x(h)) ] 
\leq & ~ f(x(0))+\int_{0}^{h}(h-t)\frac{\d^{2}}{\d t^{2}}f(x(t))\d t\\
= & ~ f(x(0))+5 M_{2}\cdot h^{2} \cdot \E \left[ \| \frac{\d x}{\d t} x(0) \|^2 \right] -\frac{h^{2}}{16}\|\nabla f(x(0))\|^{2} \\
= & ~ f(x(0))+ 5 M_{2}\cdot h^{2} \cdot d -\frac{h^{2}}{16}\|\nabla f(x(0))\|^{2} 
\end{align*}
where the last step follows from $\E [\| \frac{\d x}{\d t} (0) \|^2] = d$.

For the ODE starting from $x^{(k)}$, we have that
\[
\E [ f(x(h)) ] \leq f(x^{(k)})-\frac{h^{2}}{16}\|\nabla f(x^{(k)})\|^{2}+O(M_{2}\cdot d\cdot h^{2}).
\]
To compare $f(x(h))$ and $f(x^{(k+1)})$, we note that the distance
between $x^{(k+1)}$ and $x(h)$ is less than $\overline{\epsilon}$ in $\ell_2$ norm. 
Using (\ref{eq:grad_f_changed}) and the fact $\nabla^{2}f\preceq M_{2} I$,
the function value changed by at most
\[
\overline{\epsilon} \cdot\|\nabla f(x^{(k+1)})\|+M_{2}\cdot \overline{\epsilon}^{2}= 2\overline{\epsilon} \cdot\|\nabla f(x^{(k)})\|+O(\overline{\epsilon} M_{2}\sqrt{d}h+\overline{\epsilon}^{2}M_{2})
\]
with high probability. Hence, we have
\begin{align*}
\E [ f(x^{(k+1)}) ] 
\leq & ~ f(x^{(k)})+ 2\overline{\epsilon} \|\nabla f(x^{(k)})\|-\frac{h^{2}}{16}\|\nabla f(x^{(k)})\|^{2}+O(\overline{\epsilon} M_{2}\sqrt{d}h+\overline{\epsilon}^{2}M_{2}+M_{2}dh^{2}) \\
\leq & ~ f(x^{(k)})+ 2\overline{\epsilon} \|\nabla f(x^{(k)})\|-\frac{h^{2}}{16}\|\nabla f(x^{(k)})\|^{2}+O(\overline{\epsilon}^{2}M_{2}+M_{2}dh^{2}) \\
\leq & ~ f(x^{(k)})+ 2\overline{\epsilon}\|\nabla f(x^{(k)})\|-\frac{h^{2}}{16}\|\nabla f(x^{(k)})\|^{2}+O( \frac{ \overline{\epsilon}^2} {h^2} + M_{2}dh^{2}) \\
\leq & ~ f(x^{(k)})-\frac{h^{2}}{32}\|\nabla f(x^{(k)})\|^{2}+O(\frac{\overline{\epsilon}^{2}}{h^{2}} + M_{2}dh^{2}) 
\end{align*}
where the step follows from $2ab \leq a^2 + b^2$, the third step follows from our choice of $h$, the second last step follows by $2 \overline{\epsilon} \| \nabla f(x^{(k)}) \| \leq  \frac{h^2}{64} \| \nabla f(x^{(k)}) \|^2 + 64 \frac{\overline{\epsilon}^2}{h^2}$.

Summing $k$ from $0$ to $N-1$, we have
\begin{align*}
\sum_{k=0}^{N-1} \E \left[ f ( x^{(k+1)} ) - f( x^{(k)} ) + \frac{h^2}{32} \| \nabla f ( x^{(k)} ) \|^2 \right] \leq N \cdot O \left( \frac{\ov{\epsilon}^2}{ h^2} +  M_2 d h^2 \right)  
\end{align*}
Using $f( x^{(N)} ) \geq \min_x f(x)$ and reorganizing the terms gives the desired result.

\end{proof}

\begin{remark}
Both Lemma~\ref{lem:2nd_poly} and Lemma~\ref{lem:amortized_norm_grad} do not need convexity.
\end{remark}

\subsection{Sampling}

Now, we can apply our algorithm for second-order ODEs to the HMC dynamic. To control the gradient of the initial point in a simple way, we start at the algorithm at a local minimum of $f$.

\restate{thm:strongly_convex}

\begin{proof}
The number of iterations follows from Theorem \ref{thm:contraction_HMC} with 
\begin{align}\label{eq:our_choice_of_h_and_epsilon}
    h=\frac{m_{2}^{1/4}}{ 16000 M_{2}^{3/4}}.
\end{align}

To approximate the HMC dynamic, we apply the ODE algorithm (Theorem
\ref{thm:kth_order_ode_piecewise}). Now, we estimate the parameters in Theorem \ref{thm:kth_order_ode_piecewise}. Note that $L_{1}=0$,
$L_{2}=M_{2}$, $L=\sqrt{M_{2}}$, $T=h$, $LT\leq1/16000$, $\epsilon_{\mathrm{ODE}} = \frac{\epsilon \cdot \sqrt{m_2} h^2}{16}$. Hence, if the solution
is approximated by a $k$-piece degree 2 polynomial, we can find it
in $O(d k)\cdot\log(\frac{C}{\epsilon_{\mathrm{ODE}}})$ time and $O(k)\cdot\log(\frac{C}{\epsilon_{\mathrm{ODE}}})$
evaluations where 
\[
C=O(h^{2})\|\nabla f(v_{0})\|+h\|v_{1}\|
\]
with $v_{0}$ and $v_{1}$ are the initial point and initial velocity
of the dynamic. 

Finally, Lemma \ref{lem:2nd_poly} shows that the ODE can be approximated
using a 
\[
k\defeq\frac{2 \cdot M_{2}h^{3}\left(\|v_{1}\|_{2}+\|\nabla f(v_{0})\|_{2}\cdot h\right)}{\epsilon/(\sqrt{m_{2}}\cdot\kappa^{3/2})}
\]
piece degree 2 polynomials where $v_{0}$ is the initial point and
$v_{1}$ is the initial velocity. Since $v_{1}$ is sampled from normal
distribution, we have that
\[
k \lesssim \frac{1}{\epsilon} \left(\kappa^{\frac{1}{4}}\sqrt{d}+\frac{\|\nabla f(v_{0})\|_{2}}{\sqrt{M_{2}}}\right).
\]
in expectation. 

Now, we apply Lemma \ref{lem:amortized_norm_grad} with $\overline{\epsilon} = \sqrt{m_2}h^2 \eps/16$ and use the fact
that $f(x^{(0)})-\min_{x}f(x)\leq\frac{\|\nabla f(x^{(0)})\|_{2}^{2}}{m_{2}}$, we have
\begin{align*}
\E \left[ \frac{1}{N}\sum_{k=0}^{N-1}\|\nabla f(x^{(k)})\|_{2}^{2} \right] 
\lesssim & ~  \frac{\|\nabla f(x^{(0)})\|_{2}^{2}}{m_{2}h^{2}N} +  M_{2}d + \frac{ \overline{\epsilon}^2 }{ h^4 } \\
\lesssim & ~ \frac{\|\nabla f(x^{(0)})\|_{2}^{2}}{m_{2}h^{2}N} +  M_{2}d + m_2 \epsilon^{2} \\
\lesssim & ~  \|\nabla f(x^{(0)})\|_{2}^{2}+M_{2}d+m_{2}\epsilon^{2} \\
\lesssim & ~  \|\nabla f(x^{(0)})\|_{2}^{2}+M_{2}d
\end{align*}
where the third step follows from $N \geq \frac{1}{m_2 h^2}$, 
our choice of $h$ and $\epsilon$ (i.e., Eq.~\eqref{eq:our_choice_of_h_and_epsilon}), and we used that $\epsilon\leq\sqrt{d}$ at the end. Hence, the expected
number of evaluations per each HMC iterations (amortized over all HMC iterations) is 
\begin{align*}
O(k)\cdot\log(\frac{C}{\epsilon_{\mathrm{ODE}}}) 
\lesssim & ~ \frac{1}{\epsilon} \left(\kappa^{\frac{1}{4}}\sqrt{d}+\frac{\|\nabla f(x^{(0)}) \| + \sqrt{M_{2}d}}{\sqrt{M_{2}}}\right)\log\left(\frac{h^{2}(\|\nabla f(x^{(0)}) \| + \sqrt{M_{2}d})+h\sqrt{d}}{\epsilon_{\mathrm{ODE}}}\right)\\
\lesssim & ~ \frac{1}{\epsilon} \left(\kappa^{\frac{1}{4}}\sqrt{d}+\frac{\|\nabla f(x^{(0)}) \| }{\sqrt{M_{2}}}\right)\log\left(\frac{1}{\epsilon}\left(\frac{\|\nabla f(x^{(0)})\|}{\sqrt{m_{2}}}+\kappa^{3/4}\sqrt{d}\right)\right)
\end{align*}
where the last step follows from our choice of $\epsilon_{\mathrm{ODE}} = \frac{\epsilon \cdot \sqrt{m_2} h^2}{16}$ and our choice of $h=\frac{m_{2}^{1/4}}{ 16000 M_{2}^{3/4}}$. Since we start at the minimum of $f$, $\|\nabla f(x^{(0)}) \|=0$ and this gives the expected number of evaluations.

Similarly, we have the bound for expected time. This completes the proof.
\end{proof}

\section{Sampling from Incoherent Logistic Loss Functions and More}\label{sec:logistic}

In this section we prove Theorem \ref{thm:sampling_formal}. Since the function $$f(x) = \sum_{i=1}^n \phi_i(a_i^{\top} x) + \frac{m_2}{2} \|x\|^2_2,$$ the HMC dynamic for sampling $e^{-f(x)}$ is given by
\[
\frac{\d^{2}}{\d t^{2}}x(t)=-\nabla f(x(t))=-A^{\top}\phi'(Ax) - m_2 x
\]
where the $i^{th}$ row of $A \in \R^{n \times d}$ is $a_{i}^\top$, $\forall i \in [n]$ and $\phi':\R^n\rightarrow\R^n$ is defined by
\begin{align*}
\phi'(s) = \left( \frac{ \mathrm{d} }{ \mathrm{d} s_1 } \phi_1(s_1) , \frac{ \mathrm{d} }{ \mathrm{d} s_2 } \phi_2(s_2) , \cdots , \frac{ \mathrm{d} }{ \mathrm{d} s_n } \phi_n(s_n) \right).
\end{align*}

To simplify the proof, we let $s(t)=Ax(t)$. Then, $s$ satisfies
the equation
\begin{equation}\label{eq:s_ODE}
\frac{\d^{2}}{\d t^{2}}s(t)=F(s(t))\quad\text{where}\quad F(s)=-AA^{\top}\phi'(s) - m_2 s.
\end{equation}
Ignoring the term $m_2 s$, $F$ consists of two parts the first part $-AA^{\top}$
is linear and the second part is decoupled in each variable. This
structure allows us to study the dynamic $s$ easily. In this section, we discuss
how to approximate the solution of (\ref{eq:s_ODE}) using the collocation
method.

The proof consists of (a) bounding $\|s(t)\|_\infty$, (b) bounding Lipschitz constant of $F$ and
(c) showing that $s(t)$ can be approximated by a polynomial.

\subsection{$\ell_{\infty}$ bound of the dynamic}
\begin{lemma}[$\ell_{\infty}$ bound of the dynamic]\label{lem:l_inf_bound_of_the_dynamic}
Let $x^{(j)}$ be the $j^{th}$ iteration of the HMC dynamic defined
in Algorithm~\ref{alg:hmc} with 
\begin{align*}
f(x)=\sum_{i=1}^{n}\phi_{i}(a_{i}^{\top}x)+\frac{m_{2}}{2}\|x\|_{2}^{2}.
\end{align*}
Assume that $m_{2}\cdot I\preceq\nabla^{2}f(x)\preceq M_{2}\cdot I$
for all $x$, $|\phi'(s)|\leq M$ for all $s$, and that $\tau=\|AA^{\top}\|_{\infty\rightarrow\infty}$.
Let $s^{(j)}=Ax^{(j)}$. Suppose that the step size $T\leq\frac{1}{2\sqrt{m_{2}}}$,
we have that
\[
\max_{ j \in [ N ] }\|s^{(j)}-s^{(0)}\|_{\infty}= O \left( \sqrt{\frac{\tau}{m_{2}}} + \frac{\tau M}{m_{2}} \right) \cdot \log( dN / \eta)
\]
with probability at least $1-\eta$.
\end{lemma}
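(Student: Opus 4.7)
Throughout, we work with $s(t) = Ax(t)$, which during the $j$-th HMC step satisfies the ODE
\[
s''(t) = -AA^\top \phi'(s(t)) - m_2\, s(t),
\]
with $s(0) = s^{(j-1)}$ and $s'(0) = A v_j$, $v_j \sim \mathcal{N}(0, I_d)$. The plan exploits two structural features: (i) this is a coordinatewise harmonic oscillator with restoring constant $m_2$, and (ii) the nonlinear forcing is uniformly bounded in $\ell_\infty$ by $\|AA^\top \phi'(s)\|_\infty \le \|AA^\top\|_{\infty\to\infty} \|\phi'\|_\infty \le \tau M$. I will convert this oscillator-plus-bounded-forcing structure into a strict per-step contraction toward an equilibrium, then combine with a Gaussian tail bound on the $N$ velocity refreshes to pay only $\log N$ for the union over steps.

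First, identify the equilibrium. Let $s^{\ast\ast} := A x^{\ast\ast}$ for $x^{\ast\ast} = \arg\min f$; since $\nabla f(x^{\ast\ast}) = A^\top \phi'(s^{\ast\ast}) + m_2 x^{\ast\ast} = 0$, multiplying by $A$ gives $m_2 s^{\ast\ast} = -AA^\top \phi'(s^{\ast\ast})$, so $\|s^{\ast\ast}\|_\infty \le \tau M / m_2$. Setting $u(t) = s(t) - s^{\ast\ast}$ rewrites the ODE as $u''(t) + m_2 u(t) = -AA^\top[\phi'(s(t)) - \phi'(s^{\ast\ast})]$, whose right-hand side is bounded in $\ell_\infty$ by $2 \tau M$. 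Applying the explicit forced-harmonic-oscillator solution coordinatewise then yields, for each $i$,
\[
|u_i(T)| \;\le\; \cos(\sqrt{m_2}\,T)\,|u_i(0)| + \frac{\sin(\sqrt{m_2}\,T)}{\sqrt{m_2}}\,|(A v_j)_i| + \frac{2\tau M}{m_2}\bigl(1 - \cos(\sqrt{m_2}\,T)\bigr).
\]
For $T \le 1/(2\sqrt{m_2})$ we have $\rho := \cos(\sqrt{m_2}\,T) \le \cos(1/2) < 1$, giving a genuine strict per-step contraction in each coordinate toward $s^{\ast\ast}$.

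Iterating the recursion over $j$ and summing the geometric series produces
\[
|s^{(j)}_i - s^{\ast\ast}_i| \;\lesssim\; \rho^j\,|s^{(0)}_i - s^{\ast\ast}_i| \;+\; \frac{1}{\sqrt{m_2}} \max_{k \le j} |(A v_k)_i| \;+\; \frac{\tau M}{m_2},
\]
and the triangle inequality $\|s^{(j)} - s^{(0)}\|_\infty \le \|s^{(j)} - s^{\ast\ast}\|_\infty + \|s^{(0)} - s^{\ast\ast}\|_\infty$ absorbs the initial-state term into the $\tau M/m_2$ contribution (using that $x^{(0)}$ is taken near the minimizer, so $\|s^{(0)} - s^{\ast\ast}\|_\infty = O(\tau M/m_2)$). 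For the Gaussian kicks, each $(A v_k)_i = a_i^\top v_k \sim \mathcal{N}(0, \|a_i\|_2^2)$ with $\|a_i\|_2^2 = (AA^\top)_{ii} \le \|AA^\top\|_{\infty\to\infty} = \tau$. A standard Gaussian tail bound together with a union bound over the $O(dN)$ pairs $(i,k)$ gives $\max_{i,k}|(A v_k)_i| \lesssim \sqrt{\tau \log(dN/\eta)}$ with probability at least $1 - \eta$. Substituting gives the stated bound. The additional deviation $\|x^{(k)} - x(T)\|_2 \le \overline{\epsilon}$ introduced by the approximate ODE solver in Algorithm~\ref{alg:hmc} contributes only $\sqrt{\tau}\,\overline{\epsilon}$ per step in $\ell_\infty$ of $s$ and is negligible compared with the dominant terms.

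The main obstacle is securing a \emph{strictly} contractive per-step recursion. A crude energy estimate using $E_i(t) = \tfrac{1}{2}(s_i')^2 + \tfrac{m_2}{2}s_i^2$ yields $\sqrt{E_i(T)} \le \sqrt{E_i(0)} + \sqrt{2}\,\tau M T$, which has coefficient $1$ on the previous state; summing the resulting per-step increments over $N$ iterations would produce an $N$-linear bound and destroy the required $\log(N)$ dependence. The fix is to linearize around $s^{\ast\ast}$ and keep the explicit $\cos(\sqrt{m_2}T)$ damping factor, but this requires controlling the nonlinear feedback $\phi'(s) - \phi'(s^{\ast\ast})$ so it cannot overpower the restoring term. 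This is exactly where the incoherence hypothesis is essential: the perturbation scales only with $\|AA^\top\|_{\infty\to\infty} = \tau$ entrywise, whereas the restoring term contributes a full $m_2 I$, so the feedback remains subdominant and can be absorbed via a bootstrap that uses the target bound itself to verify the linearization is small.
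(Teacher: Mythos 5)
Your per-step analysis is fine (and in fact cleaner than the paper's Gr\"onwall-style estimate): centering at $s^{\ast\ast}$ and using Duhamel's formula for the forced oscillator gives exactly the recursion with contraction $\cos(\sqrt{m_2}T)$, Gaussian kick coefficient $\sin(\sqrt{m_2}T)/\sqrt{m_2}\le T$, and bias $O(\tau M T^2)$, which matches the paper's recursion $s_i^{(j+1)}=(1-\delta)s_i^{(j)}+\alpha_i N^{(j)}+\beta$ with $\delta\asymp m_2T^2$, $\alpha_i\le\sqrt{\tau}T$. The gap is in how you accumulate over the $N$ steps. First, the inequality ``$\rho:=\cos(\sqrt{m_2}T)\le\cos(1/2)$'' is backwards: $T\le\frac{1}{2\sqrt{m_2}}$ gives $\rho\ge\cos(1/2)$, and $\rho=1-\Theta(m_2T^2)$ can be arbitrarily close to $1$, so there is no per-step contraction by a universal constant. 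Second, because of this, summing the geometric series against $\max_k|(Av_k)_i|$ gives the kick term the coefficient $\frac{\sin(\sqrt{m_2}T)}{\sqrt{m_2}(1-\cos(\sqrt{m_2}T))}=\frac{\cot(\sqrt{m_2}T/2)}{\sqrt{m_2}}\approx\frac{2}{m_2T}$, not the $\frac{1}{\sqrt{m_2}}$ you assert; your middle display therefore does not follow from your per-step bound. With the correct coefficient your argument only yields $O\bigl(\frac{\sqrt{\tau\log(dN/\eta)}}{m_2T}+\frac{\tau M}{m_2}\bigr)$, which is weaker than the lemma by a factor $\frac{1}{\sqrt{m_2}T}$ — at the step sizes actually used in the paper ($h\lesssim m_2^{1/4}/M_2^{3/4}$, or smaller in Theorem 6.6) this is a loss of $\kappa^{3/4}$ or more, and it would break the downstream iteration counts.

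The missing idea is that the velocity refreshes are independent across iterations, so in the contracting recursion they accumulate at the variance level, i.e.\ like $\alpha_i/\sqrt{1-\rho^2}\approx\sqrt{\tau/m_2}$, not at the worst-case level $\alpha_i/(1-\rho)\approx\sqrt{\tau}/(m_2T)$. This is exactly what the paper's martingale lemma (Lemma 5.2, the exponential-moment bound for $X^{(i+1)}=(1-\delta)X^{(i)}+\alpha_iN^{(i)}+\beta$, yielding $\frac{a}{\sqrt{\delta}}+\frac{b+c}{\delta}$) supplies, and why the paper does not take absolute values step by step. Your setup could be repaired without that lemma: keep the signed Duhamel recursion and unroll it, observing that the kicks enter with deterministic coefficients $\rho^{\,j-k}\sin(\sqrt{m_2}T)/\sqrt{m_2}$, so their total contribution to coordinate $i$ is Gaussian with variance at most $\tau T^2/(1-\cos^2(\sqrt{m_2}T))=O(\tau/m_2)$, while the (trajectory-dependent but uniformly bounded) nonlinear feedback contributes at most $\frac{2\tau M}{m_2}$ after the geometric sum; a Gaussian tail plus a union bound over $dN$ pairs then gives the stated bound. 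Separately, note that your route needs $\|s^{(0)}-s^{\ast\ast}\|_\infty=O(\tau M/m_2)$, i.e.\ a start near the minimizer; the lemma statement does not grant this, and the paper's proof avoids it by bounding the per-step displacement relative to the current iterate rather than relative to the equilibrium.
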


\begin{proof}
We ignore the index of iteration $(j)$ and focus on how $s$ changes
within each ODE first.

For any $0\leq t\leq T$ and for any $i$, we have that
\begin{equation}\label{eq:s_bound_eq}
s_{i}(t)=s_{i}(0)+s_{i}'(0)t-\int_{0}^{t}(t-\ell)(AA^{\top}\phi'(s(\ell)))_{i} \d \ell-m_{2}\int_{0}^{t}(t-\ell)s_{i}(\ell) \d \ell.
\end{equation}
Using $\tau=\|AA^{\top}\|_{\infty \rightarrow \infty}$ and $|\phi'(s)|\leq M$,
we have
\begin{align*}
|s_{i}(t)-s_{i}(0)| & \leq T\cdot|s_{i}'(0)|+\int_{0}^{T}(T-\ell)\cdot\tau\cdot M \d \ell+m_{2}\int_{0}^{t}(t-\ell)|s_{i}(\ell)-s_{i}(0)| \d \ell+ \frac{1}{2} m_{2} T^{2} |s_{i}(0)|\\
 & =\frac{1}{2} m_{2}T^{2} \cdot |s_{i}(0)|+T\cdot|s_{i}'(0)|+\frac{1}{2} T^{2} \cdot\tau M+m_{2}\int_{0}^{t}(t-\ell)|s_{i}(\ell)-s_{i}(0)| \d \ell.
\end{align*}
Solving this integral inequality (Lemma~\ref{lem:cosh_ODE}), we get
\begin{align}\label{eq:s_i_t_minus_s_i_0}
|s_{i}(t)-s_{i}(0)| 
\leq & ~ \left(\frac{1}{2} m_{2}T^{2} \cdot|s_{i}(0)|+T\cdot|s_{i}'(0)|+\frac{1}{2} T^{2} \cdot\tau M\right)\cdot\cosh(\sqrt{m_{2}}t) \notag \\
\leq & ~ \frac{1}{4} |s_{i}(0)| +2T\cdot|s_{i}'(0)|+T^{2}\cdot\tau M
\end{align}
where we used that $t\leq T\leq\frac{1}{2\sqrt{m_{2}}}$.

Using this estimate back to (\ref{eq:s_bound_eq}), we have 
\begin{align*}
 & ~ \left|s_{i}(T)-s_{i}(0)-s_{i}'(0)T + \frac{1}{2} m_{2}T^{2} s_{i}(0)\right| \\
\leq & ~ \int_{0}^{T}(T-\ell)|AA^{\top}\phi'(s(\ell))_i| \d\ell+m_{2}\int_{0}^{T}(T-\ell)|s_{i}(\ell)-s_{i}(0)| \d \ell\\
\leq & ~ \int_0^T (T - \ell) \tau M \d \ell + m_2 \int_0^T (T - \ell) \left(\frac{1}{4} |s_{i}(0)| + 2T\cdot|s_{i}'(0)|+T^{2}\cdot\tau M\right) \d \ell \\
= & ~ \frac{1}{2} T^{2} \cdot\tau M + \frac{1}{2} m_{2}T^{2} \cdot\left(\frac{1}{4} |s_{i}(0)| + 2T\cdot|s_{i}'(0)|+T^{2}\cdot\tau M\right)\\
\leq & ~ \frac{1}{2} T^{2} \cdot\tau M + \frac{1}{2} m_{2}T^{2} \cdot\left(\frac{1}{4} |s_{i}(0)| + \frac{1}{\sqrt{m_2}} \cdot|s_{i}'(0)| +  \frac{1}{4 m_2} \cdot\tau M\right) \\
= & ~ \frac{5}{8} T^{2}\cdot\tau M+\frac{1}{8} m_{2}T^{2} |s_{i}(0)| + \frac{1}{2} \sqrt{m_{2}}T^{2}\cdot|s_{i}'(0)| 
\end{align*}
where the second step follows from $|A A^\top \phi'(s(\ell))_i|\leq \tau M$ and \eqref{eq:s_i_t_minus_s_i_0}, the third step follows from $\int_0^T (T - \ell) \d \ell = \frac{1}{2} T^2$, the fourth step follows from $T\leq\frac{1}{2\sqrt{m_{2}}}$. 

Note that we bounded how much each iteration of the HMC dynamic can change the solution.
Writing it differently, we have
\begin{align*}
s_{i}^{(j+1)}=(1- \frac{1}{2} m_{2}T^{2} )s_{i}^{(j)}+{s_{i}^{(j)}}'(0)\cdot T+\beta
\end{align*}
where
\begin{align*}
|\beta| \leq \frac{1}{4} m_{2}T^{2} |s_{i}^{(j)} (0)| + \sqrt{m_{2}}T^{2}\cdot|{s_{i}^{(j)}}'(0)| + \tau M \cdot T^2.
\end{align*}

Note that ${s_{i}^{(j)}}'(0)\sim {\cal N}(0,(A^{\top}A)_{i})$ and that
\begin{align*}
\lambda_{\max}(A^{\top}A)=\lambda_{\max}(AA^{\top})\leq\|AA^{\top}\|_{\infty\rightarrow\infty}=\tau.
\end{align*}
Therefore, ${s_{i}^{(j)}}'(0)T\sim \alpha_i \cdot {\cal N}(0,1)$ with $0 \leq \alpha_i \leq \sqrt{\tau}T$. 
Now, we simplify the dynamic to 
\begin{align*}
s_{i}^{(j+1)} = (1-\delta) s_i^{(j)} + \alpha_i N^{(j)} + \beta
\end{align*}
where $\delta := \frac{1}{2} m_{2}T^{2}$, $\alpha_{i} \leq a := \sqrt{\tau}T$, and
\begin{align*}
|\beta | \leq & ~ \frac{\delta}{2} |s_i^{(j)}(0)| + b |N^{(j)}| + c
\end{align*}
with $b:=\sqrt{m_{2}\tau}T^{2}$ and $c:= \tau M \cdot T^2$.

Applying Lemma \ref{lem:s_bound_martingale}, we have that 
\begin{align*}
\Pr \left[ \max_{ j \in [ N ] }|s_{i}^{(j)}-s_{i}^{(0)}|
\geq C \cdot \left( \frac{a}{ \sqrt{\delta} } + \frac{c+ b}{\delta}  \right) \log( N / \epsilon ) \right] \leq\epsilon
\end{align*}
for some constant $C$. Taking union bound
over $i \in [d]$, the bound follows from the calculation
\begin{align*}
\frac{a}{ \sqrt{\delta} } + \frac{c+ b}{\delta} 
= & ~ \frac{ \sqrt{2 \tau}T}{\sqrt{m_{2}}T} + \frac{\tau M \cdot T^2 +\sqrt{m_{2}\tau} \cdot T^{2}}{2 m_{2}T^{2}} \gtrsim  \sqrt{\frac{\tau}{m_{2}}} + \frac{\tau M}{m_{2} }.
\end{align*}
\end{proof}
\begin{lemma}[Bounding the Martingale]\label{lem:s_bound_martingale}
Let $X^{(i)}$ be a sequence of random
variable such that 
\[
X^{(i+1)} = (1-\delta) X^{(i)}+ \alpha_i N^{(i)}+ \beta
\]
where $N^{(i)}\sim {\cal N}(0,1)$ are independent, $\alpha_{i}\leq a$, $\beta\leq\frac{\delta}{2}|X^{(i)}|+b|N^{(i)}|+c$
with positive $a,b,c$ and $0<\delta\leq1$. For some constant universal
$C > 0$, we have that 
\begin{align*}
\Pr \left[ \max_{ i \in [ k ] }|X^{(i)}-X^{(0)}|\geq C\cdot \left( \frac{a}{\sqrt{\delta}}+\frac{c+b}{\delta} \right) \log( k / \epsilon ) \right] \leq \epsilon
\end{align*}
for any $0<\epsilon<1$.
\end{lemma}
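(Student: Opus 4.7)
The plan is to prove the concentration bound by unrolling the linear recurrence, applying sub-Gaussian concentration to the purely Gaussian part of the drift, and using a bootstrap/a-priori bound on $|X^{(j)}|$ to control the residual $\beta$-contribution. Iterating the recursion gives
\[
X^{(i)} - (1-\delta)^i X^{(0)} \;=\; \sum_{j=0}^{i-1}(1-\delta)^{i-1-j}\bigl(\alpha_j N^{(j)} + \beta^{(j)}\bigr),
\]
so it suffices to bound the right-hand side uniformly in $i\leq k$.

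The first step is to bound the pure Gaussian part $S^{(i)} := \sum_{j=0}^{i-1}(1-\delta)^{i-1-j}\alpha_j N^{(j)}$. Because the $N^{(j)}$ are independent standard Gaussians and $|\alpha_j|\leq a$, $S^{(i)}$ is a mean-zero Gaussian with variance at most $a^2\sum_{j\geq 0}(1-\delta)^{2j}\leq a^2/\delta$. A standard Gaussian tail bound together with a union bound over $i\in[k]$ yields $\max_{i\leq k}|S^{(i)}|\lesssim \tfrac{a}{\sqrt{\delta}}\sqrt{\log(k/\epsilon)}$ with probability at least $1-\epsilon/3$. In parallel, $\max_{j\leq k}|N^{(j)}|\lesssim \sqrt{\log(k/\epsilon)}$ with probability at least $1-\epsilon/3$; I will condition on these two events throughout.

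Next I establish an a-priori bound on $|X^{(i)}|$ by absorbing the $\tfrac{\delta}{2}|X^{(i)}|$ part of $\beta^{(i)}$ into the contractive factor:
\[
|X^{(i+1)}| \;\leq\; (1-\delta)|X^{(i)}| + a|N^{(i)}| + \tfrac{\delta}{2}|X^{(i)}| + b|N^{(i)}| + c \;\leq\; (1-\delta/2)|X^{(i)}| + (a+b)|N^{(i)}| + c.
\]
Iterating and using the conditional Gaussian bound gives
\[
|X^{(j)}| \;\leq\; (1-\delta/2)^j |X^{(0)}| \;+\; O\!\Bigl(\tfrac{(a+b)\sqrt{\log(k/\epsilon)}+c}{\delta}\Bigr).
\]
Plugging this bound back into $|\beta^{(j)}|\leq \tfrac{\delta}{2}|X^{(j)}| + b|N^{(j)}|+c$ and summing the weighted series $\sum_{j=0}^{i-1}(1-\delta)^{i-1-j}|\beta^{(j)}|$ using the geometric identity $\sum_{j\geq 0}(1-\delta)^j = 1/\delta$ shows that this contribution is $O\!\bigl((a+b)\sqrt{\log(k/\epsilon)}/\sqrt{\delta} + (b+c)/\delta\bigr)$. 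Combining with the Gaussian bound on $S^{(i)}$ and the triangle inequality $|X^{(i)}-X^{(0)}|\leq |X^{(i)} - (1-\delta)^i X^{(0)}| + (1-(1-\delta)^i)|X^{(0)}|$, together with the transient contraction of the $|X^{(0)}|$-term, yields the claimed bound.

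The main obstacle is the circularity caused by the dependence of $\beta^{(i)}$ on $|X^{(i)}|$: one needs a pointwise bound on $|X^{(i)}|$ to bound $|\beta^{(i)}|$, but $|X^{(i)}|$ itself is built from the $\beta^{(j)}$'s. The resolution is the bootstrap in the second step: because the $\tfrac{\delta}{2}$-coefficient in $|\beta^{(i)}|$ is strictly smaller than the $\delta$-coefficient of the contraction, the combined map is still a strict $(1-\delta/2)$-contraction, which lets the induction close without having to simultaneously reason about $\beta^{(i)}$ and $X^{(i)}$. A secondary technicality is applying the Gaussian maximal inequality uniformly over $i\leq k$ rather than pointwise, which is handled by a simple union bound at the cost of a factor $\sqrt{\log k}$ already present in the target estimate.
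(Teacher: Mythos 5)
Your route is genuinely different from the paper's: the paper runs an exponential-potential argument, bounding $\Phi^{(i)}=\E[e^{\lambda X^{(i)}}]$ with a threshold $\eta=\Theta((c+b+\lambda a^2)/\delta)$ and finishing by Markov, whereas you unroll the recursion, treat the Gaussian part $S^{(i)}$ exactly, and control the $\beta$-part by a bootstrap. The route is viable, but as written it has a concrete gap in the $\beta$-sum. Your a-priori bound gives $|X^{(j)}|\lesssim (1-\delta/2)^j|X^{(0)}|+\bigl((a+b)\sqrt{\log(k/\epsilon)}+c\bigr)/\delta$, so $\tfrac{\delta}{2}|X^{(j)}|\lesssim \delta(1-\delta/2)^j|X^{(0)}|+(a+b)\sqrt{\log(k/\epsilon)}+c$, and summing against weights $(1-\delta)^{i-1-j}$ (total weight $1/\delta$) yields a contribution of order $\bigl((a+b)\sqrt{\log(k/\epsilon)}+c\bigr)/\delta$, \emph{not} $(a+b)\sqrt{\log(k/\epsilon)}/\sqrt{\delta}+(b+c)/\delta$ as you claim. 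The term $a\sqrt{\log(k/\epsilon)}/\delta$ is not dominated by the target $C\bigl(a/\sqrt{\delta}+(b+c)/\delta\bigr)\log(k/\epsilon)$ when $\delta\log(k/\epsilon)\ll 1$, so the chain of estimates does not close. The fix is to keep the Gaussian cancellation inside the bootstrap: write $X^{(i)}=(1-\delta)^iX^{(0)}+S^{(i)}+B^{(i)}$ with $B^{(i)}=\sum_j(1-\delta)^{i-1-j}\beta^{(j)}$, bound $\max_i|B^{(i)}|\le \tfrac12\max_i|B^{(i)}|+\tfrac12\max_i|S^{(i)}|+O(|X^{(0)}|)+\bigl(b\sqrt{\log(k/\epsilon)}+c\bigr)/\delta$ using $\tfrac{\delta}{2}\cdot\tfrac{1}{\delta}=\tfrac12$, and solve this self-improving inequality; that recovers the $a/\sqrt{\delta}$ scaling.

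A second issue is the leftover term $(1-(1-\delta)^i)|X^{(0)}|$ from your final triangle inequality, which you dismiss as "transient contraction." It is not transient: for $i\gtrsim 1/\delta$ it is of order $|X^{(0)}|$ and cannot be absorbed into $C\bigl(a/\sqrt{\delta}+(b+c)/\delta\bigr)\log(k/\epsilon)$ (take $\beta^{(j)}=\tfrac{\delta}{2}|X^{(j)}|$, $a,b,c$ tiny, $X^{(0)}$ huge: then $X^{(i)}=(1-\delta/2)^iX^{(0)}$ and $|X^{(i)}-X^{(0)}|\approx|X^{(0)}|$). To be fair, this deviation term is intrinsic to the drift $\tfrac{\delta}{2}|X^{(i)}|$ allowed in the hypothesis and is glossed over in the paper's own proof as well (their Markov step leaves a $\max(X^{(0)},\eta)$-type dependence hidden in "picking an appropriate $\lambda$"), so it points at the statement rather than at your decomposition; but your assertion that the triangle inequality "yields the claimed bound" is unjustified as written, and any careful write-up along your lines must either carry this $|X^{(0)}|$-dependent term or invoke an additional assumption that controls it.
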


\begin{proof}
We will first show that $X^{(i)}$ cannot grow to large. The proof of the other direction is similar. Consider the potential
$\Phi^{(i)}=\E \left[ e^{\lambda X^{(i)}} \right]$. Note that
\begin{align*}
\Phi^{(i+1)}
\leq & ~ \E \left[ e^{\lambda((1-\delta)X^{(i)}+\alpha_i N^{(i)}+b|N^{(i)}|+c+\frac{\delta}{2}|X^{(i)}|)} \right] \\
\leq & ~ e^{\lambda c} \E \left[ e^{\lambda(1-\delta)X^{(i)}+\frac{\delta}{2}|X^{(i)}|} \right] \E \left[ e^{\lambda(\alpha_i  N^{(i)}+b|N^{(i)}|)} \right],
\end{align*}
where we used that $N^{(i)}$ are independent. Picking $\lambda\leq\frac{1}{a+b}$
and using $N^{(i)}\sim {\cal N}(0,1)$ and $|\alpha_i| \leq a$, we have that
\begin{align*}
\E \left[ e^{\lambda(\alpha_i  N^{(i)}+b|N^{(i)}|)} \right] \leq e^{O(\lambda b+\lambda^{2}a^{2})}.
\end{align*}
Therefore, for any $\eta>0$, we have
\begin{align*}
\Phi^{(i+1)} & \leq e^{O(\lambda c+\lambda b+\lambda^{2}a^{2})}\cdot\left(\E  \left[ e^{\lambda(1-\delta)X^{(i)}+\frac{\delta}{2}|X^{(i)}|}1_{X^{(i)}\leq \eta} \right] + \E \left[ e^{\lambda(1-\frac{\delta}{2})X^{(i)}}1_{X^{(i)}>\eta} \right] \right)\\
 & \leq e^{O(\lambda c+\lambda b+\lambda^{2}a^{2})}\cdot\left(e^{\lambda\eta}+\E \left[ e^{\lambda X^{(i)}-\frac{\lambda}{2}\delta\eta} \right] \right)\\
 & \leq e^{O(\lambda c+\lambda b+\lambda^{2}a^{2}+\lambda\eta)}+e^{O(\lambda c+\lambda b+\lambda^{2}a^{2})-\frac{\lambda}{2}\delta\eta}\Phi^{(i)}.
\end{align*}
Choose $\eta=\Theta(\frac{c+b+\lambda a^{2}}{\delta})$ such that
$O(\lambda c+\lambda b+\lambda^{2}a^{2})-\frac{\lambda}{2}\delta\eta\leq0$.
Hence, we have
\[
\Phi^{(i+1)}\leq e^{O(\lambda c+\lambda b+\lambda^{2}a^{2}+\lambda\eta)}+\Phi^{(i)}\leq e^{O(\frac{\lambda c+\lambda b+\lambda^{2}a^{2}}{\delta})}+\Phi^{(i)}.
\]

Picking an appropriate $\lambda$, we have the result.
\end{proof}

\subsection{Lipschitz constant of $F$}
Now, we bound the Lipschitz constant of function $F$.
\begin{lemma}[Lipschitz bound of function $F$]\label{lem:lipschitz_bound_of_function_F}
Let  $L_{\phi'}$ be the Lipschitz constant of $\phi'$, i.e.,
\begin{align*}
\| \phi'(s_1) - \phi'(s_2) \|_{\infty} \leq L_{\phi'} \| s_1 - s_2 \|_{\infty}, \forall s_1, s_2.
\end{align*}
The function $F$ defined in (\ref{eq:s_ODE}) has Lipschitz constant $(L_{\phi'} \tau + m_2)$ in $\ell_\infty$ norm where $ \tau = \| A A^\top \|_{\infty \rightarrow \infty}  $.
\end{lemma}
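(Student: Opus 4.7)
The plan is a direct two-line calculation using linearity of $F$ and the triangle inequality. I would start by writing
\begin{align*}
F(s_1) - F(s_2) = -AA^\top \bigl(\phi'(s_1) - \phi'(s_2)\bigr) - m_2 (s_1 - s_2),
\end{align*}
which follows immediately from the definition $F(s) = -AA^\top \phi'(s) - m_2 s$ in equation~\eqref{eq:s_ODE}.

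Next, I would take $\ell_\infty$ norms of both sides and apply the triangle inequality to split into a matrix term and a scalar term. For the matrix term, I use the definition of the induced $\ell_\infty \to \ell_\infty$ operator norm $\tau = \|AA^\top\|_{\infty\to\infty}$ to pull out a factor of $\tau$, and then apply the hypothesized Lipschitz bound on $\phi'$ to control $\|\phi'(s_1) - \phi'(s_2)\|_\infty$ by $L_{\phi'}\|s_1 - s_2\|_\infty$. The scalar term $\|m_2(s_1-s_2)\|_\infty$ is simply $m_2\|s_1-s_2\|_\infty$. Combining gives
\begin{align*}
\|F(s_1) - F(s_2)\|_\infty \leq \tau \cdot L_{\phi'} \|s_1 - s_2\|_\infty + m_2 \|s_1 - s_2\|_\infty = (L_{\phi'}\tau + m_2)\|s_1 - s_2\|_\infty,
\end{align*}
which is the claimed bound.

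There is no real obstacle here; the lemma is a bookkeeping step that packages the Lipschitz constant of the drift of the dynamic~\eqref{eq:s_ODE} in terms of the coordinatewise Lipschitz constant $L_{\phi'}$ of $\phi'$ and the incoherence parameter $\tau$. The only thing worth being slightly careful about is using the induced operator norm in the correct direction: since $\phi'$ is applied coordinatewise in the $n$-dimensional $s$-space and $AA^\top \in \mathbb{R}^{n\times n}$, the relevant norm is exactly $\|AA^\top\|_{\infty\to\infty}$, which matches the definition of $\tau$ used throughout Section~\ref{sec:logistic}. No further structural properties of $A$ or $\phi$ are needed.
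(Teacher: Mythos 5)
Your proposal is correct and follows essentially the same argument as the paper: decompose $F(s_1)-F(s_2)$ into the $-AA^\top(\phi'(s_1)-\phi'(s_2))$ term and the $-m_2(s_1-s_2)$ term, apply the triangle inequality, bound the first term via $\tau = \|AA^\top\|_{\infty\to\infty}$ and the Lipschitz constant $L_{\phi'}$, and the second term directly. (If anything, your write-up is slightly cleaner, since the paper's first displayed step writes an equality where the triangle inequality is actually being used.)
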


\begin{proof}
Note that
\begin{align*}
\left\| F( s_1 ) - F( s_2 ) \right\|_{\infty} 
= & ~ \|  A A^\top ( \phi'(s_1) - \phi'(s_2) ) \|_{\infty} + m_2 \| s_1 -s_2 \|_{\infty} \\ 
\leq & ~ \tau \cdot \| \phi'(s_1) - \phi'(s_2) \|_{\infty}  + m_2 \| s_1 -s_2 \|_{\infty} \\
\leq & ~ (\tau \cdot L_{\phi'}+m_2) \cdot \| s_1 - s_2 \|_{\infty} ,
\end{align*}
where the second step follows by $ \| A A^\top \|_{\infty \rightarrow \infty} = \tau $, and second step follows by $\phi'$ is $L_{\phi'}$-Lipschitz function.

\end{proof}

\begin{remark}\label{rem:lipschitz_bound_of_function_F}
If we think of the role of $F$ in our second order ODE \eqref{eq:s_ODE}, then $F$ is in fact independent of $\frac{\d s}{\d t}$. Therefore $L_1 = 0 $ and $L_2 = L_{\phi'} \tau + m_2$.
\end{remark}

\subsection{Existence of low-degree solutions}
Next, we establish bounds on the radius up to which the solution to the ODE (\ref{eq:s_ODE}) have a low-degree polynomial approximation.

\begin{lemma}[Low-degree polynomial approximation]\label{lem:existence_ode}
Assume that $\| A A^\top \|_{\infty \rightarrow \infty} = \tau$ and that for all $i$, $\phi'_i$ has Cauchy estimate $M$ with radius $r$, i.e.,
\begin{align*}
\forall l\geq0, \forall a \in \R, | (\phi_i)^{(l+1)} (a) | \leq M \cdot l ! \cdot r^{-l}.
\end{align*}
Let $s^*(t) \in \R^{n}$ denote the solution of the ODE (\ref{eq:s_ODE}).

For any 
\begin{align*}
0 \leq T \leq \frac{r}{4}\left( ( M\tau r+m_{2}r (r+\|s(0)\|_{\infty}))^{1/2}+\|s'(0)\|_{\infty}\right)^{-1}
\end{align*}
 and any $0<\epsilon<1$,
there is a degree $D = 2 \left\lceil 4+\log_{2}(1/\varepsilon)\right\rceil $ polynomial $q : \R \rightarrow \R^n$ such that 
\begin{align*}
q(0) = s^*(0), q'(0) = {s^*}'(0),\text{~and~} \left\| \frac{ \mathrm{d}^2 }{ \mathrm{d} t^2 } q(t) - \frac{ \mathrm{d}^2 }{ \mathrm{d} t^2 } s^*(t) \right\|_{\infty} \leq \frac{ \epsilon \cdot r }{ T^2 }, \forall t \in [0,T]
\end{align*}
\end{lemma}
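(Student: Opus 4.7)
The plan is to apply the Cauchy--Kowalevsky method of majorants, as advertised in the introduction, reducing the multivariate bound to the analysis of a scalar ODE whose radius of convergence we can read off. Let $s^*(t)$ solve \eqref{eq:s_ODE} and set $y(t)=s^*(t)-s^*(0)$ so $y(0)=0$, $y'(0)=s^{*\prime}(0)$, and
\[
y''(t)=-AA^{\top}\phi'\bigl(y(t)+s^*(0)\bigr)-m_2\bigl(y(t)+s^*(0)\bigr).
\]
The function $\phi'_i$ depends only on the $i$-th coordinate, and the Cauchy estimate $|\phi_i^{(l+1)}(a)|\le M\cdot l!\cdot r^{-l}$ is shift-invariant, so its Taylor series in $y_i$ about $0$ has coefficients bounded by $M/r^l$, i.e.\ it is majorized componentwise by $\frac{Mr}{r-y_i}$. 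Combined with $\|AA^{\top}\|_{\infty\to\infty}=\tau$, this motivates the scalar majorant ODE
\[
U''(t)=\tau M\cdot\frac{r}{r-U(t)}+m_2\bigl(U(t)+\|s^*(0)\|_{\infty}\bigr),\qquad U(0)=0,\ U'(0)=\|s^{*\prime}(0)\|_{\infty}.
\]

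I would first prove by strong induction on $k$ that $\|y^{(k)}(0)\|_{\infty}\le U^{(k)}(0)$ for all $k\ge 0$. The base cases $k=0,1$ are built in. For $k+2$, differentiate the ODE $k$ times using Faà~di~Bruno, observe that $y^{(k+2)}(0)$ is a polynomial with \emph{positive} combinatorial coefficients in $\|y^{(j)}(0)\|_{\infty}$ and $M\cdot l!\cdot r^{-l}$ and $\tau$ and $m_2$, and match it term-by-term with the analogous expansion for $U^{(k+2)}(0)$, where each $\phi'_i$-derivative factor is replaced by its majorant value $M\cdot l!\cdot r^{-l}$ and each $-AA^{\top}$ factor by $\tau$. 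The induction hypothesis then supplies the needed inequalities on the sub-factors.

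Next I would bound the radius on which $U$ stays small. Multiplying the ODE by $U'$ and integrating,
\[
(U'(t))^{2}=\|s^{*\prime}(0)\|_{\infty}^{2}-2\tau Mr\log\!\bigl(1-U(t)/r\bigr)+2m_2\|s^*(0)\|_{\infty}U(t)+m_2 U(t)^{2}.
\]
While $U(t)\le r/2$, the bound $-\log(1-x)\le 2x$ on $[0,1/2]$ gives $(U')^{2}\le \|s^{*\prime}(0)\|_{\infty}^{2}+K\,U$ with $K\lesssim \tau M+m_2(r+\|s^*(0)\|_{\infty})$, hence $U'\le \|s^{*\prime}(0)\|_{\infty}+\sqrt{K\,U}$. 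Integrating this differential inequality shows that $U(t)\le r/4$ throughout $[0,T^\star]$ for a threshold $T^\star$ of the form $\asymp r/((\tau Mr+m_2 r(r+\|s^*(0)\|_{\infty}))^{1/2}+\|s^{*\prime}(0)\|_{\infty})$, matching the hypothesis on $T$ after choosing the constant $1/4$. On a slightly larger disk (radius $\ge 2T$) the ODE for $U$ still has an analytic solution by the standard contraction argument, so Cauchy's estimate on $U$ yields $U^{(k)}(0)/k!\le (r/2)/(2T)^{k}$ for every $k\ge 1$.

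Finally I would take $q(t)$ to be the degree-$D$ Taylor polynomial of $s^*$ at $0$; by construction $q(0)=s^*(0)$ and $q'(0)=s^{*\prime}(0)$. The induction plus Cauchy's estimate give $\|s^{*(k)}(0)\|_{\infty}/k!\le (r/2)/(2T)^{k}$, so
\[
\Bigl\|\tfrac{\mathrm d^{2}}{\mathrm dt^{2}}s^*(t)-\tfrac{\mathrm d^{2}}{\mathrm dt^{2}}q(t)\Bigr\|_{\infty}\le\sum_{k\ge D-1}\binom{k+2}{2}\frac{\|s^{*(k+2)}(0)\|_{\infty}}{(k+2)!}\,t^{k}\lesssim\frac{r}{T^{2}}\sum_{k\ge D-1}k^{2}\,2^{-k}\le\frac{\varepsilon\,r}{T^{2}},
\]
once $D\ge 2\lceil 4+\log_2(1/\varepsilon)\rceil$, as claimed. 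The main obstacle I anticipate is making the inductive majorization airtight in the multivariate, operator-valued setting: the chain rule for $\phi'\circ(s^*(0)+y)$ followed by multiplication by $AA^{\top}$ has to be expanded so that every resulting coefficient is bounded by the corresponding coefficient in the scalar expansion of $\tau M\cdot r/(r-U)$, and the $\ell_{\infty}$-to-$\ell_{\infty}$ operator norm $\tau$ has to be applied cleanly at each level. Matching the precise constants in the scalar energy integration to the exact $T$-bound in the statement is routine but needs care.
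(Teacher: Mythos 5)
Your reduction to a scalar majorant is essentially the paper's own route: the componentwise Cauchy estimate together with $\|AA^\top\|_{\infty\to\infty}=\tau$ gives exactly the bound of Lemma~\ref{lemmaF}, and your induction matching Fa\`a di Bruno expansions with positive coefficients is the content of Lemma~\ref{lem:bound_D_k_F_u}; the final Taylor truncation is also the same. Where you diverge is in how you extract Taylor-coefficient bounds for the scalar majorant, and that is where your argument has a genuine gap. The paper never touches complex analysis at this point: it compares the scalar ODE $\psi''=a/(1-c\psi)$ with the explicitly solvable majorant $\tilde\psi(t)=\frac1c\bigl(1-\sqrt{1-\alpha t}\bigr)$ (Lemma~\ref{lem:solve_single_variable_ODE_better}), which yields $\psi^{(k)}(0)\le k!\,\alpha^k/c$, i.e.\ $\|s^{*(k)}(0)\|_\infty\le r\,k!\,\alpha^k$ with $\alpha T\le\tfrac12$, and the tail bound follows. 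You instead prove a real-variable energy bound $U(t)\le r/4$ on $[0,T^\star]$ and then assert that ``on a slightly larger disk (radius $\ge 2T$) the ODE for $U$ still has an analytic solution by the standard contraction argument, so Cauchy's estimate on $U$ yields $U^{(k)}(0)/k!\le (r/2)/(2T)^{k}$.'' That step does not follow as stated: the Picard/contraction argument guarantees analyticity only on a complex disk whose radius is dictated by local bounds on the right-hand side (and by the distance of $U$ to the singularity at $r$), not automatically a disk of radius $2T$; and your energy estimate, being obtained by multiplying by $U'$ and integrating along the real axis, gives no control of $|U(z)|$ for complex $z$, which is what Cauchy's integral estimate needs. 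Without a quantitative statement tying the radius of analyticity and the bound $|U|\le r/2$ on the full disk $|z|\le 2T$ to your threshold $T^\star$, the coefficient bound—and hence the whole tail estimate—is unproved.

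The gap is repairable, but it requires an additional idea. One route is to exploit that $U$ has nonnegative Taylor coefficients at $0$ (the right-hand side of your majorant ODE has nonnegative expansion coefficients and nonnegative initial data), so $\frac{U^{(k)}(0)}{k!}(2T)^k\le U(2T)$ whenever the Taylor series converges at $2T$; convergence there must then be argued, e.g.\ via Pringsheim's theorem combined with real-analytic continuation of $U$ up to $2T$ (which your bootstrap of the energy bound would have to supply, with the constant in the hypothesis on $T$ adjusted so that $2T\le T^\star$). The alternative—and simpler—repair is to do what the paper does: dominate your majorant ODE by one with an explicit closed-form solution, which gives the geometric coefficient bound directly and sidesteps any analytic-continuation issues. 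As written, your proposal is incomplete at precisely the step that the paper's Lemma~\ref{lem:solve_single_variable_ODE_better} is designed to handle.
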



%
%
%
%
First, we verify the condition in Lemma~\ref{lem:bound_D_k_F_u}.

\define{lemmaF}{Lemma}{{\rm(Bounding derivatives of $F$)}{\bf.}
Under the same assumptions in Lemma \ref{lem:existence_ode}, we have
\begin{align*}
\| D^k F( s ) [ \Delta_1, \Delta_2, \cdots, \Delta_k ]  \|_{\infty} \leq g^{(k)} (0) \cdot \prod_{j=1}^k \| \Delta_j \|_{\infty} , \forall k \geq 0, \forall \Delta_1, \cdots, \Delta_k.
\end{align*}
where $$g(x)= \frac{\tau \cdot M + m_2 \cdot (r + \| s \|_\infty) }{1 - r^{-1} x}.$$
}
\state{lemmaF}

\begin{proof}
Recall that $\phi'(x) = ( \phi_1'(x_1) , \phi_2'(x_2) , \cdots, \phi'_n(x_n) )$, $\forall x \in \R^n$. We have
\begin{align*}
  & ~ \| A A^\top D^k \phi'( s ) [ \Delta_1, \Delta_2, \cdots , \Delta_k ] \|_{\infty} \\
 \leq & ~ \tau \cdot \| D^k \phi'( s ) [ \Delta_1, \Delta_2, \cdots, \Delta_k ] \|_{\infty} & \text{~by~} \| A A^\top \|_{\infty \rightarrow \infty} = \tau \\
 \leq & ~ \tau \cdot \max_{i \in [n]} | \phi_i^{ ( k + 1 ) } | \cdot | \Delta_{1,i} | \cdot | \Delta_{2,i} | \cdots | \Delta_{k,i} | \\
 \leq & ~ \tau \cdot \max_{i \in [n]} | \phi_i^{ ( k + 1 ) } | \prod_{j=1}^k \| \Delta_j \|_{\infty} \\
 \leq & ~ \tau \cdot M \cdot k! \cdot r^{-k} \cdot \prod_{j = 1}^k \| \Delta_j \|_{\infty}.
\end{align*}

Next, the derivatives of the $m_2 s$ term in $F(s) = - A A^\top \phi'(s) - m_2 s$ can be bounded by the derivatives of $x \rightarrow m_2 (\|s\|_\infty + x)$, which then can be bounded by the derivatives of $x  \rightarrow m_2 (r + \|s\|_\infty)/(1-r^{-1}x)$. This explains the second part of the function $g$.
\end{proof}

\begin{proof}[Proof of Lemma \ref{lem:existence_ode}]

Theorem \ref{thm:bound_2ODE} and Lemma \ref{lemmaF} shows that
\begin{equation}
\|s^{*(k)}(0)\|_{\infty}\leq\frac{k!\alpha^{k}}{r^{-1}}=r\cdot k!\cdot\alpha^{k}\label{eq:s_k_bound}
\end{equation}
where 
\begin{align*}
\alpha= \max\left( \left( \frac{4}{3}\cdot(M\tau+m_{2}(r+\|s(0)\|_{\infty}))\cdot r^{-1} \right)^{1/2} , 2\cdot\|s'(0)\|_{\infty}r^{-1}\right).
\end{align*}
Since $s^{*}$ is real analytic at $0$ (Theorem \ref{thm:bound_2ODE}), around $t=0$,
we have
\[
s^{*}(t)=\sum_{k=0}^{\infty}\frac{s^{*(k)}(0)}{k!}t^{k}.
\]
Apply Theorem \ref{thm:bound_2ODE} repeatedly at every $t$ such that $s^{*}(t)$ is
defined, we can show that the above equation holds as long as the
right hand side converges.

Let $q(t)=\sum_{k=0}^{D}s^{*(k)}(0)t^{k}$. Then, we have that
\begin{align*}
\left\| \frac{ \d^{2}}{ \d t^{2}}q(t)-\frac{\d^{2}}{\d t^{2}}s^{*}(t)\right \| _{\infty} & \leq\left\| \sum_{k=D+1}^{\infty}\frac{s^{*(k)}(0)}{(k-2)!}t^{k-2}\right\| _{\infty}\\
 & \leq r\cdot\sum_{k=D+1}^{\infty}\frac{k!}{(k-2)!}\alpha^{k}t^{k-2}\\
 & \leq \frac{r}{T^2}\cdot\sum_{k=D+1}^{\infty}\frac{(k-1)k}{2^{k}}\\
 & = \frac{r}{T^2} \cdot2^{-D} (D^2 + 3 D + 4)\\
 & \leq 16 \frac{2^{-D/2}r}{T^{2}}
\end{align*}
where we used (\ref{eq:s_k_bound}) at the second inequality, $\alpha T\leq\frac{1}{2}$
at the third inequality, and $D\geq1$ at the fourth
inequality.
\end{proof}

\subsection{Main result}

\begin{theorem}[Formal version of Theorem~\ref{thm:sampling_informal}]\label{thm:sampling_formal}

Let $A=[a_{1};a_{2};\cdots;a_{n}]\in\R^{n\times d}$, $\phi_{i}:\R\rightarrow\R$
be a function, and 
\[
f(x)=\sum_{i=1}^{n}\phi_{i}(a_{i}^{\top}x)+\frac{m_{2}}{2}\|x\|^{2}.
\]
Let $\tau=\|AA^{\top}\|_{\infty\rightarrow\infty}$ and suppose that

1. $f$ has $M_{2}$ Lipschitz gradient,
i.e., $\nabla^{2}f(x)\preceq M_{2}\cdot I$ for
all $x$, 

2. $\phi_{i}'$ has Cauchy estimate $M$ with radius $r$, i.e., $\forall i\in[n],\ell\geq1,s\in\R$,
$|\phi_{i}^{(l+1)}(s)|\leq M\cdot l!\cdot r^{-l}$.

Starting at $x^{(0)}$, we can output a random
point $X$ such that
\[
\E_{Y\varpropto e^{-f}} \left[ \|X-Y\|_{2}^{2} \right] \leq \frac{\epsilon}{\sqrt{m_{2}}}
\]
using 
$N \lesssim k \log \left(\frac{k}{\epsilon} \left(\frac{\|\nabla f(x^{(0)}) \|^2}{m_2} + d \right)\right)$
iterations with 
$$k\lesssim \kappa^{1.5} + \frac{M \tau}{m_2 r} \log(d N) + \frac{\tau}{m_2 r^2}  \log^2(d N) \quad \text{with} \quad \kappa=\frac{M_{2}}{m_{2}}.$$
Each iteration takes
$O(d\log^{3}(\frac{1}{\delta}))$ time and $O(\log^{2}(\frac{1}{\delta}))$ 
evaluations to the function $\phi'$ and the matrix vector multiplications for $A$ and $A^\top$, with 
$$\delta=\Omega(\frac{1}{r})\sqrt{\frac{\lambda_{\min}(A^{\top}A)}{n\cdot m_{2}}}\cdot\frac{\epsilon}{k}.$$

\end{theorem}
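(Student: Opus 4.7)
The proof will combine three ingredients assembled earlier in the paper: (i) the HMC contraction guarantee (Theorem~\ref{thm:contraction_HMC}), which controls the outer iteration count; (ii) the piecewise ODE solver (Theorem~\ref{thm:kth_order_ode_piecewise}), applied in $\ell_\infty$ norm to the transformed system (\ref{eq:s_ODE}) for $s(t) = Ax(t)$; and (iii) the low-degree polynomial approximation of Lemma~\ref{lem:existence_ode}. The overarching plan is to choose the HMC step size $h$ as large as possible subject to both the HMC contraction constraint and to the $s$-ODE being solvable with few polynomial pieces, then bound the per-step cost using the a priori $\ell_\infty$ bounds on the dynamic.

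First I set $h \asymp m_2^{1/4}/M_2^{3/4}$, which is the largest step size allowed by Lemma~\ref{lem:HMC_contraction}, so that Theorem~\ref{thm:contraction_HMC} yields $N_{\mathrm{HMC}} = O(1/(m_2 h^2))\log(\cdot) = O(\kappa^{1.5})\log(\cdot)$ outer iterations at per-step target accuracy $\bar{\epsilon} = \Theta(m_2 h^2)\cdot\epsilon/\sqrt{m_2}$. Each HMC step reduces to solving (\ref{eq:s_ODE}) for $s$ on $[0,h]$; the original iterate $x(h)$ is then recovered by integrating the $x$-ODE, which once $s(\cdot)$ is known becomes linear in $x$ with polynomial forcing $-A^\top \phi'(s(\cdot))$, and so can be handled with the same basis in $\tilde O(d)$ work.

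Next I invoke Theorem~\ref{thm:kth_order_ode_piecewise} on (\ref{eq:s_ODE}). By Lemma~\ref{lem:lipschitz_bound_of_function_F} and Remark~\ref{rem:lipschitz_bound_of_function_F}, the relevant Lipschitz parameters are $L_1=0$, $L_2=L_{\phi'}\tau+m_2$, and $L=\sqrt{L_2}$, forcing each sub-interval to have length $T\leq 1/(16000L)$. On such a sub-interval Lemma~\ref{lem:existence_ode} further requires $T\lesssim r\bigl((M\tau r+m_2 r(r+\|s(0)\|_\infty))^{1/2}+\|s'(0)\|_\infty\bigr)^{-1}$ in order that a polynomial of degree $O(\log(1/\epsilon))$ suffice. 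To instantiate these two constraints uniformly over all HMC iterations I apply Lemma~\ref{lem:l_inf_bound_of_the_dynamic}, which gives $\|s^{(j)}-s^{(0)}\|_\infty=O(\sqrt{\tau/m_2}+\tau M/m_2)\log(dN)$ with high probability, and I bound $\|s'(0)\|_\infty=O(\sqrt{\tau\log(dN)})$ using $s'(0)=Av$ with $v\sim \mathcal{N}(0,I)$ and a union bound over the $N$ iterations. The number of polynomial pieces needed is then $h/T$, and the total (outer-times-inner) count $k$ is obtained by substituting these bounds.

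Simplifying, the three dominant contributions to $k$ are $\kappa^{1.5}$ (when $h$ saturates the HMC contraction bound, corresponding to the terms inside the radius that are independent of $\tau$), $M\tau/(m_2 r)\log(dN)$ (from the $M\tau r$ term under the square root in Lemma~\ref{lem:existence_ode}, using $\|s\|_\infty$), and $\tau/(m_2 r^2)\log^2(dN)$ (from the $\|s'(0)\|_\infty^2$ term), matching the theorem statement exactly. Each sub-interval takes $O(\log(1/\delta))$ collocation iterations of cost $\tilde O(d)$ by the nearly-linear basis multiplication in Lemma~\ref{lem:basis_piecewise_poly}, and the $\ell_\infty$ error on $s$ is converted to the required $\ell_2$ error on $x$ through the factor $1/\sqrt{\lambda_{\min}(A^\top A)}$ that appears in the stated $\delta$. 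The main obstacle will be the mild circularity that the $\|s\|_\infty$ and $\|s'(0)\|_\infty$ bounds depend on $\log(N)$ while $N$ itself depends on $k$ through those very bounds; this is resolved by observing that the feedback is only logarithmic so a constant-factor fixed point exists, giving the final $\log^2(dN)$ factor.
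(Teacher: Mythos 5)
Your list of ingredients is the right one (the contraction bound of Theorem~\ref{thm:contraction_HMC}, the change of variables $s=Ax$ and the $\ell_\infty$ bounds of Lemmas~\ref{lem:l_inf_bound_of_the_dynamic} and \ref{lem:lipschitz_bound_of_function_F}, the Cauchy-estimate approximation of Lemma~\ref{lem:existence_ode}, the recovery $x=(A^\top A)^{-1}A^\top s$ explaining the $\lambda_{\min}(A^\top A)^{-1/2}$ in $\delta$), but the central step of your plan does not go through. You fix $h\asymp m_2^{1/4}/M_2^{3/4}$, the largest step allowed by the contraction bound, and propose to compensate by cutting $[0,h]$ into sub-intervals of length $T\lesssim 1/L$ and invoking Theorem~\ref{thm:kth_order_ode_piecewise} with $h/T$ polynomial pieces. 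But in Theorem~\ref{thm:kth_order_ode_piecewise} the hypothesis $LT\leq 1/16000$ refers to the \emph{entire} time horizon of the ODE (here the full HMC step $h$), not to the length of an individual piece: the pieces only enlarge the approximating basis, while the collocation fixed-point argument (Lemma~\ref{lem:Lipschitz_Tphi}) still needs $\gamma_\varphi L h<1$ over the whole interval. With $L=\sqrt{L_2}=\sqrt{M\tau/r+m_2}$ and your choice of $h$, the product $Lh$ can be arbitrarily large, so the solver guarantee simply does not apply; and replacing it by solving piece-by-piece sequentially would require a new error-propagation analysis (with growth factors in $Lh$) that neither the paper nor your proposal provides. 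This is exactly why the paper instead \emph{shrinks} $h$ to satisfy all three constraints simultaneously (Eq.~\eqref{eq:h_condition_1}, Eq.~\eqref{eq:h_condition_2}, and $h\leq m_2^{1/4}/(2M_2^{3/4})$, leading to Eq.~\eqref{eq:we_pick_h}), so that a \emph{single} degree-$O(\log(1/\delta))$ polynomial suffices per step; the $\frac{M\tau}{m_2 r}\log(dN)$ and $\frac{\tau}{m_2 r^2}\log^2(dN)$ terms then show up in the iteration count $1/\theta\asymp 1/(m_2h^2)$, not in the per-iteration cost.

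Two consequences of your split also clash with the statement you are asked to prove. First, your per-iteration cost would be $O(d\cdot(h/T)\cdot\mathrm{polylog})$, not the claimed $O(d\log^3(1/\delta))$ with $O(\log^2(1/\delta))$ evaluations of $\phi'$. Second, your ``outer times inner'' count is of the form $\kappa^{1.5}\cdot\bigl(\text{pieces per step}\bigr)$, which after substitution gives geometric-mean-type terms such as $\kappa^{0.75}\sqrt{M\tau/(m_2 r)}$ rather than the stated sum $\kappa^{1.5}+\frac{M\tau}{m_2 r}\log(dN)+\frac{\tau}{m_2 r^2}\log^2(dN)$, so the claim that your bookkeeping ``matches the theorem statement exactly'' is not correct. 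A smaller gap: applying Lemma~\ref{lem:existence_ode} on interior sub-intervals requires bounds on $\|s(t)\|_\infty$ and $\|s'(t)\|_\infty$ at intermediate times inside an HMC step, whereas Lemma~\ref{lem:l_inf_bound_of_the_dynamic} (and the fresh-Gaussian bound on $s'(0)=Av$) only controls these quantities at the step boundaries; this would need an extra argument, which the paper avoids entirely by using one polynomial per step.
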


\begin{proof}
The proof consists of bounding the cost of each HMC step in Algorithm~\ref{alg:hmc} and bounding the number of steps. 

\textbf{Cost per iteration:}

As we discussed in this section, we consider the ODE \eqref{eq:s_ODE} instead. We will use Theorem~\ref{thm:kth_order_ode_piecewise} to solve the ODE \eqref{eq:s_ODE}. Hence, we need to bound the parameters of Theorem~\ref{thm:kth_order_ode_piecewise}, which are summarized in Table~\ref{tab:summary_of_parameters_theorem_ode}.

\begin{table}[t]
\begin{center}
    \begin{tabular}{ | l | l | l |}
    \hline
    Parameters & Value & Source  \\ \hline
    $k$ & $2$ & Eq.~\eqref{eq:s_ODE}  \\ \hline
    $L_1$ & $0$ & Lemma~\ref{lem:lipschitz_bound_of_function_F} and Remark~\ref{rem:lipschitz_bound_of_function_F} \\ \hline
    $L_2$ & $ \frac{M \tau}{r} + m_2$ & Lemma~\ref{lem:lipschitz_bound_of_function_F} and Remark~\ref{rem:lipschitz_bound_of_function_F} \\ \hline
    $D$ & $  O( \log (1/\delta) )$ & Lemma~\ref{lem:existence_ode} \\ \hline
    $C$ & $ O(r)$& Eq.~\eqref{eq:C_upper_bounded_by_O_r} \\ \hline
    $\epsilon_{\mathrm{ODE}}$ & $O(\delta r)$ & Eq.~\eqref{eq:eps_delta_r} \\ \hline
    \end{tabular}
\end{center}
\caption{Summary of parameters of Theorem~\ref{thm:kth_order_ode_piecewise}}\label{tab:summary_of_parameters_theorem_ode}
\end{table}

\textbf{Parameters $D$ and $\epsilon_{\mathrm{ODE}}$:}
Let $s$ denote its solution. Lemma~\ref{lem:existence_ode} shows that if
\begin{equation}\label{eq:h_condition_1}
h\leq\frac{r}{4}\left( \left( M\tau\cdot r+m_{2}r(r+\|s(0)\|_{\infty}) \right)^{1/2} + \|s'(0)\|_{\infty}\right)^{-1},
\end{equation}
for any $\delta>0$, there is a degree $O( \log ( 1 / \delta ) )$
polynomial $q$ such that
\begin{align}
q(0)=s(0),\quad q'(0)=s'(0),\quad\text{and}\quad\left\| \frac{\d^{2}}{\d t^{2}}q(t)-\frac{ \d^{2}}{\d t^{2}}s(t)\right\| _{\infty}\leq\frac{\delta\cdot r}{T^{2}}\text{ for }t\in[0,h].\label{eq:eps_delta_r}
\end{align}

\textbf{Parameter $C$:}
To apply Theorem~\ref{thm:kth_order_ode_piecewise}, we first show parameter $C \leq O(r)$ as follows:
\begin{align}\label{eq:C_upper_bounded_by_O_r}
C \lesssim & ~ h\cdot\left(h \left\| F(s(0)) \right\| _{\infty}+\|s'(0)\|_{\infty}\right) \notag \\
 \leq & ~ h \cdot \left(  h \left( \left\| AA^{\top}\phi'(s(0))\right\|_{\infty} + \| m_2 s(0) \|_{\infty} \right) +\|s'(0)\|_{\infty} \right) \notag \\
 \leq & ~ h^2 \tau \left\| \phi'(s(0))\right\| _{\infty} + h^2 m_2 \| s(0) \|_{\infty} + h\|s'(0)\|_{\infty} \notag \\
 \leq & ~ h^2 \tau M + h^2 m_2 \| s(0) \|_{\infty} + h \|s'(0)\|_{\infty} \notag \\
 \leq & ~ \frac{r}{16} + \frac{r}{16} + \frac{r}{4} \leq r.
\end{align}
where the second step follows from \eqref{eq:s_ODE} and triangle inequality, the third step follows from \\$\| A A^\top \|_{\infty \rightarrow \infty} = \tau$, the fourth step follows from $\phi'$ has Cauchy estimate $M$ with radius $r$, and the last step follows from \eqref{eq:h_condition_1}.

Now, Theorem~\ref{thm:kth_order_ode_piecewise} shows that if $h$ satisfies (\ref{eq:h_condition_1})
and that
\begin{align}\label{eq:h_condition_2}
h \leq  \frac{1}{16000} \frac{1}{L}
=  \frac{1}{16000} \frac{1}{ L_1 + \sqrt{L_2} } 
=  \frac{1}{16000}\sqrt{\frac{r}{M\tau + m_2 r}},
\end{align}
then, we can find $p$ such that
\begin{equation}
\left\| s(h)-p\right\| _{\infty}\leq O(\delta\cdot r)\label{eq:HMC_1_step_err}
\end{equation}
using $O(\log(\frac{1}{\delta})\log(\frac{C}{\delta\cdot r}))=O(\log^{2}(\frac{1}{\delta}))$
evaluations of $\phi_{i}'$ and $O(d\log^{2}(\frac{1}{\delta})\log(\frac{C}{\delta\cdot r}))=O(d\log^{3}(\frac{1}{\delta}))$
time. 

To understand the condition in Eq.~\eqref{eq:h_condition_1}, we note that $s'(0)=Av$ where
$v\sim {\cal N}(0,I)$. Hence, ${s^{*}}'(0)\sim {\cal N} (0,A^{\top}A)$. Note that
$\lambda_{\max}(A^{\top}A)=\lambda_{\max}(AA^{\top})\leq\|AA^{\top}\|_{\infty\rightarrow\infty}=\tau$.
Hence, we have that 
\[
\|s'(0)\|_{\infty}=O(1)\cdot\sqrt{\tau\cdot\log ( d N / \eta )}
\]
with probability at least $1-\eta$ probability for all $N$ iterations.
In Lemma~\ref{lem:l_inf_bound_of_the_dynamic}, we proved that
\begin{align*}
\|s(0)\|_{\infty}=O\left(\sqrt{\frac{\tau}{m_{2}}}+\frac{\tau M}{m_{2}}\right)\cdot \log( dN / \eta )
\end{align*}
with probability at least $1-\eta$ for all $N$ iterations.

Putting the bound on $\|s'(0)\|$ and $\|s(0)\|$ into the right hand side of Eq.~\eqref{eq:h_condition_1} gives
\begin{align}\label{eq:decided_h}
&\frac{1}{r}\left(\sqrt{M\tau\cdot r+m_{2}r(r+\|s(0)\|_{\infty})}+\|s'(0)\|_{\infty}\right) \\
 \lesssim & ~ \sqrt{\frac{M\tau}{r}}+\sqrt{m_{2}}+ \sqrt{\frac{m_{2}}{r} \left(\sqrt{\frac{\tau}{m_{2}}}+\frac{\tau M}{m_{2}}\right) \log( d N / \eta ) }  +\frac{\sqrt{\tau}}{r} \sqrt{\log (d N /\eta)} \notag \\
 \leq & ~ \frac{M_{2}^{3/4}}{m_{2}^{1/4}} +\left( \frac{ ( \tau m_2 )^{1/4}}{r^{1/2}} + \sqrt{ \frac{M \tau}{r} } + \frac{\sqrt{\tau}}{r} \right)\sqrt{\log( d N / \eta )} \notag \\
 \leq & ~ \frac{M_{2}^{3/4}}{m_{2}^{1/4}} +\left( \frac{\sqrt{\tau}}{r} \sqrt{ \log ( d N / \eta) } + \frac{\sqrt{m_2}}{ \sqrt{\log (d N / \eta)} } + \sqrt{ \frac{M \tau}{r} } + \frac{\sqrt{\tau}}{r} \right)\sqrt{\log( d N / \eta )} \notag \\
 \lesssim & ~ 
 \frac{M_{2}^{3/4}}{m_{2}^{1/4}} + \frac{\sqrt{\tau}}{r}  \log ( d N / \eta) + 
  \sqrt{\frac{M\tau}{r}}\sqrt{\log( d N / \eta )}  \notag
\end{align}
where the second step follows by Eq.~\eqref{eq:h_condition_1}, the third step follows by $\sqrt{m_2} \leq \frac{ M_2^{3/4} }{ m_2^{1/4} }$ and $\sqrt{ \log (d N / \eta)} \geq 1$ , the fourth step follows by $ab \leq a^2 + b^2$, the fifth step follows by $\sqrt{ \log (d N / \eta)} \geq 1$ and $\sqrt{m_2} \leq \frac{ M_2^{3/4} }{ m_2^{1/4} }$.

Therefore, 
\begin{align}\label{eq:we_pick_h}
h = \Theta \left( \sqrt{\frac{M\tau}{r}}\sqrt{\log( d N / \eta )} + \frac{\sqrt{\tau}}{r} \log ( d N / \eta) + \frac{M_{2}^{3/4}}{m_{2}^{1/4}} \right)^{-1}
\end{align} 
satisfies the condition in Eq.~\eqref{eq:h_condition_1} and Eq.~\eqref{eq:h_condition_2}. It also satisfies the condition in Theorem~\ref{thm:contraction_HMC} ($h \leq \frac{m_{2}^{1/4}}{2 M_{2}^{3/4}}$).

Next, we note that the corresponding HMC dynamic $x^{*}(h)$ is given
by
\[
x^{*}(h)=(A^{\top}A)^{-1}A^{\top}s^{*}(h).
\]
Let $p$ be the approximate of $s^{*}(h)$ we find using Theorem~\ref{thm:contraction_HMC} and
$q=(A^{\top}A)^{-1}A^{\top}p$, then, we have
\begin{align*}
  \| x^{*}(h) - q \| _{2} 
 = & ~ \| (A^\top A)^{-1} A^\top s^*(h) - (A^\top A)^{-1} A p \|_2 \\
\leq & ~ \| (A^\top A)^{-1} A^\top \|_{2 \rightarrow 2} \cdot \| s^*(h) - p \|_2 \\
\leq & ~ \|(A^{\top}A)^{-1}A^{\top}\|_{2 \rightarrow 2}\cdot O(\sqrt{n}\cdot r\cdot\delta) \\
\leq & ~ (\lambda_{\min}(A^{\top}A))^{-\frac{1}{2}}\cdot O(\sqrt{n}\cdot r\cdot\delta),
\end{align*}
where the first step follows by definition of $x^*(h)$ and $q$, the second step follows by definition of $\| \|_{2 \rightarrow 2}$ norm, the third step follows from $\| s^*(h) - p \|_2 \leq O(\sqrt{n} r \delta)$ (implied by \eqref{eq:HMC_1_step_err} and $\| \|_2 \leq \sqrt{n} \| \|_{\infty}$).

Using such $q$ as an approximation of $x^{*}(h)$ in Algorithm~\ref{alg:hmc},
Theorem~\ref{thm:contraction_HMC} shows that the $W_{2}$ error of the sampled point is bounded by
\[
O\left((\lambda_{\min}(A^{\top}A))^{-\frac{1}{2}}\cdot\sqrt{n}\cdot r\cdot\delta\right) \leq \frac{ \epsilon \cdot \theta }{ 2 \sqrt{ m_2 } }.
\]
where $\theta = \frac{m_2 h^2}{8}$ and the last step follows from picking
\begin{align*}
\delta = c\cdot\frac{1}{r}\sqrt{\frac{\lambda_{\min}(A^{\top}A)}{n\cdot m_{2}}}\cdot \epsilon \cdot \theta
\end{align*}
for some small enough $c$. The cost of each iteration follows from Theorem~\ref{thm:kth_order_ode_piecewise} and all parameters we pick.

\textbf{Number of iterations:}

Theorem~\ref{thm:contraction_HMC} shows that the number of iterations is
$$O\left(\frac{1}{\theta}\right)\cdot\left(\log\left(\frac{1}{\theta\cdot\epsilon}\right)+\log\left(\frac{\|\nabla f(x^{(0)})\|^2}{m_{2}}+d\right)\right).$$
Finally, we bound the term $1/\theta$ as follows
\begin{align*}
\frac{1}{\theta} & \lesssim\frac{1}{m_{2}}\left(\frac{M\tau}{r}\log\left(\frac{dN}{\eta}\right)+\frac{\tau}{r^{2}}\log^{2}\left(\frac{dN}{\eta}\right)+\frac{M_{2}^{3/2}}{\sqrt{m_{2}}}\right)\\
 & =\kappa^{1.5}+\frac{M\tau}{m_{2}r}\log\left(\frac{dN}{\eta}\right)+\frac{\tau}{m_{2}r^{2}}\log^{2}\left(\frac{dN}{\eta}\right).
\end{align*}
where we used $\kappa = M_2 / m_2$ and \eqref{eq:we_pick_h}.
\end{proof}




\subsubsection*{Acknowledgement}
This work was supported in part by NSF Awards CCF-1740551, CCF-1749609, DMS-1839116, CCF-1563838, CCF-1717349, and E2CDA-1640081.

\appendix

\newpage
\addcontentsline{toc}{section}{References}
\bibliographystyle{alpha}
\bibliography{ref}
\newpage

\section*{Appendix}
\section{Preliminaries}

\subsection{Notation}
For any function $f$, we define $\wt{O}(f)$ to be $f\cdot \log^{O(1)}(f)$. In addition to $O(\cdot)$ notation, for two functions $f,g$, we use the shorthand $f\lesssim g$ (resp. $\gtrsim$) to indicate that $f\leq C g$ (resp. $\geq$) for an absolute constant $C$. 

\begin{definition}[$p \rightarrow q$ norm]
Given matrix $A \in \R^{n \times d}$, we define $\| \|_{p \rightarrow q}$ norm as follows
\begin{align*}
\| A \|_{p \rightarrow q} = \max_{ x \in \R^d } \frac{ \| A x \|_q }{ \| x \|_p }.
\end{align*}
\end{definition}
$\| A \|_{\infty \rightarrow \infty}$ is a special case where $p = \infty$ and $q = \infty$.

\begin{definition}[Wasserstein distance]
The $k$-th Wasserstein distance between two probability measure $\mu$ and $\nu$ is
\begin{align*}
W_k ( \mu , \nu ) = \left( \inf_{ (X,Y) \in {\cal C} (\mu,\nu) } \E \left[ \| X - Y \|^k \right] \right)^{1/k},
\end{align*}
where ${\cal C}(\mu,\nu)$ is the set of all couplings of $\mu$ and $\nu$.
\end{definition}


\begin{definition}[Cauchy's estimates]
We say function $\phi$ has Cauchy estimate $M$ and radius of convergence $r$, if for all $x\in \R$ and for all integers $l \geq 0$
\begin{align*}
| \phi^{(l)}( x ) | \leq M \cdot l! \cdot r^{-l}.
\end{align*}
\end{definition}

\begin{lemma}\label{lem:tau}
If the number of non-zeros for each row of $AA^{\top}$ is bounded
by $s$, then
\[
\lambda_{\max}(AA^{\top})\leq\|AA^{\top}\|_{\infty\rightarrow\infty}\leq\sqrt{s}\cdot\lambda_{\max}(AA^{\top}).
\]
\end{lemma}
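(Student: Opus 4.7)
}
Write $M=AA^\top$, which is symmetric positive semidefinite, so $\lambda_{\max}(M)$ coincides with both the spectral radius $\rho(M)$ and the operator norm $\|M\|_{2\to 2}$. The plan is to handle the two inequalities separately, using only (a) the general fact that the spectral radius is bounded by any induced matrix norm and (b) a Cauchy--Schwarz step that converts an $\ell_1$ row sum into an $\ell_2$ row norm at the cost of a $\sqrt{s}$ factor.

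For the lower bound $\lambda_{\max}(M)\leq \|M\|_{\infty\to\infty}$, I will invoke the standard fact $\rho(M)\leq \|M\|$ for every induced operator norm $\|\cdot\|$ (pick an eigenvector and apply the norm to $Mx=\lambda x$). Since $M$ is PSD this yields $\lambda_{\max}(M)=\rho(M)\leq \|M\|_{\infty\to\infty}$ immediately. Alternatively, one can use that $M$ is symmetric so $\|M\|_{1\to 1}=\|M\|_{\infty\to\infty}$ and apply Riesz--Thorin to get $\|M\|_{2\to 2}\leq \sqrt{\|M\|_{1\to 1}\|M\|_{\infty\to\infty}}=\|M\|_{\infty\to\infty}$; either route works.

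For the upper bound, fix any row index $i$ and let $M_{i,*}$ denote the $i$-th row of $M$. By assumption it has at most $s$ nonzero entries, so Cauchy--Schwarz gives
\[
\|M_{i,*}\|_1 \;=\; \sum_{j:\,M_{ij}\neq 0}|M_{ij}|\;\leq\;\sqrt{s}\,\|M_{i,*}\|_2.
\]
Since $M$ is symmetric, $\|M_{i,*}\|_2^2=\sum_j M_{ij}^2=(M^2)_{ii}=e_i^\top M^2 e_i\leq \lambda_{\max}(M^2)=\lambda_{\max}(M)^2$, so $\|M_{i,*}\|_2\leq \lambda_{\max}(M)$. Taking the maximum over $i$ gives $\|M\|_{\infty\to\infty}=\max_i\|M_{i,*}\|_1\leq \sqrt{s}\,\lambda_{\max}(M)$, which is the desired upper bound.

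There is no substantive obstacle here; both inequalities are short and follow from standard facts about symmetric PSD matrices together with the row-sparsity assumption. The only point worth being careful about is that the bound $(M^2)_{ii}\leq \lambda_{\max}(M)^2$ uses symmetry of $M$ (so that $M^2$ is PSD with eigenvalues $\lambda_i(M)^2$), which is exactly what the hypothesis $M=AA^\top$ supplies.
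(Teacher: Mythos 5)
Your proposal is correct and follows essentially the same route as the paper: the lower bound via applying the $\infty\to\infty$ norm to the top eigenvector (equivalently, spectral radius bounded by any induced norm), and the upper bound via Cauchy--Schwarz on each $s$-sparse row together with the fact that a row's $\ell_2$ norm is at most $\lambda_{\max}(AA^\top)$ (you phrase this as $(M^2)_{ii}\leq\lambda_{\max}(M)^2$, the paper as $\max_{\|v\|_2=1}e_i^\top M v\leq\lambda_{\max}(M)$, which is the same bound). No gaps.
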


\begin{proof}
Let $v$ be the maximum eigenvalue of $AA^{\top}$. Then, we have
that
\[
AA^{\top}v=\lambda_{\max}(AA^{\top})\cdot v.
\]
Since $\|AA^{\top}\|_{\infty\rightarrow\infty}=\max_{\|v\|_{\infty}=1}\|AA^{\top}v\|_{\infty}$,
we have that $\lambda_{\max}(AA^{\top})\leq\|AA^{\top}\|_{\infty\rightarrow\infty}$.

For the another direction, 
\begin{align*}
\|AA^{\top}\|_{\infty\rightarrow\infty} & =\max_{i}\sum_{j}|AA^{\top}|_{ij}\leq\max_{i}\sqrt{s}\cdot\sqrt{\sum_{j}(AA^{\top})_{ij}^{2}}\\
 & \leq\sqrt{s}\cdot\max_{i}\max_{\|v\|_{2}=1}e_{i}^{\top}AA^{\top}v=\sqrt{s}\cdot\lambda_{\max}(AA^{\top}).
\end{align*}
\end{proof}

\subsection{Simple ODEs}

We prove two helper lemmas (Lemma~\ref{lem:cosh_ODE} and \ref{lem:exp_ODE}) for the later use.
\begin{lemma}\label{lem:cosh_ODE}
Given a continuous function $v(t)$ and positive
scalars $\beta,\gamma$ such that 
\[
0 \leq v(t) \leq \beta + \gamma \int_{0}^{t}(t-s)v(s) \d s.
\]
We have that $v(t)\leq\beta\cosh(\sqrt{\gamma}t)$ for all $t \geq 0$.
\end{lemma}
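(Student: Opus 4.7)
The plan is a standard comparison argument that converts the integral inequality into a differential one and then compares against the extremal solution. Define the majorant
\[
V(t) := \beta + \gamma \int_0^t (t-s) v(s)\, \d s,
\]
so by hypothesis $0 \le v(t) \le V(t)$ for all $t \ge 0$. Since $v$ is continuous, $V$ is twice differentiable with $V(0) = \beta$, $V'(0) = \gamma \int_0^0 v = 0$, and $V''(t) = \gamma\, v(t)$. Using $v(t) \le V(t)$, this immediately gives the pointwise differential inequality $V''(t) \le \gamma\, V(t)$ with initial data $(V(0), V'(0)) = (\beta, 0)$. The natural comparison function is $u(t) := \beta \cosh(\sqrt{\gamma}\, t)$, which satisfies $u'' = \gamma u$ with the same initial data and is strictly positive ($u(t) \ge \beta > 0$).

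To show $V \le u$, I would introduce the Wronskian-type quantity
\[
W(t) := u(t) V'(t) - u'(t) V(t),
\]
which satisfies $W(0) = \beta \cdot 0 - 0 \cdot \beta = 0$. Differentiating and using $u''= \gamma u$ together with $V'' \le \gamma V$ and $u \ge 0$,
\[
W'(t) = u(t) V''(t) - u''(t) V(t) \le \gamma\, u(t) V(t) - \gamma\, u(t) V(t) = 0.
\]
Hence $W(t) \le 0$ for all $t \ge 0$, which rewrites as $(V/u)'(t) = W(t)/u(t)^2 \le 0$. So $V/u$ is non-increasing, giving $V(t)/u(t) \le V(0)/u(0) = 1$, and therefore $v(t) \le V(t) \le \beta \cosh(\sqrt{\gamma}\, t)$, as claimed.

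There is essentially no obstacle beyond identifying the right Wronskian; the signs work out cleanly because $u > 0$ and $V \ge \beta > 0$ (the latter since $v \ge 0$ makes the integral term in $V$ nonnegative). An alternative route would be Picard iteration: iterating the monotone map $T(w)(t) = \beta + \gamma \int_0^t (t-s) w(s)\, \d s$ starting from $v$, one verifies $v \le T^n(v)$ and the iterates converge term-by-term to the power series $\sum_{k \ge 0} \beta (\gamma t^2)^k / (2k)! = \beta \cosh(\sqrt{\gamma} t)$. I would prefer the Wronskian argument above since it avoids any convergence bookkeeping and fits in a few lines.
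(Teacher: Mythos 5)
Your proof is correct, and it takes a genuinely different route from the paper. You convert the integral inequality into the second-order differential inequality $V''\leq\gamma V$ for the majorant $V(t)=\beta+\gamma\int_0^t(t-s)v(s)\,\d s$ (with $V(0)=\beta$, $V'(0)=0$), and then compare with $u(t)=\beta\cosh(\sqrt{\gamma}t)$ via the Wronskian $W=uV'-u'V$, whose monotonicity gives $V/u$ non-increasing; all the steps check out, since $v$ continuous makes $V$ twice differentiable, $u>0$, and $W(0)=0$. The paper instead never differentiates: it lets $\overline{v}$ be the exact solution of the integral equation $\overline{v}=\beta+\gamma\int_0^t(t-s)\overline{v}(s)\,\d s$ (which is $\beta\cosh(\sqrt{\gamma}t)$) and argues by contradiction that $(1+\epsilon)\overline{v}\geq v$ for all $t$: if $T<\infty$ were the supremum of validity, plugging the inequality back into the integral operator yields the strict inequality $v(T)<(1+\epsilon)\overline{v}(T)$, contradicting maximality; letting $\epsilon\to0$ finishes. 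Your Sturm-type comparison is shorter and purely local once the differential inequality is in hand, and it avoids the $\epsilon$-slack bookkeeping; the paper's operator-level argument has the advantage of not requiring any smoothness beyond continuity of the kernel and transfers verbatim to the first-order kernel in Lemma~\ref{lem:exp_ODE} (where your analogue would be a standard first-order Gronwall/integrating-factor argument rather than a Wronskian). Either proof is acceptable for the role the lemma plays in the paper.
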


\begin{proof}
Let $\overline{v}(t)$ be the solution of the integral equation $\overline{v}(t)=\beta+\gamma\int_{0}^{t}(t-s)\overline{v}(s) \d s.$
Note that $\overline{v}$ satisfies the ODE 
\[
\overline{v}''(t)=\gamma\overline{v}(t),\quad\overline{v}'(0)=0,\quad\overline{v}(0)=\beta.
\]
Solving it, we have $\overline{v}(t)=\beta\cosh(\sqrt{\gamma}t)$.
Hence, it suffices to prove that $v(t)\leq\overline{v}(t)$ for all
$t\geq0$.

Fix any $\epsilon>0$. We let $T$ be the supremum such that $(1+\epsilon)\overline{v}(t)\geq v(t)$
for all $0\leq t\leq T$. Suppose $T<+\infty$. Then, we have that
\begin{align*}
v(T) \leq\beta+\gamma\int_{0}^{T}(T-s)v(s) \d s \leq\beta+(1+\epsilon)\gamma\int_{0}^{T}(T-s)\overline{v}(s) \d s <(1+\epsilon)\overline{v}(T).
\end{align*}
By the continuity of $v$ and $\overline{v}$, we show that $T$ is
not the supremum. This is a contradiction. Therefore, $T=+\infty$
for any $\epsilon>0$.
\end{proof}

\begin{lemma}\label{lem:exp_ODE}
Given a continuous function $v(t)$ and positive
scalars $\beta,\gamma$ such that 
\[
0 \leq v(t) \leq \beta + \gamma \int_{0}^{t} v(s) \d s.
\]
We have that $v(t)\leq\beta e^{\gamma t}$ for all $t \geq 0$.
\end{lemma}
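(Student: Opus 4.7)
The plan is to mimic the proof of Lemma~\ref{lem:cosh_ODE} almost verbatim, since the structure is identical: we have a nonnegative continuous scalar function satisfying a linear integral inequality, and the claimed majorant is precisely the solution of the corresponding integral equation. First I would introduce a comparator $\overline{v}(t)$ defined as the solution of $\overline{v}(t) = \beta + \gamma \int_0^t \overline{v}(s)\d s$. Differentiating once gives the linear first-order ODE $\overline{v}'(t) = \gamma \overline{v}(t)$ with $\overline{v}(0) = \beta$, whose unique solution is $\overline{v}(t) = \beta e^{\gamma t}$. So the entire content of the lemma reduces to the comparison claim $v(t) \leq \overline{v}(t)$ for all $t \geq 0$.

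For the comparison, I would reuse the slack-and-contradiction argument of Lemma~\ref{lem:cosh_ODE}. Fix an arbitrary $\epsilon > 0$ and let $T$ be the supremum of times $t$ such that $(1+\epsilon)\overline{v}(s) \geq v(s)$ holds on $[0,t]$. Note $T > 0$ since at $t=0$ we have $v(0) \leq \beta < (1+\epsilon)\beta = (1+\epsilon)\overline{v}(0)$. Suppose for contradiction that $T < +\infty$. Then at $t = T$ the hypothesis gives
\[
v(T) \leq \beta + \gamma \int_0^T v(s)\d s \leq \beta + (1+\epsilon)\gamma \int_0^T \overline{v}(s)\d s = \beta + (1+\epsilon)(\overline{v}(T) - \beta) < (1+\epsilon)\overline{v}(T),
\]
where the equality uses that $\overline{v}$ satisfies its own integral equation. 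By continuity of both $v$ and $\overline{v}$, this strict inequality persists on a small interval $[T, T+\delta]$, contradicting the supremum property of $T$. Hence $T = +\infty$, so $v(t) \leq (1+\epsilon)\overline{v}(t)$ for every $t \geq 0$ and every $\epsilon > 0$. Letting $\epsilon \downarrow 0$ yields $v(t) \leq \overline{v}(t) = \beta e^{\gamma t}$.

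The only subtle point, and therefore the ``main obstacle'' such as it is, is the propagation of the strict inequality past $T$, which is exactly where continuity of $v$ (an assumption of the lemma) is used. No additional machinery such as Picard iteration, differential Gronwall, or a smoothing argument is needed; the proof is a direct transcription of the cosh case with $\int_0^t (t-s)v(s)\d s$ replaced by $\int_0^t v(s)\d s$ and $\cosh(\sqrt{\gamma}t)$ replaced by $e^{\gamma t}$.
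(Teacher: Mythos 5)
Your proposal is correct and follows essentially the same route as the paper, which also proves this lemma by copying the argument of Lemma~\ref{lem:cosh_ODE}: define the comparator $\overline{v}$ solving the integral equation, identify it as $\beta e^{\gamma t}$, and run the $(1+\epsilon)$ slack-plus-supremum contradiction before letting $\epsilon\downarrow 0$. Your derivation of the comparator ODE ($\overline{v}'=\gamma\overline{v}$) is in fact cleaner than the paper's, which contains a small typo writing it as a second-order equation.
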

\begin{proof}
The proof is identical to Lemma~\ref{lem:cosh_ODE}.

Let $\overline{v}(t)$ be the solution of the integral equation $\overline{v}(t)=\beta+\gamma\int_{0}^{t} \overline{v}(s) \d s.$
Note that $\overline{v}$ satisfies the ODE 
\[
\overline{v}''(t)=\gamma\overline{v}(t),\quad\overline{v}'(0)=0,\quad\overline{v}(0)=\beta.
\]
Solving it, we have $\overline{v}(t)=\beta\exp(\gamma t)$.
Hence, it suffices to prove that $v(t)\leq\overline{v}(t)$ for all
$t\geq0$.

Fix any $\epsilon>0$. We let $T$ be the supremum such that $(1+\epsilon)\overline{v}(t)\geq v(t)$
for all $0\leq t\leq T$. Suppose $T<+\infty$. Then, we have that
\begin{align*}
v(T)  \leq\beta+\gamma\int_{0}^{T} v(s) \d s \leq\beta+(1+\epsilon)\gamma\int_{0}^{T} \overline{v}(s) \d s  <(1+\epsilon)\overline{v}(T).
\end{align*}
By the continuity of $v$ and $\overline{v},$we show that $T$ is
not the supremum. This is a contradiction. Therefore, $T=+\infty$
for any $\epsilon>0$.
\end{proof}

\subsection{Cauchy Estimates and Method of Majorants}\label{sec:low_degree_solutions}

In order to prove the solution of the HMC dynamic can be approximated by a low degree polynomial, we first give a general bound the $k^{th}$ derivative of the second order ODE $u''(t) = F(u(t))$, by reducing it to bounding derivatives of a one-variable ODE. The low degree result then follows from: first, we take the Taylor expansion of the original function; second, truncate it at a certain degree; finally we can claim that the low-degree truncation provides a good approximation to the original function.

\begin{theorem}\label{thm:bound_2ODE}
Given vectors $v_{1},v_{0}\in\R^{d}$ and any norm $\|\cdot \|$ on $\R^d$. Let $U\subset\R^{d}$ be a
neighborhood of $v_{0}$ and that $F:U\rightarrow\R^{d}$ is real
analytic near $v_{0}$. Suppose that 
\[
\|D^{(k)}F(v_{0})[\Delta_{1},\Delta_{2},\cdots,\Delta_{k}]\|\leq k! \cdot a \cdot c^k \prod_{j=1}^{k}\|\Delta_{j}\|\text{ for all }k\geq0.
\]
Then, the
ODE
\begin{equation}\label{eq:2ODE}
\frac{\d^{2}}{\d t^{2}}u(t)=F(u(t)),\frac{\d}{\d t}u(0)=v_{1},u(0)=v_{0}
\end{equation}
has a unique real analytic solution around $t=0$. Furthermore, we have
\[
\|u^{(k)}(0)\|\leq\frac{k!\alpha^{k}}{c}\quad\forall k\geq1
\]
with $\alpha=\max(\sqrt{\frac{4}{3}ac},2\| v_1 \| c)$.
\end{theorem}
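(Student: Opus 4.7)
The plan is to use the classical method of majorants, reducing the multivariate problem to a scalar comparison ODE whose Taylor coefficients can be bounded explicitly.

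First (setting up the scalar majorant). The Cauchy-type hypothesis says exactly that the scalar function $G(z)=a/(1-cz)$ satisfies $G^{(k)}(0)=k!\,a\,c^k\geq \|D^{(k)}F(v_0)\|_{\mathrm{op}}$, i.e., $G$ is a (scalar) majorant for $F$ at $v_0$. I would introduce the one-variable comparison ODE
\begin{equation*}
\bar{y}''(t)=G(\bar{y}(t))=\frac{a}{1-c\bar{y}(t)},\qquad \bar{y}(0)=0,\quad \bar{y}'(0)=\|v_1\|,
\end{equation*}
whose solution has nonnegative Taylor coefficients by induction from the ODE itself.

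Second (termwise majorization). Using Faà di Bruno's formula in its set-partition form,
\begin{equation*}
\frac{d^k}{dt^k}F(u(t))\bigg|_{t=0}=\sum_{\pi} D^{|\pi|}F(v_0)\bigl[u^{(|B_1|)}(0),\ldots,u^{(|B_{|\pi|}|)}(0)\bigr],
\end{equation*}
where $\pi$ runs over set partitions of $\{1,\ldots,k\}$ with blocks $B_i$ of positive size; and an identical combinatorial formula (with $G^{(|\pi|)}(0)$ replacing $D^{|\pi|}F(v_0)$) holds for $(G\circ\bar{y})^{(k)}(0)=\bar{y}^{(k+2)}(0)$. Taking norms and inserting the hypothesis $\|D^{|\pi|}F(v_0)\|_{\mathrm{op}}\leq G^{(|\pi|)}(0)$, a straightforward induction on $k\geq 1$ yields $\|u^{(k)}(0)\|\leq \bar{y}^{(k)}(0)$. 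Existence and local uniqueness of $u$ then come from applying Picard-Lindel\"of to the first-order reformulation, and real-analyticity follows because the majorized power series converges on a neighborhood of $0$.

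Third (bounding $\bar{y}^{(k)}(0)$). Multiplying the scalar ODE by $\bar{y}'$ and integrating yields the first integral
\begin{equation*}
(\bar{y}'(t))^2=\|v_1\|^2-\frac{2a}{c}\ln\bigl(1-c\bar{y}(t)\bigr),
\end{equation*}
which expresses $\bar{y}$ implicitly through an explicit quadrature on $[0,1/c)$ and determines the real blow-up time $T_\ast$. Since the Taylor coefficients $b_k:=\bar{y}^{(k)}(0)/k!$ are nonnegative, one has $b_k R^k\leq \bar{y}(R)$ for every $R<T_\ast$, and hence $b_k\leq \bar{y}(R)/R^k$. Choosing $R=1/\alpha$ reduces the target bound $\bar{y}^{(k)}(0)\leq k!\,\alpha^k/c$ to the single analytic inequality $c\,\bar{y}(1/\alpha)\leq 1$, i.e., that $\bar{y}$ has not yet reached its singularity by time $1/\alpha$.

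The main obstacle is establishing this last inequality with the precise constants advertised. The quadrature
\begin{equation*}
T_\ast=\int_0^{1/c}\frac{dy}{\sqrt{\|v_1\|^2-(2a/c)\ln(1-cy)}}
\end{equation*}
has two natural regimes: a \emph{kinetic} regime where $\|v_1\|^2$ dominates the radicand, giving $T_\ast\gtrsim 1/(c\|v_1\|)$; and a \emph{potential} regime where the substitution $cy=1-e^{-w}$ turns the logarithm into a linear term and the integrand into an incomplete Gaussian, giving $T_\ast\gtrsim 1/\sqrt{ac}$ up to an explicit constant. These two regimes are exactly what produce the two arguments of the maximum defining $\alpha=\max(\sqrt{4ac/3},\,2c\|v_1\|)$. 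To bound $T_\ast$ uniformly I would split the integral at the crossover value $y^\star=(1-e^{-c\|v_1\|^2/(2a)})/c$, apply $\sqrt{A+B}\leq\sqrt{A}+\sqrt{B}$ on each piece, and match the resulting complementary error function against the claimed constants; noting that in every application of the theorem in the rest of the paper the much stronger condition $\alpha T\leq 1/2$ is enforced (so there is ample slack between $T_\ast$ and $1/\alpha$ in all the regimes we actually use).
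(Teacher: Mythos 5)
Your first two steps coincide with the paper's own route: Lemma~\ref{lem:bound_D_k_F_u} is exactly the termwise majorization $\|u^{(k)}(0)\|\le\bar y^{(k)}(0)$, where $\bar y$ solves the scalar comparison ODE $\bar y''=a/(1-c\bar y)$, $\bar y(0)=0$, $\bar y'(0)=\|v_1\|$. You diverge at the scalar bound: the paper (Lemma~\ref{lem:solve_single_variable_ODE_better}) compares $\bar y$ coefficientwise with the explicit supersolution $\widetilde\psi(t)=\frac1c\left(1-\sqrt{1-\alpha t}\right)$, which solves $\widetilde\psi''=\frac{\alpha^2}{4c}(1-c\widetilde\psi)^{-3}$ and whose Taylor coefficients are read off in closed form, whereas you want to use nonnegativity of $b_k=\bar y^{(k)}(0)/k!$ to write $b_k\le\bar y(R)R^{-k}$ at $R=1/\alpha$, reducing the whole theorem to the single inequality $c\,\bar y(1/\alpha)\le1$, i.e.\ to $1/\alpha\le T_\ast$, the time at which $\bar y$ reaches $1/c$. (A minor point: $b_kR^k\le\bar y(R)$ needs either a Pringsheim-type argument that the series represents $\bar y$ up to $T_\ast$, or, more simply, nonnegativity of \emph{all} derivatives of $\bar y$ on $[0,T_\ast)$ plus the integral form of the Taylor remainder; this is fixable.)

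The genuine gap is the inequality you defer to the end: $1/\alpha\le T_\ast$ is false for the advertised $\alpha=\max(\sqrt{4ac/3},\,2c\|v_1\|)$ in the regime where the two terms are comparable, so no splitting of the quadrature can produce it. Concretely, take $\|v_1\|=\sqrt{a/(3c)}$, so $\alpha=2\sqrt{ac/3}$ and $1/\alpha=\frac{\sqrt3}{2}(ac)^{-1/2}\approx0.866\,(ac)^{-1/2}$, while your first integral gives $T_\ast=\frac{1}{\sqrt{2ac}}\int_0^1\frac{du}{\sqrt{\frac16-\ln(1-u)}}=\frac{e^{1/6}\sqrt{\pi}\,\mathrm{erfc}(1/\sqrt6)}{\sqrt{2ac}}\approx0.835\,(ac)^{-1/2}<1/\alpha$; hence $\bar y$ blows up before time $1/\alpha$ and $c\,\bar y(1/\alpha)\le1$ cannot hold. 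Worse, your own reduction shows the target bound itself fails for this scalar instance: if $b_k\le\alpha^k/c$ for all $k$, the series would converge on $[0,1/\alpha)\supset[0,T_\ast]$ and keep $\bar y'$ bounded as $t\uparrow T_\ast$, contradicting the first integral $(\bar y')^2=\|v_1\|^2-\frac{2a}{c}\ln(1-c\bar y)$. So the obstruction is not slack in your estimates but the constant $\sqrt{4ac/3}$ itself; indeed the paper's comparison also requires $\widetilde f(0)=\alpha^2/(4c)\ge a$, i.e.\ $\alpha\ge2\sqrt{ac}$ (its displayed inequality $\frac a3\cdot\frac{(k+2)!}{2}\ge k!\,a$ fails at $k=0$). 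With $\alpha=\max(2\sqrt{ac},\,2c\|v_1\|)$ both the paper's argument and your blow-up-time route do close --- for the latter, the bound $e^{\beta}\sqrt{\pi}\,\mathrm{erfc}(\sqrt{\beta})\ge\min\left(\tfrac{1}{\sqrt2},\,\tfrac{1}{2\sqrt\beta}\right)$ with $\beta=c\|v_1\|^2/(2a)$ is exactly what your two-regime split must deliver --- but as a proof of the theorem with the stated constant, your plan cannot be completed.
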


Theorem \ref{thm:bound_2ODE} involves two steps. The first step (Lemma \ref{lem:bound_D_k_F_u}) involves bounding the derivatives of the solution of the multivariate ODE (\ref{eq:2ODE}) by its scalar version. The second step (Lemma \ref{lem:solve_single_variable_ODE_better}) involves bounding the scalar ODE directly.

The first step follows directly from the majorization proof of Cauchy--Kowalevski theorem. See \cite{v03} for an introduction of the method of majorants. This theorem usually stated qualitatively without an explicit bound. For completeness, we include a proof for the first step.

\begin{theorem}[Cauchy--Kowalevski theorem \cite{c42,k75,k83,n94}]\label{thm:cauchy_kowalevski}
Given vectors $v_{k-1},$ $v_{k-2},$ $\cdots,v_{0} \in \R^{d}$. Let $U\subset\R^{kd+1}$
be a neighborhood of $z\defeq(v_{k-1},v_{k-2},\cdots,v_{0},0)$ and
that $F:U\rightarrow\R^{d}$ is a real analytic near $z$. Then, the
ODE
\begin{align*}
\frac{d^{k}}{\d t^{k}}x(t) & =F\left(\frac{\d^{k-1}}{\d t^{k-1}}x(t),\cdots,x(t),t \right),\\
\frac{d^{i}}{\d t^{i}}x(0) & =v_{i},\forall i\in\{k-1,\cdots,1,0\}
\end{align*}
has a unique real analytic solution around $t=0$.
\end{theorem}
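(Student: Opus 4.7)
The plan is to reduce to a first-order system and then apply the classical method of majorants: construct a formal power series solution, and prove its convergence by comparison with an explicitly solvable scalar majorizing ODE. For the reduction, set $y(t) = (x(t), x'(t), \ldots, x^{(k-1)}(t)) \in \R^{kd}$, so the given ODE becomes $y'(t) = G(y(t), t)$ with $y(0) = (v_0, v_1, \ldots, v_{k-1})$, where $G : \R^{kd+1} \to \R^{kd}$ is real analytic on a neighborhood of $(y(0), 0)$ — its components are either linear coordinate projections of $y$ (automatically analytic) or the given $F$. So it suffices to prove existence and uniqueness of a real analytic solution for first-order systems.

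Next, I would build the unique formal power series. Writing $y(t) = \sum_{n \geq 0} (a_n / n!) \, t^n$ with $a_0 = y(0)$ and differentiating $y' = G(y, t)$ repeatedly at $t = 0$ yields the recursion
\[
a_{n+1} = \partial_t^n \big[ G(y(t), t) \big]_{t=0},
\]
which by the iterated chain rule (the Fa\`a di Bruno formula) is a polynomial with \emph{nonnegative} universal combinatorial coefficients in $a_0, \ldots, a_n$ and in the partial derivatives of $G$ at $(a_0, 0)$. The $a_n$ are therefore uniquely determined by the ODE and the initial data, which already disposes of the uniqueness statement, since any real analytic solution must have these Taylor coefficients.

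Finally, I would prove convergence. Using that $G$ is real analytic at $(a_0,0)$, fix constants $M, r > 0$ for which the Cauchy estimates $|\partial^\alpha G_i(a_0,0)| \leq M \cdot \alpha! \cdot r^{-|\alpha|}$ hold on a joint polydisk, and introduce the scalar majorant
\[
\Phi(w,t) = \frac{M \, r^{kd+1}}{(r-w)^{kd}(r-t)},
\]
whose Taylor coefficients at the origin componentwise dominate those of every coordinate of $G$ after the shift centering at $(a_0,0)$. The scalar ODE $w' = \Phi(w,t)$ with $w(0) = 0$ is separable and integrates in closed form to an explicit algebraic function, real analytic on a neighborhood of $0$ with nonnegative Taylor coefficients $b_n = w^{(n)}(0)$. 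By induction on $n$, using that the same Fa\`a di Bruno recursion has only nonnegative combinatorial weights, one obtains $\|a_n\| \lesssim b_n$; hence the series for $y$ converges on the common interval, term-by-term differentiation gives a genuine analytic solution, and uniqueness is already in hand. The main obstacle is the bookkeeping in the majorant comparison — making precise the notion of coefficientwise domination for vector-valued formal power series and verifying that the Fa\`a di Bruno recursion preserves it against $\Phi$; explicitly integrating the scalar majorant and deriving analyticity from convergence is routine once this is set up.
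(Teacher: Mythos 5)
The paper does not prove Theorem~\ref{thm:cauchy_kowalevski}; it is cited as a classical result (hence the references to Cauchy 1842, Kowalevski 1875, etc.), and the paper only proves a \emph{quantitative} analogue for the second-order case in Lemma~\ref{lem:bound_D_k_F_u}, which it explicitly notes ``follows directly from the majorization proof of Cauchy--Kowalevski.'' Your outline is the standard textbook proof by the method of majorants and is essentially the same strategy the paper uses in Lemma~\ref{lem:bound_D_k_F_u}: reduce to a first-order system, build the formal power series, observe the Fa\`a di Bruno recursion has nonnegative combinatorial weights (so uniqueness is automatic and coefficientwise domination propagates), and then compare against a scalar majorizing ODE whose solution is analytic near the origin. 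The steps are correct. Two small quibbles: the scalar majorant you chose, $\Phi(w,t) = M r^{kd+1}/((r-w)^{kd}(r-t))$, yields a solution of the form $w = r - r\bigl(1 - (kd+1)M\log\tfrac{r}{r-t}\bigr)^{1/(kd+1)}$, which is analytic near $t=0$ but not \emph{algebraic} (it involves a logarithm); if you prefer an algebraic closed form you could instead use the ``diagonal'' majorant $Mr/(r - t - kd\cdot w)$, which is a slightly cruder but still valid dominating series. Also, the majorant comparison gives a clean $\|a_n\| \le b_n$ (not just $\lesssim$), since the Fa\`a di Bruno weights are identical on both sides. Neither of these is a gap; your argument is sound.
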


\define{lem:bound_D_k_F_u}{Lemma}{{\rm(Bounding multivariate ODE by scalar ODE)}{\bf.}
Given vectors $v_{1},v_{0}\in\R^{d}$ and any norm $\|\cdot \|$ on $\R^d$. Let $U\subset\R^{d}$ be a
neighborhood of $v_{0}$ and that $F:U\rightarrow\R^{d}$ is a real
analytic near $v_{0}$. Suppose that 
\[
\|D^{(k)}F(v_{0})[\Delta_{1},\Delta_{2},\cdots,\Delta_{k}]\|\leq f^{(k)}(0)\prod_{j=1}^{k}\|\Delta_{j}\|\text{ for all }k\geq0
\]
for some real analytic function $f$ around $0$. Then the ODE (\ref{eq:2ODE}) has a unique real analytic solution around $t=0$. Furthermore, for
any $b \geq \| v_1 \|$, we have
\[
\|u^{(k)}(0)\|\leq\psi^{(k)}(0) \quad \forall k \geq 1
\]
where $\psi$ is the solution of the ODE
\[
\psi''(t)=f(\psi(t)),\psi'(0)= b,\psi(0)=0.
\]
}

\state{lem:bound_D_k_F_u}

For many functions $F$, including the HMC dynamic or complex analytic $F$, we can pick bound the derivatives of $F$ by the function $f(x)=\frac{a}{1-cx}$ for some $a$ and $c$ in Lemma \ref{lem:bound_D_k_F_u}. Therefore, we only need to give a bound on the scalar ODE for this function $f$.

\define{lem:solve_single_variable_ODE_better}{Lemma}{{\rm(Bounding scalar ODE)}
Let $f(x) = \frac{ a }{ 1 - c x }$ with positive $a$ and $c$. Let $\psi(t)$ denote the solution of the ODE
\begin{align*}
\psi''(t) = f( \psi(t) ), \psi'(0) = b, \psi(0) = 0
\end{align*}
with $b\geq0$. Then,
\[
\psi^{(k)}(0)\leq\frac{k!\alpha^{k}}{c}\quad\forall k\geq1
\]
with $\alpha=\max(\sqrt{\frac{4}{3}ac},2bc)$.
}
\state{lem:solve_single_variable_ODE_better}

Finally, we note that Theorem \ref{thm:bound_2ODE} follows from Lemma \ref{lem:bound_D_k_F_u} and Lemma \ref{lem:solve_single_variable_ODE_better} with $f(x)=\frac{a}{1-cx}$.
\section{Deferred Proof for ODE (Section~\ref{sec:ode})}\label{sec:app_ode}

\subsection{Proof of general $k$-th order ODE}\label{sec:app_kth_order_ode}

The goal of this section is to prove Theorem~\ref{thm:kth_order_ode}.
\begin{proof}
We define $\ov{x}(t) \in \R^{k d}$,
\begin{align*}
\ov{x}(t) = ( \ov{x}_1(t) , \ov{x}_2(t), \cdots, \ov{x}_k(t) ) = \left( c_1 \frac{ \d^{k-1} }{ \d t^{k-1} } x(t), c_2 \frac{ \d^{k-2} }{ \d t^{k-2} } x(t) , \cdots, c_k x(t) \right) ,
\end{align*}
where $\ov{x}_1(t) \in \R^d$, $\ov{x}_2(t) \in \R^d$, $\cdots$, $\ov{x}_k(t) \in \R^d$.
We define the norm on $\R^{k d}$ by $\| \overline{x} \| = \sum_{i=1}^k \| \overline{x}_i \|$.

Then we have 
\begin{align*}
\frac{ \d }{ \d t } \ov{x}(t) = \left( c_1 \frac{ \d^k }{ \d t^k } x(t) , c_2 \frac{ \d^{k-1} }{ \d t^{k-1} } x(t), \cdots , c_k \frac{ \d }{ \d t } x(t) \right).
\end{align*}
In order to have $\ov{F} ( \ov{x}(t) , t ) = \frac{ \d }{ \d t } \ov{x}(t) $, we let
$$\ov{F}( \ov{x}(t) , t ) = \left( c_1 F \left( c_1^{-1} \ov{x}_1(t), c_2^{-1} \ov{x}_2(t), \cdots, c_{k}^{-1} \ov{x}_{k}(t), t \right), c_2 c_1^{-1} \ov{x}_1(t) , c_3 c_2^{-1} \ov{x}_2(t), \cdots, c_k c_{k-1}^{-1} \ov{x}_{k-1}(t) \right).$$
Now, we check that indeed
\begin{align*}
\frac{ \d }{ \d t } \ov{x}(t) = & ~ \left( c_1 \frac{ \d^k }{ \d t^k } x(t) , c_2 \frac{ \d^{k-1} }{ \d t^{k-1} } x(t), \cdots , c_k \frac{ \d }{ \d t } x(t) \right) \\
= & ~ \left( c_1 F \left( \frac{ \d^{k-1} }{ \d t^{k-1} } x(t), \frac{ \d^{k-2} }{ \d t^{k-2} } x(t), \cdots , x(t), t \right) , c_2 \frac{ \d^{k-1} }{ \d t^{k-1} } x(t), \cdots , c_k \frac{ \d }{ \d t } x(t) \right) \\
= & ~ \left( c_1 F \left( c_1^{-1} \ov{x}_1(t), c_2^{-1} \ov{x}_2(t), \cdots, c_{k}^{-1} \ov{x}_{k}(t), t \right), c_2 c_1^{-1} \ov{x}_1(t) , c_3 c_2^{-1} \ov{x}_2(t), \cdots, c_k c_{k-1}^{-1} \ov{x}_{k-1}(t) \right) \\
& ~ \ov{F}( \ov{x}(t) , t ).
\end{align*}
We bound the Lipschitz constant of function $\ov{F}$ by
\begin{align*}
\| \ov{F} ( y(t) , t ) - \ov{F} ( z(t) , t ) \| 
\leq & ~ c_1 \sum_{i=1}^k L_i c_i^{-1} \| y_i(t) - z_i(t) \| + \sum_{i=1}^{k-1} c_{i+1} c_i^{-1} \| y_i(t) - z_i(t) \| \\
= & ~ \sum_{i=1}^{k - 1} ( c_1 L_i c_i^{-1} + c_{i+1} c_i^{-1} ) \| y_i(t) - z_i(t) \| + c_1 L_k c_k^{-1} \| y_k(t) - z_k(t) \|.
\end{align*}
We choose $c_1 = 1$, $c_i = \sum_{ j = i }^k L_j^{(i-1)/j} + \frac{1}{\ov{T}^{i-1}}$, $\forall i \in \{2, \cdots, k\}$ where $\ov{T} = 4 \gamma_\varphi T$. Then we can calculate for each $i \in [k-1]$,
\begin{align*}
c_1 L_i c_i^{-1} + c_{i+1} c_i^{-1} 
= & ~ \frac{ L_i }{ \sum_{ j = i }^k L_j^{(i-1)/j} + \frac{1}{\ov{T}^{i-1}} } + \frac{ \sum_{ j = i + 1 }^k L_j^{i/j} + \frac{1}{\ov{T}^{i}} }{ \sum_{ j = i }^k L_j^{(i-1)/j} + \frac{1}{\ov{T}^{i-1}} } \\
= & ~ \frac{ L_i }{ L_i^{(i-1)/i} + \sum_{ j = i+1 }^k L_j^{(i-1)/j} + \frac{1}{\ov{T}^{i-1}} } + \frac{ \sum_{ j = i + 1 }^k L_j^{i/j} + \frac{1}{\ov{T}^{i}} }{ \sum_{ j = i }^k L_j^{(i-1)/j} + \frac{1}{\ov{T}^{i-1}} } \\
\leq & ~ L_i^{1/i} + \sum_{j=i+1}^k L_j^{1/j} + \frac{1}{\ov{T}} \\
= & ~ \sum_{j=i}^k L_j^{1/j} + \frac{1}{\ov{T}}
\end{align*}
For $i = k$, we have
\begin{align*}
c_1 L_k c_{k}^{-1} = \frac{L_k}{ L_k^{(k-1)/k} + \frac{1}{ \ov{T}^{k-1} } } \leq L_k^{1/k}.
\end{align*}
Thus,
\begin{align*}
\| \ov{F} ( y(t) , t ) - \ov{F}( z(t) , t ) \| \leq & ~ \left( \sum_{j=1}^k L_j^{1/j} + \frac{1}{\ov{T}} \right) \cdot \sum_{i=1}^k \| y_i(t) - z_i(t) \| 
=  \left( \sum_{j=1}^{k} L_j^{1/j} + \frac{1}{\ov{T}} \right) \| y(t) - z(t) \|.
\end{align*}
It gives the following Claim:
\begin{claim}
Function $\ov{F}$ has Lipschitz constant $\ov{L} = \sum_{j=1}^k L_j^{1/j} + \frac{1}{\ov{T}}$.
\end{claim}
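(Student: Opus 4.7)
The plan is a purely algebraic calculation that exploits the explicit block structure of $\ov{F}$. Writing $\ov{F}(\ov{y},t)-\ov{F}(\ov{z},t)$ componentwise, the first block is $c_1\bigl[F(c_1^{-1}\ov{y}_1,\dots,c_k^{-1}\ov{y}_k,t)-F(c_1^{-1}\ov{z}_1,\dots,c_k^{-1}\ov{z}_k,t)\bigr]$, which by the hypothesis on $F$ has norm at most $c_1\sum_{i=1}^{k}L_i c_i^{-1}\|\ov{y}_i-\ov{z}_i\|$. The remaining blocks are simply the rescalings $c_{i+1}c_i^{-1}(\ov{y}_i-\ov{z}_i)$ for $i\in[k-1]$. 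Applying the sum-norm $\|\ov{x}\|=\sum_i\|\ov{x}_i\|$ and regrouping by index $i$, the coefficient of $\|\ov{y}_i-\ov{z}_i\|$ becomes $\lambda_i=c_1 L_i c_i^{-1}+c_{i+1}c_i^{-1}$ for $1\le i\le k-1$ and $\lambda_k=c_1 L_k c_k^{-1}$. Thus it suffices to show $\lambda_i\le\ov{L}$ for every $i$.

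With $c_1=1$ and $c_i=\sum_{j=i}^{k}L_j^{(i-1)/j}+\ov{T}^{-(i-1)}$, I would dispatch the two contributions separately. First, retaining only the summand $L_i^{(i-1)/i}$ in the denominator of $c_i$ gives $L_i c_i^{-1}\le L_i/L_i^{(i-1)/i}=L_i^{1/i}$. Second, for the ratio $c_{i+1}/c_i$, the key observation is the factorization $L_j^{i/j}=L_j^{(i-1)/j}\cdot L_j^{1/j}$ together with $\ov{T}^{-i}=\ov{T}^{-(i-1)}\cdot\ov{T}^{-1}$: these identities show that every summand in the numerator $c_{i+1}=\sum_{j=i+1}^k L_j^{i/j}+\ov{T}^{-i}$ appears as a diagonal cross-term of the product $c_i\cdot\bigl(\sum_{j=i+1}^{k}L_j^{1/j}+\ov{T}^{-1}\bigr)$. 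Since the remaining cross-terms are nonnegative, this yields $c_{i+1}c_i^{-1}\le\sum_{j=i+1}^{k}L_j^{1/j}+\ov{T}^{-1}$. Adding the two estimates gives $\lambda_i\le\sum_{j=i}^{k}L_j^{1/j}+\ov{T}^{-1}\le\ov{L}$ for $i<k$, and the case $i=k$ follows from the first estimate alone.

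The only subtle point of the argument is recognizing that the sequence $c_i$ is engineered precisely so that, after one telescoping step, $L_i+c_{i+1}$ is bounded by $c_i\cdot\ov{L}$; everything else is the triangle inequality and discarding nonnegative terms. I expect no obstacle beyond this bookkeeping, so the proof will be short and self-contained, essentially a line-by-line verification of the two inequalities above followed by summation.
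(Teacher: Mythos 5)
Your proposal is correct and follows essentially the same route as the paper: bound the first block of $\ov{F}$ by the Lipschitz hypothesis on $F$, the shift blocks by the ratios $c_{i+1}c_i^{-1}$, regroup by index, and show $\lambda_i = L_i c_i^{-1} + c_{i+1}c_i^{-1} \le \ov{L}$ for $i<k$ and $\lambda_k = L_k c_k^{-1} \le L_k^{1/k}$ using the same choice of weights $c_i$. Your ``diagonal cross-term'' observation, i.e.\ that $L_j^{i/j} = L_j^{(i-1)/j}\cdot L_j^{1/j}$ and $\ov{T}^{-i} = \ov{T}^{-(i-1)}\cdot\ov{T}^{-1}$ make each summand of $c_{i+1}$ appear in the expansion of $c_i\bigl(\sum_{j\ge i+1}L_j^{1/j}+\ov{T}^{-1}\bigr)$, is a clean unpacking of the inequality the paper writes down without comment.
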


Then, we consider the following first order ODE,
\begin{align*}
\frac{ \d }{ \d t } \ov{x}(t) = & ~ \ov{F} ( \ov{x}(t) , t ) ,\\
\ov{x}(0) = & ~ (c_1 v_{k-1} , c_2 v_{k-2} , \cdots, c_k v_0)
\end{align*}
Let $\ov{x}^*(t) \in \R^{k d}$ denote the optimal solution, then
\begin{align*}
\ov{x}^*(t) = \left( c_1 \frac{ \d^{k-1} }{ \d t^{k-1} } x^*(t) , c_2 \frac{ \d^{k-2} }{ \d t^{k-2} } x^*(t) , \cdots , c_k x^*(t) \right).
\end{align*}
Now, we prove that $\frac{ \d }{ \d t } \ov{x}^*(t)$ is approximate by some element in $\mathcal{V}^d$. Let $\ov{q} : \R \rightarrow \R^{k d} $ be defined as follows
\begin{align*}
\ov{q}(t) = & ~ \left( c_1 q^{(k)}(t) , c_2 q^{(k-1)}(t) , \cdots , c_k q^{(1)}(t) \right) .
\end{align*}
Then,
\begin{align*}
& ~ \left\| \ov{q} (t) - \frac{ \d }{ \d t } \ov{x}^*(t) \right\| \\
= & ~ \left\| \left( c_1 q^{(k)}(t) , c_2 q^{(k-1)}(t) , \cdots , c_k q^{(1)}(t) \right)
- \left( c_1 \frac{ \d^k  }{ \d t^k } x^*(t), c_2 \frac{ \d^{k-1} }{ \d t^{k-1} } x^*(t) , \cdots , c_k \frac{ \d }{ \d t } x^*(t) \right) \right\| \\
= & ~ \sum_{i=1}^k c_i \left\|  q^{ (k + 1 - i) }(t) - \frac{ \d^{ k + 1 - i } }{ \d t^{ k + 1 - i } } x^*(t) \right\| \\
\leq & ~ \sum_{i=1}^k c_i \frac{ \epsilon }{ \ov{T}^{ k + 1 - i } }
=  \frac{1}{\ov{T}} \underbrace{ \epsilon \sum_{i=1}^k \frac{ c_i }{ \ov{T}^{ k - i } } }_{ \ov{\epsilon} } .
\end{align*}

By the assumption on $T$, we have that $\gamma_\varphi \ov{L} T \leq 1/2$ and hence theorem~\ref{thm:first_order_ode} finds 
\begin{align*}
\ov{x}^{(N)}(t) = ( \ov{x}_{1}^{(N)}(t) , \ov{x}_{2}^{(N)}(t) , \cdots , \ov{x}_{k}^{(N)}(t) ) \in \R^{k d}
\end{align*}
 such that
\begin{align}\label{eq:bound_ov_x_kth_order}
\| \ov{x}^{(N)}(t) - \ov{x}^*(t) \| \leq 20 \gamma_\varphi \ov{\epsilon}
\end{align}

This implies that
\begin{align} \label{eq:error_x_higher}
\left\| c_i^{-1} \ov{x}^{(N)}(t) - \frac{ \d^{k-i} }{ \d t^{k-i} } x^*(t) \right\| \leq 20 c_i^{-1}\gamma_\varphi \ov{\epsilon}, \forall i \in [k].                                                                                                                        
\end{align}
To bound the last term, we show the following Claim:
\begin{claim}

Let $\ov{\epsilon} = \epsilon \sum_{i=1}^k \frac{ c_i }{ \ov{T}^{k-i} } $. If we choose $c_1 = 1$ and $c_i = \sum_{j=i}^k L_j^{ ( i - 1 ) / j } + \frac{1}{\ov{T}^{i-1}} $, $\forall i \in \{2, \cdots, k\}$, then
\begin{align*}
c_i^{-1} \ov{\epsilon} \leq \frac{ \epsilon }{ \ov{T}^{k-i} } (2k+1).
\end{align*}
\end{claim}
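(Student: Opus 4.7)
The plan is to reduce the inequality to two clean facts about the quantity $c_j \ov{T}^{j-1}$. Writing
\begin{align*}
c_i^{-1}\ov{\epsilon} \;=\; \frac{\epsilon}{c_i\,\ov{T}^{k-1}}\sum_{j=1}^{k} c_j\,\ov{T}^{j-1},
\end{align*}
the target bound $c_i^{-1}\ov{\epsilon}\le \epsilon(2k+1)/\ov{T}^{k-i}$ follows if I can establish (a) a pointwise lower bound $c_i\,\ov{T}^{i-1}\ge 1$, so that $c_i\,\ov{T}^{k-1}\ge \ov{T}^{k-i}$, and (b) an aggregate upper bound $\sum_{j=1}^{k}c_j\,\ov{T}^{j-1}\le 2k+1$.

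For (a), the case $i=1$ is immediate since $c_1=1$. For $i\ge 2$, plug in the definition and factor:
\begin{align*}
c_i\,\ov{T}^{i-1} \;=\; \ov{T}^{i-1}\Bigl(\sum_{\ell=i}^{k}L_\ell^{(i-1)/\ell}\Bigr) + 1 \;=\; 1 + \sum_{\ell=i}^{k}\bigl(L_\ell^{1/\ell}\ov{T}\bigr)^{i-1} \;\ge\; 1.
\end{align*}
For (b), set $u_\ell := L_\ell^{1/\ell}\ov{T}$. The hypothesis $\gamma_\varphi L T\le 1/8$ combined with $\ov{T}=4\gamma_\varphi T$ and $L=\sum_\ell L_\ell^{1/\ell}\ge L_\ell^{1/\ell}$ gives $u_\ell \le L\ov{T}=4\gamma_\varphi L T \le 1/2$. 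Using the identity from (a) together with $c_1\ov{T}^0=1$, and swapping the order of summation,
\begin{align*}
\sum_{j=1}^{k}c_j\,\ov{T}^{j-1} \;=\; k + \sum_{j=2}^{k}\sum_{\ell=j}^{k}u_\ell^{\,j-1} \;=\; k + \sum_{\ell=2}^{k}\sum_{j=2}^{\ell}u_\ell^{\,j-1} \;\le\; k + \sum_{\ell=2}^{k}\frac{u_\ell}{1-u_\ell} \;\le\; k+(k-1) \;=\; 2k-1,
\end{align*}
where the last step uses $u_\ell/(1-u_\ell)\le 1$ since $u_\ell\le 1/2$. Combining (a) and (b) gives $c_i^{-1}\ov{\epsilon}\le (2k-1)\epsilon/\ov{T}^{k-i}\le (2k+1)\epsilon/\ov{T}^{k-i}$, as required.

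The only real subtlety is keeping the normalizations straight: the constant $1/8$ in the hypothesis interacts with the factor $4$ in the definition of $\ov{T}$ exactly so that $u_\ell\le 1/2$, which is what tames the geometric series. Without this, the double sum over $j$ and $\ell$ would grow like $k^2$ instead of $k$, and the constant $2k+1$ in the conclusion would not be achievable.
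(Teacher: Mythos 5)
Your proof is correct and follows essentially the same route as the paper: both arguments rest on lower-bounding $c_i$ by its $\frac{1}{\ov{T}^{i-1}}$ term (your step (a)) and controlling the remaining terms $(L_\ell^{1/\ell}\ov{T})^{j-1}$ through a geometric series made summable by $L\ov{T}=4\gamma_\varphi LT\le 1/2$. The only difference is bookkeeping---you swap the order of summation and sum the geometric series in $\ell$ rather than bounding $\sum_{\ell\ge j}(L_\ell^{1/\ell}\ov{T})^{j-1}\le (L\ov{T})^{j-1}$ as the paper does---which even yields the slightly sharper constant $2k-1$ in place of $2k+1$.
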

\begin{proof}
For each $i \in [k]$,
\begin{align*}
c_i^{-1} \ov{\epsilon} =  c_i^{-1} \epsilon \sum_{j=1}^k \frac{ c_j }{ \ov{T}^{k - j} } 
=  \frac{ \epsilon }{ \ov{T}^{k - i} } \sum_{j=1}^k \frac{ c_j c_i^{-1} }{ \ov{T}^{ i - j } }
\end{align*}
We can lower bound the term $\frac{ c_j c_i^{-1} }{ \ov{T}^{ i - j } }$ as follows,
\begin{align*}
\frac{c_j c_i^{-1} }{ \ov{T}^{i - j} } = & ~ \frac{ \sum_{l=j}^k L_l^{(j-1)/l} + \frac{1}{ \ov{T}^{j-1} } }{ \ov{T}^{i - j} ( \sum_{l = i}^k L_l^{ ( i -1 ) / l  } + \frac{1}{\ov{T}^{i-1}} ) } \\
= & ~ \frac{ \sum_{l=j}^k L_l^{(j-1)/l} + \frac{1}{ \ov{T}^{j-1} } }{   \sum_{l = i}^k \ov{T}^{i - j} L_l^{ ( i -1 ) / l  } + \frac{1}{\ov{T}^{j-1}}  } \\
\leq & ~ \frac{ \sum_{l=j}^k L_l^{(j-1)/l} + \frac{1}{ \ov{T}^{j-1} } }{ \frac{1}{\ov{T}^{j-1}}  } \\
= & ~ 1 + \sum_{l = j}^k ( L_l^{1/l} \ov{T} )^{j-1} 
\end{align*}
Therefore, we have
\begin{align*}
c_i^{-1} \ov{\epsilon} = & ~ \frac{ \epsilon }{ \ov{T}^{k - i} } \sum_{ j = 1 }^k \frac{ c_j c_i^{-1} }{ \ov{T}^{ i - j } } \\
\leq & ~ \frac{ \epsilon }{ \ov{T}^{k - i} } \sum_{ j = 1 }^k \left( 1 + \sum_{l=j}^k ( L_l^{1/l} \ov{T} )^{j - 1} \right) \\
\leq & ~ \frac{ \epsilon }{ \ov{T}^{k - i} } \left( 2 k + \sum_{j=1}^k \left( \sum_{l=1}^{k}  L_l^{1/l} \ov{T} \right)^{j} \right) \\
\leq & ~ \frac{ \epsilon }{ \ov{T}^{k - i} } \left( 2k + \sum_{j=1}^k (1/2)^j \right)
\end{align*}
where we used $\gamma_\varphi L T \leq 1/8$ at the end.
Thus we complete the proof of Claim.
\end{proof}

Now, using the claim to (\ref{eq:error_x_higher}), we have the error is
$$20 (2k + 1) \gamma_\varphi \frac{\epsilon}{\ov{T}^{k-i}} \leq 20 (2k + 1) \gamma_\varphi \frac{\epsilon}{T^{k-i}}.$$

To bound the number of iterations needed in the log term in Theorem
\ref{thm:first_order_ode}, we note that
\begin{align*}
\int_{0}^{T}\|\overline{F}(\overline{x}(0),s)\|ds & =c_{1}\left\Vert \int_{0}^{T}F(c_{1}^{-1}\overline{x}_{1}(0),c_{2}^{-1}\overline{x}_{2}(0),\cdots,c_{k}^{-1}\overline{x}_{k}(0),s)ds\right\Vert +T\cdot\sum_{i=1}^{k-1}c_{i+1}c_{i}^{-1}\|\overline{x}_{i}(t)\|\\
 & =\left\Vert \int_{0}^{T}F(\frac{d^{k-1}}{dt^{k-1}}x(0),\frac{d^{k-2}}{dt^{k-2}}x(0),\cdots,x(0),s)ds\right\Vert +T\cdot\sum_{i=1}^{k-1}c_{i+1}\left\Vert \frac{d^{k-i}}{dt^{k-i}}x(0)\right\Vert .
\end{align*}
Note that
\[
c_{i+1}\leq\sum_{j=1}^{k}L_{j}^{i/j}+\frac{1}{\overline{T}^{i}}\leq L^{i}+\frac{1}{\overline{T}^{i}}\leq\frac{2}{\overline{T}^{i}}
\]
where we used $L\overline{T}\leq\frac{1}{2}$.
Hence, the number of iterations we need is
\begin{align*}
 & ~ O\left(D\log\left( \frac{1}{ \ov{\epsilon} } \left( \left\Vert \int_{0}^{T}F(v_{k-1},v_{k-2},\cdots,v_{0},s)ds\right\Vert +\sum_{i=1}^{k-1}\frac{\left\Vert v_{i}\right\Vert }{\overline{T}^{k-i-1}} \right) \right)\right)\\
= & ~ O\left(D\log\left( \frac{1}{ \epsilon } \left( \overline{T}^{k-1}\cdot\left\Vert \int_{0}^{T}F(v_{k-1},v_{k-2},\cdots,v_{0},s)ds\right\Vert +\sum_{i=1}^{k-1}\overline{T}^{i}\left\Vert v_{i}\right\Vert \right) \right)\right)
\end{align*}
where we used $\ov{\epsilon}\geq\epsilon\cdot c_{k}\geq\frac{\epsilon}{\overline{T}^{k-1}}$.
\end{proof}
\section{Deferred Proof for Cauchy Estimates (Section \ref{sec:low_degree_solutions})}\label{sec:app_logistic}

In this section, we provide the proofs of some core Lemmas/Claims used for proving Theorem~\ref{thm:bound_2ODE}. 







\restate{lem:bound_D_k_F_u}

\begin{proof}
Theorem \ref{thm:cauchy_kowalevski} shows that the solution $u$ uniquely exists and is real analytic around $0$. Therefore, we can take derivatives on both sides of $u''(t) = F(u(t))$ and get
\begin{align*}
u^{(3)} (t) = & ~ D F ( u(t) ) [ u^{(1)} (t) ], \\
u^{(4)} (t) = & ~ D F ( u(t) ) [ u^{(2)} (t) ] + D^2 F( u(t) ) [ u^{(1)} (t) , u^{(1)} (t) ] \\
u^{(5)} (t) = & ~ D F ( u(t) ) [ u^{(3)} (t) ] + 2 D^2 F( u(t) ) [ u^{(2)} (t) , u^{(1)} (t) ] + D^3 F ( u(t) ) [ u^{(1)}(t) , u^{(1)}(t) , u^{(1)}(t) ] \\
\vdots = & ~ \vdots 
\end{align*}

Therefore, we have
\begin{align*}
\| u^{(3)} (0) \| = & ~ \| D F ( u(0) ) [ u^{(1)} (0) ] \| \\
\leq & ~ f^{(1)}(0) \| u^{(1)} (0) \| \\
\| u^{(4)} (0) \| = & ~ \| D F ( u(0) ) [ u^{(2)} (0) ] + D^2 F( u(0) ) [ u^{(1)} (0) , u^{(1)} (0) ] \| \\
\leq & ~ \| D F ( u(0) ) [ u^{(2)} (0) ] \| + \| D^2 F( u(0) ) [ u^{(1)} (0) , u^{(1)} (0) ] \| \\
\leq & ~ f^{(1)} (0) \| u^{(2)} (0) \| + f^{(2)}(0) \| u^{(1)} (0) \|^2 \\
\| u^{(5)} (0) \| = & ~ \| D F ( u(0) ) [ u^{(3)} (0) ] + 2 D^2 F( u(0) ) [ u^{(2)} (0) , u^{(1)} (0) ] \\
& ~ + D^3 F ( u(0) ) [ u^{(1)} (0) , u^{(1)} (0) , u^{(1)} (0) ] \| \\
\leq & ~ \| D F ( u(0) ) [ u^{(3)} (0) ] \| + \| 2 D^2 F( u(0) ) [ u^{(2)} (0) , u^{(1)} (0) ] \| \\
& ~ + \| D^3 F ( u(0) ) [ u^{(1)} (0) , u^{(1)} (0) , u^{(1)} (0) ] \| \\
\leq & ~ f^{(1)} (0) \| u^{(3)} (0) \| + 2 f^{(2)}(0) \| u^{(2)} (0) \| \| u^{(1)} (0) \| + f^{(3)} (0) \| u^{(1)} (0) \|^3 \\
\vdots = & ~ \vdots
\end{align*}

Similarly, Theorem \ref{thm:cauchy_kowalevski} shows that the solution $\psi$ uniquely exists and is real analytic around $0$.
By expanding $\psi''(t) = f( \psi(t) )$ at $t = 0$, we see that
\begin{align*}
\psi^{(3)} (0) = & ~ f^{(1)} (0) \psi^{(1)} (0) , \\
\psi^{(4)} (0) = & ~ f^{(1)} (0) \psi^{(2)} (0) + f^{(2)} (0) ( \psi^{(1)} (0) )^2 , \\
\psi^{(5)} (0) = & ~ f^{(1)} (0) \psi^{(3)} (0) + 2 f^{(2)} \psi^{(2)} (0) \psi^{(1)} (0) + f^{(3)} (0) ( \psi^{(1)} (0) )^3 , \\
\vdots = & ~ \vdots 
\end{align*}
Since $\| u^{(1)} (0) \| \leq b = \psi^{(1)} (0)$ , $\| u^{(2)} (0) \| = \| F( u(0) ) \| \leq f(0) = \psi^{(2)} (0)$.

For $k = 3,4,5,\cdots$, we have
\begin{align*}
\| u^{(3)} (0) \| \leq & ~ f^{(1)} (0) \| u^{(1)} (0) \| = f^{(1)} (0) \psi^{(1)} (0) = \psi^{(3)} (0) \\
\| u^{(4)} (0) \| \leq & ~ \psi^{(4)} (0) \\
\| u^{(5)} (0) \| \leq & ~ \psi^{(5)} (0) \\
\vdots \leq & ~ \vdots 
\end{align*}

Thus, we have that $\| u^{(k)} (0) \| \leq \psi^{(k)} (0)$ for all $k \geq 1$. 

\end{proof}


\restate{lem:solve_single_variable_ODE_better}
\begin{proof}
Let $\widetilde{\psi}(t)=\frac{1}{c}\left(1-\sqrt{1-\alpha t}\right)$
with $\alpha=\max(\sqrt{\frac{4}{3}ac},2bc)$. Note that 
\[
\widetilde{\psi}''(t)=\frac{\alpha^{2}}{4c(1-\alpha t)^{3/2}}=\widetilde{f}(\widetilde{\psi}(t))\quad\text{with}\quad\widetilde{f}(x)=\frac{\alpha^{2}}{4c(1-cx)^{3}}.
\]
Since $\alpha^{2}\geq\frac{4}{3}ac$, we have that
\[
\widetilde{f}^{(k)}(0)=\frac{a}{3}\cdot\frac{(k+2)!}{2}c^{k}\geq k!\cdot a \cdot c^{k}=f^{(k)}(0).
\]
Also, we have that $\widetilde{\psi}(0)=0$ and $\widetilde{\psi}'(0)=\frac{\alpha}{2c}\geq b.$
Hence, Lemma \ref{lem:bound_D_k_F_u} shows that 
\[
\psi^{(k)}(0)\leq\widetilde{\psi}^{(k)}(0)=\frac{\alpha^{k}}{c}\prod_{i=1}^{k}\frac{|2i-3|}{2}\leq\frac{k!\alpha^{k}}{c}
\]
for all $k\geq1$.
\end{proof}

\section{Cauchy Estimates of Some Functions}\label{sec:app_cauchy}

We first state a useful tool,
\begin{lemma}\label{lem:holomorphic_boundary}
Let $f : U \rightarrow \mathbb{C}$ be holomorphic and suppose that $D_R = \{ z ~:~ |z - z_0 | \leq R \} \subset U$. Let $\gamma_R = \{ z ~:~ |z - z_0| = R \}$ denote the boundary of $D_R$. For all $n \geq 0$
\begin{align*}
| f^{(n)} ( z_0 ) | \leq \frac{ n! }{ R^n } \max_{z \in \gamma_R} |f(z)|.
\end{align*}
\end{lemma}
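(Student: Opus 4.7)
The plan is to apply Cauchy's integral formula for the $n$-th derivative and then estimate the resulting contour integral by the standard \emph{ML-inequality} (length times supremum of the integrand).

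First I would invoke Cauchy's integral formula for derivatives: since $f$ is holomorphic on an open neighborhood of the closed disk $D_R$, for every $n \geq 0$ one has
\begin{equation*}
f^{(n)}(z_0) \;=\; \frac{n!}{2\pi i}\oint_{\gamma_R}\frac{f(z)}{(z-z_0)^{n+1}}\,dz.
\end{equation*}
This is the classical formula obtained by differentiating the Cauchy integral representation of $f$ under the integral sign $n$ times; it requires only that $\gamma_R$ be contained in the open set $U$ on which $f$ is holomorphic, which is guaranteed by the hypothesis $D_R \subset U$.

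Next I would take absolute values and apply the standard estimate for contour integrals. On $\gamma_R$ we have $|z-z_0|=R$, so $|(z-z_0)^{n+1}|=R^{n+1}$, and the length of $\gamma_R$ is $2\pi R$. Hence
\begin{equation*}
|f^{(n)}(z_0)| \;\leq\; \frac{n!}{2\pi}\cdot\frac{\max_{z\in\gamma_R}|f(z)|}{R^{n+1}}\cdot 2\pi R \;=\; \frac{n!}{R^n}\,\max_{z\in\gamma_R}|f(z)|,
\end{equation*}
which is exactly the desired bound.

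There is no real obstacle here: the argument is entirely standard once Cauchy's integral formula for derivatives is in hand. The only subtlety worth flagging is the need for the closed disk $D_R$ to sit inside the domain of holomorphy $U$, so that the contour $\gamma_R$ and its interior lie in the region where $f$ is holomorphic and the Cauchy formula applies. This is precisely the assumption $D_R\subset U$.
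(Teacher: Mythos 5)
Your proof is correct: this is the classical Cauchy estimate, and the argument via Cauchy's integral formula for the $n$-th derivative combined with the ML-bound on the circle $\gamma_R$ (where $|z-z_0|=R$ and the contour has length $2\pi R$) is exactly the standard derivation, including the correct attention to the hypothesis $D_R\subset U$. The paper itself states this lemma as a known tool without proof, so there is nothing to compare beyond noting that your write-up supplies the standard textbook justification.
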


\subsection{Logistic loss function}

\begin{lemma}[Property of Logistic loss function]\label{lem:property_of_logisitc_function}
Let $\phi(t) = \log (1+ e^{-t})$, then we know that $\phi'(t)$  has Cauchy estimate $M=1$ with radius $r=1$.
\end{lemma}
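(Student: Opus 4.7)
The plan is to use the standard Cauchy integral estimate (Lemma~\ref{lem:holomorphic_boundary}) applied to the holomorphic extension of $\phi'$ to the complex plane. Compute $\phi'(t) = -1/(1+e^t)$, and observe that this extends to the meromorphic function $g(z) = -1/(1+e^z)$ on $\mathbb{C}$ whose only singularities are simple poles at $z = i\pi(2k+1)$, $k\in\Z$. For any $x\in\R$, the closed disk $D = \{z : |z-x|\leq 1\}$ is contained in the strip $|\Im z|\leq 1 < \pi/2$, so $D$ contains no poles and $g$ is holomorphic on a neighborhood of $D$.

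Next, I would bound $|g(z)|$ on $D$. Writing $z = x + a + bi$ with $a^2 + b^2 \leq 1$, a direct expansion gives
\begin{equation*}
|1 + e^z|^2 = 1 + 2 e^{x+a}\cos b + e^{2(x+a)} = 1 + e^{x+a}\bigl(2\cos b + e^{x+a}\bigr).
\end{equation*}
Since $|b|\leq 1 < \pi/2$ we have $\cos b > 0$, and $e^{x+a} > 0$, so the second summand is nonnegative. Therefore $|1+e^z|\geq 1$ on $D$, which gives $|g(z)| = |\phi'(z)| \leq 1$ on $D$.

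Finally, applying Lemma~\ref{lem:holomorphic_boundary} with $R = 1$, $z_0 = x$, and the uniform bound $\max_{|z-x|=1}|\phi'(z)|\leq 1$ yields
\begin{equation*}
|(\phi')^{(l)}(x)| \leq \frac{l!}{1^{l}} \cdot 1 = l! \quad \text{for all } l \geq 0, \; x \in \R,
\end{equation*}
which is exactly the Cauchy estimate with $M = 1$ and $r = 1$. There is no real obstacle here; the only care needed is the elementary inequality $|1+e^z|\geq 1$ on the disk, which reduces to checking positivity of $\cos(\Im z)$ on an interval strictly contained in $(-\pi/2,\pi/2)$. The proof is independent of the center $x\in\R$, so the estimate holds uniformly as required.
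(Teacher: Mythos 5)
Your proof is correct and follows essentially the same route as the paper: bound $|1+e^{z}|\geq 1$ on the unit disk around a real center using $\cos(\Im z)>0$ for $|\Im z|\leq 1<\pi/2$, then invoke Lemma~\ref{lem:holomorphic_boundary} with $R=1$ to get $|(\phi')^{(l)}(x)|\leq l!$. Your explicit remark about the poles at $i\pi(2k+1)$ is a minor extra justification of holomorphy that the paper leaves implicit, but the argument is otherwise the same.
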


\begin{proof}
Given the definition of $\phi(t)$, it is easy to see that
\begin{align*}
\phi'(t) = \frac{-1}{ 1 + e^{t} }.
\end{align*}

Let $f(z) = \frac{1}{1+e^{z}}$. Let $z = a + b \i$. We have
\begin{align*}
|f(z)|
=  \left| \frac{1}{1 + e^{a} \cos b + \i e^{a} \sin b } \right|
=  \frac{1}{ \sqrt{ (1 + e^{a} \cos b)^2 + ( e^{a} \sin b )^2 } } 
=  \frac{1}{ \sqrt{ 1 + 2 e^{a} \cos b + e^{2a} } }  
\end{align*}
Let $z_0 = a_0 + \i b_0$. We choose $R = 1$, then $(a-a_0)^2 + (b-b_0)^2 =1$. Since we only care about real numbers, we have $b_0 = 0$. Then we know that $b \in [-1,1]$, which means $\cos b \in [0.54, 1]$. Thus, we have,
\begin{align*}
\sqrt{ 1 + 2 e^{a} \cos b + e^{2a} } \geq \sqrt{ 1 + e^{a} + e^{2a} } \geq 1.
\end{align*}
Thus, $\left| \frac{1}{1 + e^{z}} \right| \leq 1$.
Therefore, using Lemma~\ref{lem:holomorphic_boundary}, for all real $z_0$,
\begin{align*}
|f^{(n)} (z_0)| \leq n!
\end{align*}

\end{proof}

\subsection{Pseudo-Huber loss function}

Huber function \cite{h64} has been extensively studied in a large number of algorithmic questions, e.g. regression \cite{cw15soda,cw15focs}, low-rank approximation \cite{swz19}, clustering \cite{bfl16}, sparse recovery \cite{nsw19}. Formally speaking, the Huber loss function can be defined as 
\begin{align*}
f(x) = 
\begin{cases}
\frac{x^2}{2 \delta}, & \text{~if~} |x| \leq \delta; \\
|x| - \delta /2 , & \text{~otherwise~}.
\end{cases}
\end{align*}
where $\delta > 0$ is a parameter. For many applications, the Pseudo-Huber loss function can be used as a smooth alternative for the Huber loss function.

\begin{lemma}[Property of Pseudo-Huber function]
Fix any $\delta>0$. Let $\phi(x) = \sqrt{ x^2 + \delta^2 } - \delta$, then we know that $\phi'(x)$ has Cauchy estimate $M=1$ with radius $r= \delta/2$.
\end{lemma}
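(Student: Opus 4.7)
The plan is to verify the Cauchy estimate via Lemma~\ref{lem:holomorphic_boundary} applied to the complex extension $f(z) = z/\sqrt{z^2+\delta^2}$ of $\phi'$. Since $(\phi')^{(l)}(x_0) = f^{(l)}(x_0)$ for real $x_0$, it suffices to show that $f$ is holomorphic on the closed disk $\{z:|z-x_0|\leq \delta/2\}$ for every real $x_0$, and that $|f(z)|\leq 1$ on that disk; the lemma then yields $|(\phi')^{(l)}(x_0)| \leq l!\,(\delta/2)^{-l}$, which is the desired Cauchy estimate with $M=1$ and $r=\delta/2$.

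First I would verify holomorphy. With the principal branch of $\sqrt{\cdot}$, the singular set of $\sqrt{z^2+\delta^2}$ is $\{z : z^2+\delta^2 \in (-\infty,0]\} = \{it : |t|\geq \delta\}$. For any real $x_0$, the distance from $x_0$ to this set is $\sqrt{x_0^2+\delta^2}\geq \delta > \delta/2$, so $f$ is holomorphic on the closed disk of radius $\delta/2$ around $x_0$, and the hypothesis of Lemma~\ref{lem:holomorphic_boundary} is met.

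Next I would bound $|f(z)|$ uniformly on the disk. Writing $z = x + iy$, the identity $z^2+\delta^2 = (x^2-y^2+\delta^2) + 2ixy$ gives, after expanding and collecting terms, $|z^2+\delta^2|^2 = (x^2+y^2+\delta^2)^2 - 4y^2\delta^2$. Hence $|f(z)|^2 = (x^2+y^2)/\sqrt{(x^2+y^2+\delta^2)^2 - 4y^2\delta^2}$, and $|f(z)|^2\leq 1$ reduces, after squaring and subtracting $(x^2+y^2)^2$, to $2x^2+\delta^2 \geq 2y^2$. Since $z$ lies in a disk of radius $\delta/2$ around a real point, $|y|\leq \delta/2$, so $2y^2 \leq \delta^2/2 \leq \delta^2 \leq 2x^2+\delta^2$, establishing the bound.

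The main obstacle is precisely this uniform bound off the real axis: the crude estimate $|z^2+\delta^2|\geq (\sqrt{x_0^2+\delta^2}-\delta/2)^2$ coupled with $|z|\leq |x_0|+\delta/2$ yields $|f(z)|\leq (|x_0|+\delta/2)/(\sqrt{x_0^2+\delta^2}-\delta/2)$, which exceeds $1$ when $|x_0|$ is large and therefore cannot close the argument. Exploiting the cancellation in $(x^2+y^2+\delta^2)^2 - 4y^2\delta^2$ is what lets the constants $M=1$ and $r=\delta/2$ survive uniformly in $x_0$; with that in hand, Lemma~\ref{lem:holomorphic_boundary} gives the Cauchy estimate directly.
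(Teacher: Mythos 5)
Your proposal is correct and follows essentially the same route as the paper: extend $\phi'$ to $f(z)=z/\sqrt{z^2+\delta^2}$, show $|f(z)|\le 1$ on every disk of radius $\delta/2$ centered at a real point, and invoke Lemma~\ref{lem:holomorphic_boundary} to get $|f^{(n)}(x_0)|\le n!\,(2/\delta)^n$. The only differences are cosmetic: you add an explicit check of holomorphy for the chosen branch and bound $|f|$ via the exact identity $|z^2+\delta^2|^2=(x^2+y^2+\delta^2)^2-4y^2\delta^2$, reducing the claim to $2x^2+\delta^2\ge 2y^2$, whereas the paper bounds the numerator and denominator termwise using $|y|\le\delta/2$; both yield the same conclusion.
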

\begin{proof}
Given the definition of function $\phi(x)$, it is easy to see that
\begin{align*}
\phi'(x) = \frac{ x }{ \sqrt{ x^2 + \delta^2 } }
\end{align*}
Let $f(z) = \frac{ z }{ \sqrt{ z^2 + \delta^2 } }$. Let $z = a + b \i$. We have
\begin{align*}
|f(z)| 
=  ~ \left| \frac{ (a+b \i) }{ \sqrt{ (a + b \i)^2 + \delta^2 } } \right| 
=  ~ \left| \frac{ (a+b \i) }{ \sqrt{ a^2 - b^2 + \delta^2 + 2 ab \i } } \right| 
=  ~ \frac{  |(a+ b \i)| }{ | \sqrt{a^2 - b^2 + \delta^2 + 2 a b \i} | }
\end{align*}
For the numerator, we have $| (a + b \i)| = \sqrt{a^2 + b^2}$. For the denominator, we have
\begin{align*}
| \sqrt{a^2 - b^2 + \delta^2 + 2 a b \i} | = \left( (a^2 - b^2 + \delta^2)^2 + 4 (ab)^2 \right)^{1/4}
\end{align*}

Let $z_0 = a_0 + \i b_0$. We choose $R = \frac{\delta}{2}$, then $(a-a_0)^2 + (b-b_0)^2 =(\delta/2)^2$. Since we are only real $z_0$, we have $b_0 = 0$. Then we know that $b \in [-\frac{\delta}{2},\frac{\delta}{2}]$. Thus
\begin{align*}
|f(z)| = & ~ \frac{ \sqrt{a^2 + b^2} }{ \left( (a^2 - b^2 + \delta^2)^2 + 4 (ab)^4 \right)^{1/4} } 
\leq  ~ \frac{ \sqrt{a^2 + \frac{\delta^2}{4}} }{ \left( (a^2 - \frac{\delta^2}{4} + \delta^2)^2 + 0 \right)^{1/4} } 
=  ~ \frac{ \sqrt{a^2 + \frac{\delta^2}{4}} }{ \sqrt{a^2 + \frac{3 \delta^2}{4}} } 
\leq  ~ 1.
\end{align*}
Therefore, using Lemma~\ref{lem:holomorphic_boundary}, for all real $z_0$,
\begin{align*}
|f^{(n)} (z_0)| \leq n! \cdot (2/\delta)^{n}.
\end{align*}
\end{proof}


\end{document}